\newenvironment{changemargin}[2]{%
\begin{list}{}{%
\setlength{\topsep}{0pt}%
\setlength{\leftmargin}{#1}%
\setlength{\rightmargin}{#2}%
\setlength{\listparindent}{\parindent}%
\setlength{\itemindent}{\parindent}%
\setlength{\parsep}{\parskip}%
}%
\item[]}{\end{list}}
\newcommand*\bigcdot{\mathpalette\bigcdot@{1.0}}
\newcommand*\bigcdot@[2]{\mathbin{\vcenter{\hbox{\scalebox{#2}{$\m@th#1*$}}}}}
\numberwithin{equation}{section}
\theoremstyle{plain}
\newtheorem{theorem}{Theorem}[section]
\newtheorem{lemma}{Lemma}[section]
\newtheorem{corollary}{Corollary}[section]
\newtheorem{algorithm}{Algorithm}[section]
\newtheorem{remark}{Remark}[section]
\newtheorem{assumption}{Assumption}[section]
\newcommand*{\centernot}{%
  \mathpalette\@centernot
}
\def\@centernot#1#2{%
  \mathrel{%
    \rlap{%
      \settowidth\dimen@{$\m@th#1{#2}$}%
      \kern.5\dimen@
      \settowidth\dimen@{$\m@th#1=$}%
      \kern-.5\dimen@
      $\m@th#1\not$%
    }%
    {#2}%
  }%
}
\newcommand{\independent}{\perp\mkern-9.5mu\perp}
\def\T{{ \mathrm{\scriptscriptstyle T} }}
\DeclareMathOperator{\Tr}{Tr}
\DeclareMathOperator{\Prob}{\mathbb{P}}
\DeclareMathOperator{\Bprob}{pr}
\DeclareMathOperator{\diag}{diag}
\DeclareMathOperator{\qvalue}{\mathit{q}}
\DeclareMathOperator{\sraw}{sR_{AW}}
\DeclareMathOperator{\interest}{int}
\DeclareMathOperator{\nuisance}{nuis}
\DeclareMathOperator{\Observed}{\mathcal{S}}
\DeclareMathOperator{\Missing}{\mathcal{M}}
\DeclareMathOperator{\Missingsub}{\mathcal{M}_1}
\DeclareMathOperator{\normal}{\mathcal{N}}
\DeclareMathOperator{\miss}{miss}
\DeclareMathOperator{\GMM}{(GMM)}
\DeclareMathOperator*{\argmin}{arg\,min}
\DeclareMathOperator*{\argmax}{arg\,max}
\newcommand{\indep}{\rotatebox[origin=c]{90}{$\models$}}
\DeclareMathOperator{\E}{\mathbb{E}}
\DeclareMathOperator{\V}{\mathbb{V}}
\DeclareMathOperator{\Data}{\mathcal{D}}
\DeclareMathOperator{\tdist}{\stackrel{d}{\to}}
\DeclareMathOperator{\C}{Cov}
\DeclareMathOperator{\Corr}{Corr}
\DeclareMathOperator{\vecM}{vec}
\DeclareMathOperator{\asim}{\stackrel{\cdot}{\sim}}
\DeclareMathOperator{\edist}{\stackrel{\text{\scriptsize \textit{d}}}{=}}
\DeclareMathOperator{\im}{Im}
\DeclarePairedDelimiter\abs{\lvert}{\rvert}%
\DeclarePairedDelimiter\norm{\lVert}{\rVert}%
\newcommand{\matbm}[1]{\bm{#1}}
\newcommand{\vecbm}[1]{\bm{#1}}
\let\oldabs\abs
\def\abs{\@ifstar{\oldabs}{\oldabs*}}
\let\oldnorm\norm
\def\norm{\@ifstar{\oldnorm}{\oldnorm*}}
\newcommand{\LtabMetab}{
 \begin{minipage}{\linewidth}
   \centering
   \footnotesize
   \vspace{0.25cm}
   \captionof{table}[Parameters used to simulate latent confound effects.]{The $\pi_k$ and $\tau_k$ values used to simulate $\bm{\ell}_{1},\ldots,\bm{\ell}_{p}$ ($k=1,\ldots,10$).}\label{Table:LMetab}
	\begin{tabular}{c | c | c | c | c | c | c | c | c | c | c}
 Factor number ($k$) & $1$ & $2$ & $3$ & $4$ & $5$ & $6$ & $7$ & $8$ & $9$ & $10$\\\hline
  $\pi_k$ & 0 & 0 & 0.76 & 0.56 & 0.48 & 0.32 & 0.28 & 0.20 & 0.20 & 0.20\\\hline
  $\tau_k$ & 0.78 & 0.57 & 0.5 & 0.5 & 0.5 & 0.5 & 0.5 & 0.5 & 0.5 & 0.5
  \end{tabular}
  \vspace{0.4cm}
  \end{minipage}
}
\newcommand{\Fractab}{
 \begin{minipage}{\linewidth}
   \centering
   \vspace{0.25cm}
   \captionof{table}{The expected number of metabolites in each missing data bin for data simulated according to \eqref{equation:MetabMiss:Simulation} with $\Psi(x)=\exp(x)/\left\lbrace 1 + \exp(x) \right\rbrace$, where $f$ is the frequency of missing data.}\label{Table:SimFrac}
	\begin{tabular}{c | c | c | c }
$f = 0$ & $0 < f \leq 0.05$ & $0.05 < f \leq 0.5$ & $0.5 < f$\\\hline
251.6 & 233.6 & 298.3 & 416.4
  \end{tabular}
  \vspace{0.4cm}
  \end{minipage}
}
\newcommand{\COPSACMiss}{
 \begin{minipage}{\linewidth}
   \centering
   \vspace{0.25cm}
   \captionof{table}{The number of metabolites in each missing data bin in the blood plasma metabolomic data, where $f$ is as defined in Table \ref{Table:SimFrac}.}\label{Table:COPSACMiss}
	\begin{tabular}{c | c | c | c }
$f = 0$ & $0 < f \leq 0.05$ & $0.05 < f \leq 0.5$ & $0.5 < f$\\\hline
400 & 256 & 300 & 182
  \end{tabular}
  \vspace{0.4cm}
  \end{minipage}
}
\begin{document}

\begin{frontmatter}
\title{Estimation and inference in metabolomics with non-random missing data and latent factors}
\runtitle{MetabMiss}

\begin{aug}
\author{\fnms{Chris} \snm{McKennan$^{1,}$}\thanksref{t2}\ead[label=e1]{chm195@pitt.edu}},
\author{\fnms{Carole} \snm{Ober$^2$}
\ead[label=e3]{c-ober@bsd.uchicago.edu}}
\and
\author{\fnms{Dan} \snm{Nicolae$^2$}\ead[label=e2]{nicolae@galton.uchicago.edu}}

\thankstext{t2}{Supported in part by NIH grant R01 HL129735.}
\runauthor{C. McKennan et al.}

\affiliation{University of Pittsburgh$^1$ and University of Chicago$^2$}

\address{Department of Statistics\\
University of Pittsburgh\\
Pittsburgh, PA 15260\\
\printead{e1}}

\address{Department of Human Genetics\\
University of Chicago\\
Chicago, IL 60637\\
\printead{e3}}

\address{Department of Statistics\\
University of Chicago\\
Chicago, IL 60637\\
\printead{e2}}
\end{aug}

\begin{abstract}
High throughput metabolomics data are fraught with both non-ignorable missing observations and unobserved factors that influence a metabolite's measured concentration, and it is well known that ignoring either of these complications can compromise estimators. However, current methods to analyze these data can only account for the missing data or unobserved factors, but not both. We therefore developed MetabMiss, a statistically rigorous method to account for both non-random missing data and latent factors in high throughput metabolomics data. Our methodology does not require the practitioner specify a probability model for the missing data, and makes investigating the relationship between the metabolome and tens, or even hundreds, of phenotypes computationally tractable. We demonstrate the fidelity of MetabMiss's estimates using both simulated and real metabolomics data.
\end{abstract}

\begin{keyword}
\kwd{Metabolomics}
\kwd{Latent factors}
\kwd{Batch variables}
\kwd{Generalized method of moments}
\kwd{Missing not at random (MNAR)}
\end{keyword}

\end{frontmatter}


\section{Introduction}
\label{section:Introduction}
Metabolomics is the study of tissue- or body fluid-specific small molecule metabolites, and has the potential to lead to new insights into the origin of human disease \citep{MetabDisease,PiperineAsthma3,PiperineAsthma2} and drug metabolism \citep{DrugMetab1,DrugMetab2}. Recent advances in both liquid chromatography (LC) and untargeted mass spectrometry (MS) have made it possible to identify and quantify hundreds to thousands of metabolites per sample \citep{MSMetab}. Similar to high throughput gene expression, proteomic and DNA methylation data, these data contain systematic technical and biological variation whose sources are not observed by the practitioner \citep{Metab_RRmix}. However, what makes untargeted LC-MS metabolomic data particularly challenging is the vast amount of missing data, nearly all of which is missing not at random due to an unknown, metabolite-specific missingness mechanism in which more abundant and ionizable analytes are more likely to be observed \citep{LODMetabolomics}. For instance, 22\% of all $\text{\#metabolites} \times \text{\#samples} = 1138 \times 533$ observations were missing from our data example in Section \ref{section:DataAnlysis}, where Figure \ref{Figure:TechRep} shows that only analytes with the strongest signals were likely to be quantified in all technical replicates.\par 
\indent There are several methods that attempt to account for either latent covariates \citep{RUVMetab1,RUVMetab2,Metab_RRmix} or non-random missing data \citep{MNAR_SILAC,SILAC_MNAR2,MNARProteomics} when trying to infer the relationship between the metabolome and a variable of interest. However, these are not amenable to real, untargeted metabolomic data because the former set of methods cannot accommodate non-ignorable missing data, and the latter set ignores latent covariates that can bias estimators. Surprisingly, to the best of our knowledge, \cite{ImputeWithC} is the only work to even acknowledge the challenge of accounting for both. However, they propose imputing missing data with an arbitrary limit of detection, and require prior knowledge of a set of control metabolites whose concentrations are unrelated to the variable of interest to estimate latent factors.\par 
\indent Given the paucity of methods to analyze untargeted metabolomic data, we developed MetabMiss, the first method to account for both latent covariates and non-random missing data that does not rely on control metabolites, internal standards or erringly imputing missing data. Our method also offers the following advantages: 
\begin{enumerate}[label=(\alph*)]
    \item We do not require knowledge of the underlying probability distribution of the missing data.
    \item We modularize our method so that the metabolite-dependent missingness mechanisms are estimated only once per dataset, which makes computation on the order of a phenome wide association study tractable.\label{item:advantage:speed}
\end{enumerate}
And while we assume the functional form of the missing data mechanism is known, we provide a method to access the veracity of said function for each metabolite.\par 
\indent As far as we are aware, our estimators for the missingness mechanism are also the first estimators, among those designed for mass spectrometry data, that satisfy Property \ref{item:advantage:speed} and do not depend on the covariate(s) of interest. This makes analyzing modern metabolomic data tractable, as practitioners are often interested in understanding the relationship between metabolite concentration and many different covariates of interest due to the wealth of information available for each for sample. We discuss this further in Section \ref{subsection:RoadMap:IVGMM}.\par 
\indent The remainder of the manuscript is organized as follows: we give a mathematical description of the data in Section \ref{section:SetUp} and give an overview of our method in Section \ref{section:RoadMap}. We describe how we estimate the metabolite-dependent missingness mechanisms, estimate the coefficients of interest in a linear model and recover latent factors in Sections \ref{section:HBGMM}, \ref{section:IPW} and \ref{section:CMNAR}. We conclude by illustrating how our method performs in simulated and real metabolomic data in Sections \ref{section:simulations} and \ref{section:DataAnlysis}. An R package that implements MetabMiss can be installed from \href{https://github.com/chrismckennan/MetabMiss}{github.com/chrismckennan/MetabMiss}.

\begin{figure}[!ht]
    \centering
    \includegraphics[scale=0.4]{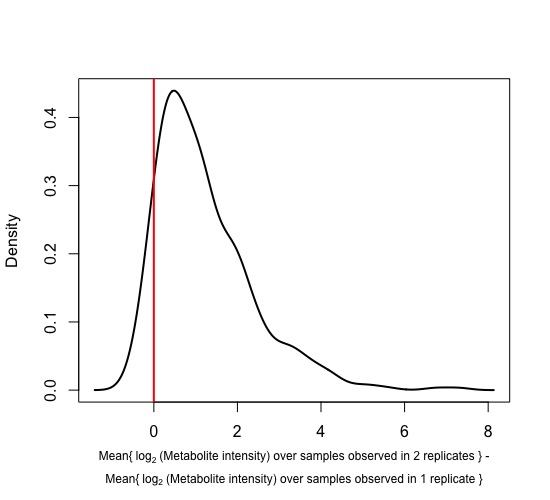}\caption{A density plot of the differences in mean observed metabolite $\log_2$-intensity between samples with observations in both technical replicates and those from samples with only 1 observation among the two replicates. Replicate pairs were obtained by running 20 biological samples from our motivating data example twice on the same mass spectrometer.}
    \label{Figure:TechRep}
\end{figure}

\section{Notation and problem set-up}
\label{section:SetUp}
\subsection{Notation}
\label{subsection:Notation}
Let $n > 0$ be an integer. We let $\vecbm{1}_n, \vecbm{0}_n \in \mathbb{R}^n$ be the vectors of all ones and zeros,  $I_n \in \mathbb{R}^{n \times n}$ to be the identity matrix, $[n]=\left\lbrace 1,\ldots,n \right\rbrace$ and $\vecbm{x}_i$ to be the $i$th element of $\vecbm{x} \in \mathbb{R}^n$. For $\matbm{M} \in \mathbb{R}^{n \times m}$, we let $\matbm{M}_{ij}$ be the $(i,j)$th element of $\bm{M}$, and define $P_{M}$ and $P_{M}^{\perp}$ to be the orthogonal projection matrices onto $\im\left( \bm{M}\right) = \left\lbrace \bm{M}\bm{v} : \bm{v} \in \mathbb{R}^m \right\rbrace$ and the null space of $\bm{M}^{\T}$. We also let $\matbm{X} \asim \left( \matbm{\mu}, \matbm{G}\right)$ if $\E\left( \matbm{X}\right) = \matbm{\mu}$ and $\V\left( \matbm{X}\right)=\matbm{G}$, $\matbm{X} \sim MN_{m \times n}\left(\bm{\mu},\bm{V},\bm{U}\right)$ if $\matbm{X}\in\mathbb{R}^{m\times n}$ and $\vecM\left(\matbm{X}\right)\sim N_{mn}\left(\vecM\left(\matbm{\mu}\right),\bm{U}\otimes \bm{V}\right)$ and lastly define $F_{\nu}(x)$ to be the cumulative distribution function of the t-distribution with $\nu > 0$ degrees of freedom.

\subsection{A description of and model for the data}
\label{subsection:ModelDescription}
Let $y_{gi}$ be the observed or unobserved log-transformed metabolite integrated intensity for metabolite $g \in [p]$ in sample $i \in [n]$, where the mass spectrometer intensity, integrated over time and mass-to-charge ratio, is proportional to a metabolite's concentration \citep{MassSpec_Proteomics}. Let $\bm{X}=\left( \bm{x}_1 \cdots \bm{x}_n \right)^{\T} \in \mathbb{R}^{n \times d}$ and $\bm{C} = \left( \bm{c}_1 \cdots \bm{c}_n \right)^{\T} \in \mathbb{R}^{n \times K}$ be observed and unobserved covariates (i.e. latent factors), where the former may contain biological factors like disease status, as well as technical factors like observed batch variables. We assume
\begin{align}
\label{equation:ModelForData}
y_{gi} = \vecbm{x}_i^{\T} \vecbm{\beta}_g + \vecbm{c}_i^{\T} \vecbm{\ell}_g + e_{gi}, \quad e_{gi} \asim \left( 0, \sigma_g^2 \right), \quad g \in [p]; i \in [n],
\end{align}
where our goal is to estimate $\vecbm{\beta}_g$. We assume the residuals $\left\lbrace e_{gi} \right\rbrace_{g\in[p], i\in[n]}$ are independent and $\left\lbrace e_{gi} \right\rbrace_{i\in[n]}$ are identically distributed for each $g \in [p]$. The unobserved covariates $\bm{c}_i$ can confound the relationship between $\bm{x}_i$ and $y_{gi}$, and also induce dependencies between different metabolites. We assume that $\bm{c}_1,\ldots,\bm{c}_n$ are independent and are independent of $\left\lbrace e_{gi} \right\rbrace_{g\in[p], i\in[n]}$. We do not assume an explicit probability model for $y_{gi}$ in order to avoid assuming a distribution for the missing data.\par
\indent We next define the indicator variable $r_{gi} = I\left( \text{$y_{gi}$ is observed}\right)$ and assume that for some known increasing cumulative distribution function $\Psi(x)$,
\begin{align}
\label{equation:MissingMech}
    \Prob\left( r_{gi} = 1 \mid y_{gi} \right) = \Psi\left\lbrace \alpha_g \left(y_{gi} - \delta_g\right) \right\rbrace, \quad g \in [p]; i \in [n].
\end{align}
The metabolite-dependent scale and location parameters $\alpha_g > 0$ and $\delta_g \in \mathbb{R}$ are such that $\alpha_g \searrow 0$ implies the mechanism is missing completely at random (MCAR) and $\alpha_g \nearrow \infty$ implies $y_{gi}$ is left-censored at $\delta_g$. Model \eqref{equation:MissingMech} is consistent with Figure \ref{Figure:TechRep} and previous observations that metabolites with smaller intensities are less likely to be observed \citep{LODMetabolomics,MassSpec_Proteomics}, and is a classic model for missing data in untargeted mass spectrometry experiments \citep{MNAR_SILAC,MNARProteomics,SILAC_MNAR2}. Typical values for $\Psi$ include the logistic function, an exponential probabilistic model \citep{MNAR_SILAC,SILAC_MNAR2} and the probit function \citep{MNARProteomics}. However, we observed in simulations that $\Psi(x)=F_{4}(x)$ is a more robust option, since its heavy tails make it less sensitive to outliers. This has previously been used as a robust alternative to logistic and probit functions \citep{WhyTdist1}.\par
\indent Implicit in \eqref{equation:MissingMech} is the assumption that conditional on $\matbm{Y} = \left(y_{gi}\right)_{g\in[p], i\in[n]} \in \mathbb{R}^{p \times n}$, $\left\lbrace r_{gi}\right\rbrace_{g\in[p], i\in[n]}$ are independent. This is likely only approximately true, since other intense analytes can preclude MS/MS fragmentation in data dependent mass spectrometry experiments. However, properly tuning the dynamic exclusion time can substantially mitigate any dependence \citep{DynamicExclusion}.

\section{A road map of our methodology}
\label{section:RoadMap}
Here we provide a compendious description of our method to estimate the metabolite-dependent missingness mechanisms, recover $\bm{C}$ and estimate $\bm{\beta}_1,\ldots,\bm{\beta}_p$. We delineate these steps in more detail in Sections \ref{section:HBGMM}, \ref{section:IPW} and \ref{section:CMNAR}.

\subsection{IV-GMM to estimate $\alpha_{g}$ and $\delta_{g}$}
\label{subsection:RoadMap:IVGMM}
We first estimate $\alpha_g$ and $\delta_g$ for metabolites $g$ with missing data. Unlike existing methods designed for untargeted mass spectrometry data whose estimates for the missingness mechanism depend on the user-specified $\matbm{X}$ \citep{MNAR_SILAC,MNARProteomics,SILAC_MNAR2}, our estimates only depend on $\bm{Y}$, and therefore only need to be estimated once per data matrix $\bm{Y}$. This makes analyzing modern datasets tractable, as practitioners typically collect a wealth of covariate information for each sample $i$, and will therefore need to infer the relationship between $\matbm{Y}$ and $\matbm{X}$ for many different covariate matrices $\matbm{X}$.\par
\indent Since the probability model for $\matbm{Y}$ is unknown, we build upon \cite{GMM_MNAR} and use instrumental variable generalized method of moments (IV-GMM) to estimate $\alpha_g$ and $\delta_g$. Fix a $g \in [p]$ and let $\vecbm{A}_1,\ldots,\vecbm{A}_n \in \mathbb{R}^s$ be random vectors such that $r_{gi} \indep \vecbm{A}_i \mid y_{gi}$ for all $i \in [n]$. Then \cite{GMM_MNAR} considers the following observable $s+1$ dimensional function for metabolite $g$:
\begin{align}
\label{equation:MomentFunction}
    &\bm{h}\left\lbrace \left(y_{gi},r_{gi},\bm{A}_i\right), \left(\alpha,\delta\right)\right\rbrace = \left(1 \, \vecbm{A}_i^{\T}\right)^{\T}\left( 1-r_{gi}\left[\Psi\left\lbrace \alpha\left(y_{gi}-\delta\right) \right\rbrace\right]^{-1}\right), \quad i\in [n]\\
    &\E\left[ \bm{h}\left\lbrace \left(y_{gi},r_{gi},\vecbm{A}_i\right), \left(\alpha_g,\delta_g\right)\right\rbrace \right]=\bm{0}_{s+1}, \quad i\in [n],\nonumber
\end{align}
where the second line follows from \eqref{equation:MissingMech}. The resulting generalized method of moments estimator for $\alpha_g$ and $\delta_g$ using \eqref{equation:MomentFunction} also requires the distribution of $y_{gi}$ to depend on $\vecbm{A}_i$ \citep{GMM_MNAR}. That is, $\vecbm{A}_i$ must be an instrumental variable for $r_{gi}$.\par
\indent Unfortunately, as is the case with nearly all biological data, $\bm{Y}$ is typically only weakly dependent on the observed covariates $\bm{X}$, meaning viable instruments $\vecbm{A}_i$ are almost never observed in metabolomic data. Instead, we leverage the fact that the majority of the variation in high throughput metabolomic data, like nearly all high throughput biological data, can be be explained by a relatively small number of potentially latent factors \citep{SVA,RUVMetab2,bcv}. For example, applying principal components analysis to the metabolites with complete data from our motivating data example revealed that only 10 components were necessary to explain nearly 50\% of the variation in those fully observed data. This fact forms the basis of our method to estimate each metabolite's missingness mechanism, which we briefly describe in Algorithm \ref{algorithm:MissMechOverview}.

\begin{algorithm}
\label{algorithm:MissMechOverview}
Fix $\epsilon_{\miss} \in [0,1)$, $K_{\miss} \geq 2$ and let $\Observed = \lbrace g \in [p] : n^{-1}\sum\limits_{i=1}^n \left(1-r_{gi}\right) \leq \epsilon_{\miss} \rbrace$ and $\Missing = \lbrace g \in [p] : n^{-1}\sum\limits_{i=1}^n \left(1-r_{gi}\right) > \epsilon_{\miss} \rbrace$ be the metabolites with (nearly) complete and missing data, respectively.
\begin{enumerate}[label=(\arabic*)]
    \item Use $\bm{Y}_{\Observed} = \left(y_{gi}\right)_{g\in\Observed,i\in[n]}$ to generate $K_{\miss}$ $n$-dimensional factors that explain most of the variation $\bm{Y}_{\Observed}$.\label{item:MissMechOverview:Cmiss}
    \item For each $g \in \Missing$, select two out of the $K_{\miss}$ factors estimated in Step \ref{item:MissMechOverview:Cmiss} to act as instruments for the missingness indicators $r_{g1},\ldots,r_{gn}$.\label{item:MissMechOverview:InstSelection}
    \item For each $g \in \Missing$, use IV-GMM with \eqref{equation:MomentFunction} and instruments obtained from Step \ref{item:MissMechOverview:InstSelection} to compute estimates for $\alpha_g$ and $\delta_g$, $\hat{\alpha}_g^{\GMM}$ and $\hat{\delta}_g^{\GMM}$.\label{item:MissMechOverview:GMM}
    \item Identify metabolites $g \in \Missing$ whose missing data patterns may not follow Model \eqref{equation:MissingMech} using $\hat{\alpha}_g^{\GMM},\hat{\delta}_g^{\GMM}$ and the Sargan-Hansen $J$ statistic.\label{item:MissMechOverview:Jtest}
    \item Obtain estimates for $\alpha_g,\delta_g$ and the weights $w_{gi}=r_{gi}/\Psi\left\lbrace \alpha_g\left(y_{gi}-\delta_g\right)\right\rbrace$ for $g\in\Missing$ and $i\in[n]$ using \textbf{H}ierarchical \textbf{B}ayesian \textbf{G}eneralized \textbf{M}ethod of \textbf{M}oments (HB-GMM).\label{item:MissMechOverview:HBGMM}
\end{enumerate}
\end{algorithm}

\indent We set $\epsilon_{\miss}=0.05$ in practice because simulations show that trace amounts of missing data have negligible effects on the bias in our downstream estimators for $\bm{\beta}_g$. We explain how we choose $K_{\miss}$ in Section \ref{subsection:SelectingInstruments}. Algorithm \ref{algorithm:MissMechOverview} tends to perform well because the estimated factors from Step \ref{item:MissMechOverview:Cmiss} will be approximately the columns of $\left(\bm{X}\, \bm{C}\right)$ from Model \eqref{equation:ModelForData} that explain much of the variance in $\bm{Y}$. And since they are not estimated using metabolites with missing data, they will be approximately independent of $r_{g1},\ldots,r_{gn}$ conditional on $y_{g1},\ldots,y_{gn}$, and therefore auspicious instruments for $r_{g1},\ldots,r_{gn}$  for $g \in \Missing$. We detail and provide concise, intuitive explanations of Steps \ref{item:MissMechOverview:Cmiss} - \ref{item:MissMechOverview:HBGMM} in Sections \ref{subsection:EstimatingInstruments} - \ref{subsection:HBGMM} below. We also justify Algorithm \ref{algorithm:MissMechOverview} in Sections \ref{section:Supp:SelectionProof} - \ref{section:Supp:EstGMMProof} of the Supplement, where we study the asymptotic properties of the estimators from in each step when $\epsilon_{\miss}=0$ and $n,p \to \infty$.

\subsection{Recovering latent factors and estimating coefficients of interest}
\label{subsection:RoadMap:CandBeta}
Our strategy is to use the estimates from Algorithm \ref{algorithm:MissMechOverview} to first obtain $\hat{\bm{C}}$, an estimate for $\bm{C}$, and then plug-in $\hat{\bm{C}}$ for $\bm{C}$ when estimating $\bm{\beta}_1,\ldots,\bm{\beta}_p$. An important feature of our method is once we run Algorithm \ref{algorithm:MissMechOverview}, computing $\hat{\bm{C}}$ and our software's default estimates for $\bm{\beta}_1,\ldots,\bm{\beta}_p$ is fast, which makes analyzing the relationship between $\bm{Y}$ and tens, or even hundreds of different $\bm{X}$'s computationally tractable.

\section{Estimating the missingness mechanisms using Algorithm \ref{algorithm:MissMechOverview}}
\label{section:HBGMM}
\subsection{Estimating the instruments in Step \ref{item:MissMechOverview:Cmiss}}
\label{subsection:EstimatingInstruments}
We define the factors from Step \ref{item:MissMechOverview:Cmiss} of Algorithm \ref{algorithm:MissMechOverview} to be $\hat{\bm{C}}_{\miss} \in \mathbb{R}^{n \times K_{\miss}}$, where $\hat{\bm{C}}_{\miss}$ is the maximum likelihood estimator for $\tilde{\bm{C}} \in \mathbb{R}^{n \times K_{\miss}}$ in the model
\begin{align}
    \label{equation:YObservedLike}
    \bm{Y}_{\Observed} \sim MN_{p_s \times n}\left( \tilde{\bm{\mu}}\bm{1}_n^{\T} + \tilde{\bm{L}}\tilde{\bm{C}}, \tilde{\sigma}^2 I_{p_s}, I_n \right),
\end{align}
where $p_s=\abs{\Observed}$, $\tilde{\bm{C}}^{\T}\bm{1}_n = \bm{0}_{K_{\miss}}$, $n^{-1}\tilde{\bm{C}}^{\T} \tilde{\bm{C}} = I_{K_{\miss}}$, $\tilde{\bm{L}}^{\T}\tilde{\bm{L}}$ is diagonal with non-increasing elements and any missing data are MCAR. If $\epsilon_{\miss} = 0$, $\hat{\bm{C}}_{\miss}$ is a scalar multiple of the first $K_{\miss}$ right singular vectors of $\bm{Y}_{\Observed}P_{1_n}^{\perp}$. When $\epsilon_{\miss} > 0$, the columns of $\hat{\bm{C}}_{\miss}$ are still ordered by decreasing average effect on the log-intensities of metabolites with nearly complete data. Further, by McDiarmid’s Inequality, $\Prob\left(g \in \Observed\right) \leq e^{-2\eta^2 n}$ if $n^{-1}\sum\limits_{i=1}^n\E\left( 1-r_{gi} \right) \geq \epsilon_{\miss}+\eta$ for $\eta > 0$, meaning it suffices to assume $\hat{\bm{C}}_{\miss} \indep r_{gi} \mid y_{gi}$ if $g \in \Missing$ for sufficiently large $n$. That is, for $\bm{h}$ defined in \eqref{equation:MomentFunction} and $\hat{\bm{c}}_i$ the $i$th row of $\hat{\bm{C}}_{\miss}$, we assume $\E\left[ \bm{h}\left\lbrace \left(y_{gi},r_{gi},\hat{\bm{c}}_i\right), \left(\alpha_g,\delta_g\right) \right\rbrace \right] = \bm{0}_{\left(K_{\miss}+1\right)}$ for $g \in \Missing$ and $i \in [n]$.\par 
\indent The columns of $\hat{\bm{C}}_{\miss}$ are the factors that explain the most variation in $\bm{Y}_{\Observed}$. While we expect most of them to derive from $\bm{C}$, some may be related to $\bm{X}$ if $\left\lbrace \bm{\beta}_g \right\rbrace_{g\in\Observed}$ are large enough. Note that $\hat{\bm{C}}_{\miss}$ is invariant of the choice of $\bm{X}$.

\subsection{Instrument selection in Step \ref{item:MissMechOverview:InstSelection}}
\label{subsection:SelectingInstruments}
It is critical that $y_{gi}$ be dependent on the instruments chosen in Step \ref{item:MissMechOverview:InstSelection} of Algorithm \ref{algorithm:MissMechOverview}. Otherwise, the moment condition in \eqref{equation:MomentFunction} will not identify the parameters $\alpha_g$ and $\delta_g$. We therefore use Algorithm \ref{algorithm:SelectU} to only select the instruments $\hat{\bm{U}}_g \in \mathbb{R}^{n \times 2}$ that influence metabolite $g$'s intensity.

\begin{algorithm}
\label{algorithm:SelectU}
Let $\hat{\bm{C}}_{\miss} = \left( \hat{\bm{C}}_1 \cdots \hat{\bm{C}}_{K_{\miss}} \right)$ and $\bm{y}_g = \left(y_{gi}\right)_{i\in[n]}$.
\begin{enumerate}[label=(\arabic*)]
    \item For each $g \in \Missing$ and $k \in \left[K_{\miss}\right]$, use ordinary least squares (OLS) to regress $\bm{y}_g$ onto $\left(\bm{1}_n\, \hat{\bm{C}}_k\right)\in\mathbb{R}^{n \times 2}$, where missing data are treated as MCAR. Let $p_{g,k}$ be the OLS \textit{P} value for the null hypothesis that $\hat{\bm{C}}_k$ is independent of $\bm{y}_g$.\label{item:InstrumentSelection:Pvalue}
    \item For each $k \in \left[K_{\miss}\right]$, use $\left\lbrace p_{g,k} \right\rbrace_{g \in \Missing}$ to determine the corresponding q-values $\left\lbrace q_{g,k} \right\rbrace_{g \in \Missing}$.\label{item:InstrumentSelection:Qvalue}
    \item For each $g \in \Missing$, let $q_{g,g_1} \leq \cdots \leq q_{g,g_{K_{\miss}}}$ be the $K_{\miss}$ ordered q-values. Define $\hat{\bm{U}}_g = \left(\hat{\bm{u}}_{g1} \cdots \hat{\bm{u}}_{gn}\right)^{\T} =  \left(\hat{\bm{C}}_{g_1} \, \hat{\bm{C}}_{g_2} \right)$.\label{item:InstrumentSelection:Indices}
\end{enumerate}
\end{algorithm}

We justify the regression in Step \ref{item:InstrumentSelection:Pvalue} using Theorem \ref{theorem:MetabMiss:OLS_MNAR} in Section \ref{section:Supp:SelectionProof}, which states that under technical assumptions on the distributions of $\bm{Y}_{\Observed}$ and $\bm{y}_g$ for $g \in \Missing$, $p_{g,k}$ is asymptotically uniform under the null hypothesis that $\hat{\bm{C}}_k$ is independent of $\bm{y}_g$. We also use Algorithm \ref{algorithm:SelectU} to choose $K_{\miss}$. If $f\left(k\right)$ is the fraction of metabolites $g \in \Missing$ such that $q_{g,g_2} \leq 0.05$ assuming $K_{\miss} = k$, we set $K_{\miss} = \min \left\lbrace k \in \left\lbrace 2,\ldots,K_{PA} \right\rbrace : f\left(k\right) \geq 0.9 \right\rbrace$, where $K_{PA}$ is parallel analysis' \citep{BujaFA} estimate for $K$ under Model \eqref{equation:YObservedLike} with $\epsilon_{\miss} = 0$. The estimate $K_{\miss}$ is typically much smaller than $K_{PA}$ in practice. For example, $K_{\miss}=10$ and $K_{PA} = 20$ in our motivating data example. We show that our results are robust to the choice of $K_{\miss}$ in Section \ref{section:simulations}.\par
\indent Evidently, this selection step implies $\hat{\bm{u}}_{gi}$ is not strictly independent of $r_{gi}$ conditional on $y_{gi}$. However, we show in Section \ref{section:Supp:SelectionProof} of the Supplement that this dependence is asymptotically negligible under weak assumptions. We therefore assume that the indices $g_1,g_2$ are known and $\hat{\bm{u}}_{gi} \independent r_{gi} \mid y_{gi}$ for the remainder of Section \ref{section:HBGMM}.


\subsection{IV-GMM in Step \ref{item:MissMechOverview:GMM}}
\label{subsection:IVGMM}
Fix a $g \in \Missing$ and define
\begin{align}
\label{equation:SampleMoments}
    \bm{h}_{gi}\left(\tilde{\alpha},\tilde{\delta}\right) = \bm{h}\left\lbrace \left(y_{gi},r_{gi},\hat{\bm{u}}_{gi}\right),\left(\tilde{\alpha},\tilde{\delta}\right) \right\rbrace \in \mathbb{R}^3, \quad \bar{\bm{h}}_g\left(\tilde{\alpha},\tilde{\delta}\right) = n^{-1}\sum\limits_{i=1}^n \bm{h}_{gi}\left(\tilde{\alpha},\tilde{\delta}\right).
\end{align}
We let $\hat{\alpha}_g^{\GMM}$ and $\hat{\delta}_g^{\GMM}$ be the two-step generalized method of moments estimators, defined as
\begin{align}
\label{equation:TwoStepGMM}
    \left\lbrace \hat{\alpha}_g^{\GMM}, \hat{\delta}_g^{\GMM} \right\rbrace = \argmin_{\tilde{\alpha}>0, \tilde{\delta} \in \mathbb{R}}\left\lbrace \bar{\bm{h}}_g\left(\tilde{\alpha},\tilde{\delta}\right)^{\T}\bm{W}_g \bar{\bm{h}}_g\left(\tilde{\alpha},\tilde{\delta}\right)\right\rbrace,
\end{align}
where for $\left\lbrace \hat{\alpha}_g^{(1)}, \hat{\delta}_g^{(1)} \right\rbrace = \mathop{\argmin}\limits_{\tilde{\alpha}>0, \tilde{\delta} \in \mathbb{R}}\left\lbrace \bar{\bm{h}}_g\left(\tilde{\alpha},\tilde{\delta}\right)^{\T} \bar{\bm{h}}_g\left(\tilde{\alpha},\tilde{\delta}\right)\right\rbrace$, the weight matrix $\bm{W}_g$ is
\begin{align}
\label{equation:TwoStepW}
    \bm{W}_g = \bm{W}_g\left\lbrace \hat{\alpha}_g^{(1)}, \hat{\delta}_g^{(1)} \right\rbrace = \left[ n^{-1}\sum\limits_{i=1}^n  \bm{h}_{gi}\left\lbrace\hat{\alpha}^{(1)},\hat{\delta}^{(1)}\right\rbrace \bm{h}_{gi}\left\lbrace\hat{\alpha}^{(1)},\hat{\delta}^{(1)}\right\rbrace^{\T} \right]^{-1}.
\end{align}
The properties of this estimator when $\hat{\bm{U}}_g$ is observed and not estimated and the triplets $\left\lbrace \left(r_{gi},y_{gi},\hat{\bm{u}}_{gi} \right) \right\rbrace_{i \in [n]}$ are independent are well understood \citep{Hansen_2step,GMM_MNAR}. We extend these results in Theorem \ref{theorem:MetabMiss:GMMAsy} in the Supplement to account for the uncertainty in $\hat{\bm{U}}_g$ and prove that under similar regularity conditions as those considered in \cite{GMM_MNAR}, $\abs{\hat{\alpha}_g^{\GMM}-\alpha_g},\abs{\hat{\delta}_g^{\GMM}-\delta_g}= O_P\left(n^{-1/2}\right)$ and for $\bm{\Gamma}_g\left(\tilde{\alpha},\tilde{\delta}\right) = \nabla_{\tilde{\alpha},\tilde{\delta}}\bar{\bm{h}}_g\left(\tilde{\alpha},\tilde{\delta}\right) \in \mathbb{R}^{3 \times 2}$,
\begin{subequations}
\label{equation:AsymptoticDistn}
\begin{align}
    &n^{1/2}\hat{\bm{V}}_g^{-1/2}\left[ \left\lbrace \hat{\alpha}_g^{\GMM},\hat{\delta}_g^{\GMM} \right\rbrace - \left(\alpha_g,\delta_g\right) \right] \tdist N_2\left( \bm{0}, I_2\right)\\
    & \hat{\bm{V}}_g = \left[ \bm{\Gamma}_g\left\lbrace \hat{\alpha}_g^{\GMM},\hat{\delta}_g^{\GMM} \right\rbrace^{\T}\bm{W}_g \bm{\Gamma}_g\left\lbrace \hat{\alpha}_g^{\GMM},\hat{\delta}_g^{\GMM} \right\rbrace \right]^{-1}
\end{align}
\end{subequations}
as $n,p \to \infty$. This result is analogous to Theorem 2 in \cite{GMM_MNAR}, and we use this asymptotic distribution in Section \ref{subsection:HBGMM} to refine our estimates for $\alpha_g$ and $\delta_g$.

\subsection{The Sargan-Hansen $J$ statistic in Step \ref{item:MissMechOverview:Jtest}}
\label{subsection:Jtest}
The accuracy of downstream estimates for $\bm{\beta}_g$ is contingent on the missing data model being approximately correct. Therefore, we leverage the fact that we use three moment conditions to estimate two parameters and use the Sargan-Hansen $J$ statistic, which is routinely used to test moment restrictions in generalized method of moment estimators \citep{Hansen_2step,Baum_ModelSpecification,IdentificationCriterion_book}, to flag metabolites whose missingness mechanisms may not follow Model \eqref{equation:MissingMech}.\par
\indent A consequence of \eqref{equation:AsymptoticDistn} is that under the null hypothesis $H_{0,g}$ that Model \eqref{equation:MissingMech} is correct for metabolite $g \in \Missing$ and the assumptions necessary to prove \eqref{equation:AsymptoticDistn} hold, the statistic $J_g= n\bar{\bm{h}}_g\left\lbrace \hat{\alpha}_g^{\GMM},\hat{\delta}_g^{\GMM} \right\rbrace^{\T}\bm{W}_g \bar{\bm{h}}_g\left\lbrace \hat{\alpha}_g^{\GMM},\hat{\delta}_g^{\GMM} \right\rbrace$ is asymptotically $\chi^2_1$ as $n,p \to \infty$, which is analogous to Lemma 4.2 in \cite{Hansen_2step}. One could then use $J_g$ to test $H_{0,g}$. However, it has been repeatedly observed that using said asymptotic distribution to do inference with $J_g$ is anti-conservative in data with moderate, and even large sample sizes \citep{AntiCons_Jtest2,AntiCons_Jtest1,EmpLikeBoot}. To circumvent this, we follow \cite{EmpLikeBoot} and develop an empirical likelihood-derived bootstrap null distribution for $J_g$, and subsequently estimate $lfdr_g = \Prob\left( H_{0,g} \mid J_g \right)$ using \cite{qvalueSoftware}. We then flag any metabolites with an $lfdr_g$ smaller than a user-specified value, which defaults to 0.8 in our software. Section \ref{section:Supp:Jstat} in the Supplement describes the details of the bootstrap procedure.


\subsection{HB-GMM in Step \ref{item:MissMechOverview:HBGMM}}
\label{subsection:HBGMM}
So far we have estimated each metabolite-specific missingness mechanism independently for each metabolite $g \in \Missing$. While the mechanisms are almost certainly not identical, one might expect them to be relatively similar, and that one should be able to design a better estimator by pooling information across metabolites. Further, constructing an informative prior on the missingness mechanisms allows one to better explore the objective function in \eqref{equation:TwoStepGMM}, which could be multimodal \citep{MNAR_Surface}. We therefore developed Hierarchical Bayesian Generalized Method of Moments (HB-GMM), a Bayesian method to estimate $\alpha_g$, $\delta_g$ and the weights $w_{gi} = r_{gi}/\Psi\left\lbrace \alpha_g\left(y_{gi}-\delta_g\right) \right\rbrace$ for each $g \in \Missing$ and $i \in [n]$. The weights play an important role in estimating $\bm{C}$ in Section \ref{section:CMNAR}.\par
\indent Our method extends Bayesian generalized method of moments \cite{BayesGMM_Orig,BayesGMM,BGMM_Theory} by both incorporating estimated instruments and estimating an informative prior from the data. Define $\Data= \left\lbrace \left(y_{gi},r_{gi},\hat{\bm{u}}_{gi}\right) \right\rbrace_{g\in\Missing,i\in[n]}$. By Bayes' rule and assuming $\left\lbrace \left(\alpha_g,\delta_g\right) \right\rbrace_{g \in \Missing}$ are independent and drawn from some prior distribution,
\begin{align}
\label{equation:BayesRule}
\begin{aligned}
    \Bprob\left[ \left\lbrace \left(\alpha_g,\delta_g\right) \right\rbrace_{g \in \Missing} \mid \Data \right] \propto \Bprob \left[\Data \mid \left\lbrace \left(\alpha_g,\delta_g\right) \right\rbrace_{g \in \Missing}\right] \prod\limits_{g \in \Missing}\Bprob\left(\alpha_g,\delta_g\right).
\end{aligned}
\end{align}
However, the likelihood $\Bprob\left[\Data \mid \left\lbrace \left(\alpha_g,\delta_g\right) \right\rbrace_{g \in \Missing}\right]$ is unknown because the distribution of $y_{gi}$ is unknown. Nevertheless, we do know that under Model \eqref{equation:MissingMech} and assuming $\hat{\bm{u}}_{gi} \independent r_{gi} \mid y_{gi}$ for all $g \in \Missing$ and $i \in [n]$, $\bar{\bm{h}}_g\left( \alpha_g,\delta_g\right)$ and $\bar{\bm{h}}_s\left( \alpha_s,\delta_s\right)$ are uncorrelated for $g \neq s \in \Missing$. Further, since $\bar{\bm{h}}_g\left(\alpha_g,\delta_g\right)$ is an average of $n$ approximately independent random variables, $\bar{\bm{h}}_g\left(\alpha_g,\delta_g\right)$ is asymptotically normal with mean zero and variance given by \eqref{equation:hbarAsy} under the same assumptions used to prove \eqref{equation:AsymptoticDistn}.
\begin{align}
\label{equation:hbarAsy}
\begin{aligned}
    &n^{1/2}\left\lbrace \hat{\bm{\Sigma}}_g\left( \alpha_g,\delta_g\right) \right\rbrace^{-1/2}\bar{\bm{h}}_g\left(\alpha_g,\delta_g\right) \tdist N_3\left(\bm{0}_3,I_3 \right) \text{ as $n,p \to \infty$}, \quad g \in \Missing\\
    &\hat{\bm{\Sigma}}_g\left( \alpha_g,\delta_g\right) = n^{-1}\sum\limits_{i=1}^n \left\lbrace \bm{h}_{gi}\left( \alpha_g,\delta_g\right) - \bar{\bm{h}}_g\left(\alpha_g,\delta_g\right) \right\rbrace \left\lbrace \bm{h}_{gi}\left( \alpha_g,\delta_g\right) - \bar{\bm{h}}_g\left(\alpha_g,\delta_g\right) \right\rbrace^{\T}
\end{aligned}
\end{align}
These facts help justify replacing the likelihood in \eqref{equation:BayesRule} with the pseudo-likelihood
\begin{align*}
    q\left[ \Data \mid \left\lbrace \left(\alpha_g,\delta_g\right) \right\rbrace_{g \in \Missing} \right] = \prod\limits_{g \in \Missing} \normal\left\lbrace \bar{\bm{h}}_g\left(\alpha_g,\delta_g\right)\mid \bm{0}_3, n^{-1}\hat{\bm{\Sigma}}_g\left( \alpha_g,\delta_g\right) \right\rbrace,
\end{align*}
where $\normal\left( \cdot \mid \bm{a}, \bm{b}\right)$ is the likelihood of a normal distribution with mean $\bm{a}$ and variance $\bm{b}$. The form that the pseudo-likelihood takes is computationally convenient because it implies we can sample from the pseudo-posterior
\begin{align*}
\begin{aligned}
   q\left[ \left\lbrace \left(\alpha_g,\delta_g\right) \right\rbrace_{g \in \Missing} \mid \Data \right] \propto \prod\limits_{g \in \Missing} \left[ \normal\left\lbrace \bar{\bm{h}}_g\left(\alpha_g,\delta_g\right)\mid \bm{0}_3, n^{-1}\hat{\bm{\Sigma}}_g\left( \alpha_g,\delta_g\right) \right\rbrace \Bprob\left(\alpha_g,\delta_g\right) \right]
\end{aligned}
\end{align*}
with Markov chain Monte Carlo using $\abs{\Missing}$ parallel chains, which we use to obtain
\begin{subequations}
\label{equation:HBGMM_estimates}
\begin{align}
    \label{equation:HBGMM_estimates:alphadelta}
    \hat{\alpha}_g &=\E\left( \alpha_{g} \mid \Data \right),\quad \hat{\delta}_g=\E\left( \delta_{g} \mid \Data \right), \quad g \in \Missing\\
    \label{equation:HBGMM_estimates:weights}
    \hat{w}_{gi} &=\E\left( w_{gi} \mid \Data \right) = r_{gi}\E\left[1/\Psi\left\lbrace \alpha_g\left(y_{gi}-\delta_g\right) \right\rbrace \mid \Data\right], \quad g \in \Missing; i \in [n]\\
    \label{equation:HBGMM_estimates:Varweights}
    \hat{v}_{gi} &=\E\left( w_{gi}^2 \mid \Data \right) = r_{gi}\E\left[1/\Psi\left\lbrace \alpha_g\left(y_{gi}-\delta_g\right) \right\rbrace^2 \mid \Data\right], \quad g \in \Missing; i \in [n].
\end{align}
\end{subequations}
This technique of replacing the likelihood with the pseudo-likelihood in \eqref{equation:BayesRule} is standard in Bayesian GMM when $\abs{\Missing} = 1$ and $\hat{\bm{U}}_{g}$ is observed \citep{BayesGMM_Orig,BayesGMM,BGMM_Theory}.\par
\indent It remains to specify the prior for $\left(\alpha_g,\delta_g\right)$. We assume that $\left( \log\left(\alpha_g\right), \delta_g \right)^{\T} \mid \left(  \bm{\mu}, \bm{U}\right) \sim N_2\left( \bm{\mu}, \bm{U}\right)$ for all $g \in \Missing$, where we log-transform $\alpha_g$ to make in amenable to a normal prior. We first estimate $\bm{\mu}$ as $\hat{\bm{\mu}} = \abs{\Missing}^{-1}\sum\limits_{g\in\Missing}\left( \log\left\lbrace \hat{\alpha}_g^{\GMM}\right\rbrace, \hat{\delta}_g^{\GMM} \right)^{\T}$. Assuming \eqref{equation:AsymptoticDistn} is approximately correct, we then use empirical Bayes and define our estimate for $\bm{U}$, $\hat{\bm{U}}$, as the maximizer of the following objective over $\bm{U} \succ \bm{0}$:
\begin{align*}
    \prod\limits_{g \in \Missing} \int &\normal\left[ \left( \log\left\lbrace \hat{\alpha}_g^{\GMM} \right\rbrace, \hat{\delta}_g^{\GMM} \right)^{\T}  \mid \left( \eta_g, \delta_g \right)^{\T}, \hat{\bm{R}}_g \right]\normal\left\lbrace\left( \eta_g, \delta_g \right)^{\T} \mid \hat{\bm{\mu}}, \bm{U}\right\rbrace d\eta_g d\delta_g,
\end{align*}
where $\hat{\bm{R}}_g = \diag\left\lbrace 1/\hat{\alpha}_g^{\GMM},1 \right\rbrace \hat{\bm{V}}_g \diag\left\lbrace 1/\hat{\alpha}_g^{\GMM},1 \right\rbrace$ for $\hat{\bm{V}}_g$ defined in \eqref{equation:AsymptoticDistn}. We estimate $\bm{U}$ using the product of marginal likelihoods because under the assumptions used to prove \eqref{equation:AsymptoticDistn}, the estimates $\left( \hat{\alpha}_g^{\GMM}, \hat{\delta}_g^{\GMM} \right)$ and $\left( \hat{\alpha}_s^{\GMM}, \hat{\delta}_s^{\GMM} \right)$ are asymptotically independent for $g \neq s \in \Missing$. See Section \ref{section:Supp:EstGMMProof} in the Supplement for more details.

\section{Estimating coefficients when $C$ is known}
\label{section:IPW}
Here we describe our method for estimating $\bm{\beta}_g$ and $\bm{\ell}_g$ in Model \eqref{equation:ModelForData} when $\bm{C}$ is known, which is based on inverse probability weighting \citep{PairwisePseudoLike}. This methodology is used in Section \ref{section:CMNAR} to recover $\bm{C}$, and is also our default method to perform inference on the coefficients of interest because estimates are consistent, it obviates specifying a probability model for the missing data and computation is fast enough to perform a metabolite phenome wide association study. For notational simplicity, we rewrite Model \eqref{equation:ModelForData} as
\begin{align*}
    y_{gi} = \bm{z}_i^{\T}\bm{\eta}_g + e_{gi}, \quad e_{gi} \asim \left(0,\sigma_g^2\right), \quad g\in[p]; i\in [n]
\end{align*}
for the remainder of Section \ref{section:IPW}. Since estimation is trivial when there is little missing data, our goal is to estimate $\bm{\eta}_g$ for all $g \in \Missing$ when $\bm{Z}=\left(\bm{z}_1\cdots \bm{z}_n\right)^{\T}$ is observed. 

\subsection{Point estimates}
\label{subsection:IPW:point}
Fix a $g \in \Missing$ and for all $i \in [n]$, define the score function $\bm{s}_{gi}\left( \bm{\eta}\right) = \bm{z}_i \left(y_{gi} - \bm{z}_i^{\T} \bm{\eta} \right)$, $\gamma_{gi} = \Prob\left(r_{gi} = 1 \mid \bm{Z}\right)$ and the inverse probability weighted estimating equation $\bm{f}_{g}\left(\bm{\eta}\right) = \sum\limits_{i=1}^n \hat{\gamma}_{gi}\hat{w}_{gi} \bm{s}_{gi}\left( \bm{\eta}\right)$, where $\hat{w}_{gi}$ is defined in \eqref{equation:HBGMM_estimates:weights} and $\hat{\gamma}_{gi}$ is an estimate of $\gamma_{gi}$. If $\hat{w}_{gi} = w_{gi}$ and $\hat{\gamma}_{gi}=\gamma_{gi}$ for all $i \in [n]$, then
\begin{align*}
    \E\left\lbrace \bm{f}_{g}\left(\bm{\eta}_{g}\right) \mid \bm{Z}\right\rbrace &= \sum\limits_{i=1}^n \gamma_{gi}\E\left\lbrace \E\left( w_{gi} \mid y_{gi} \right)\bm{s}_{gi}\left( \bm{\eta}_{g}\right) \mid \bm{Z} \right\rbrace= \sum\limits_{i=1}^n \gamma_{gi}\E\left\lbrace \bm{s}_{gi}\left( \bm{\eta}_{g}\right) \mid \bm{Z} \right\rbrace = \bm{0}.
\end{align*}
The above equality can be shown to hold in the more general case when $\gamma_{gi} \indep \bm{y}_g \mid \bm{Z}$ for all $i\in[n]$, meaning the root of $\bm{f}_{g}$ will be an accurate estimate of $\bm{\eta}_g$ if $\hat{w}_{gi}$ is consistent for $w_{gi}$ and $\hat{\gamma}_{gi}$ is only weakly dependent on $\bm{y}_g$. We include $\hat{\gamma}_{gi}$ to stabilize potentially large weights $\hat{w}_{gi}$ and thereby reduce the variance of our estimates, since $\hat{\gamma}_{gi}$ will tend to be small if $\hat{w}_{gi}$ is large. This method of stabilized inverse probability weighting has been successfully applied to data that are missing at random \citep{StabilizedWeights_MAR}, and we estimate $\gamma_{gi}$ using a logistic regression with the estimated instruments $\hat{\bm{U}}_g$. We then define our estimate for $\bm{\eta}_{g}$ as the root of $\bm{f}_g$:
\begin{align}
\label{equation:IPW:beta.hat}
    \hat{\bm{\eta}}_g = \left(\bm{Z}^{\T} \hat{\bm{W}}_g\bm{Z} \right)^{-1} \bm{Z}^{\T} \hat{\bm{W}}_g\bm{y}_g, \quad \hat{\bm{W}}_g = \diag\left( \hat{w}_{g1}\hat{\gamma}_{g1}, \ldots, \hat{w}_{gn}\hat{\gamma}_{gn} \right).
\end{align}
Note $\bm{s}_{gi}$ in $\bm{f}_g$ can be redefined to be any M-estimator, like Huber's or Tukey's robust estimators, provided $\E\left\lbrace \bm{s}_{gi}\left(\bm{\eta}_g\right) \mid \bm{Z} \right\rbrace = \bm{0}$.

\subsection{Quantifying uncertainty}
\label{subsection:IPW:variance}
Fix a $g \in \Missing$. Here we describe our estimator for $\V\left( \hat{\bm{\eta}}_g\right)$, which we use to recover $\bm{C}$ in Section \ref{section:CMNAR} and perform inference on $\bm{\eta}_g$. Our estimator is a novel finite sample-corrected sandwich variance estimator that also accounts for the uncertainty in the estimated weights $\hat{w}_{gi}$.\par
\indent Suppose for simplicity that $\hat{w}_{gi}=w_{gi}$ and $\hat{\gamma}_{gi}=\gamma_{gi}$. Then
\begin{align*}
    n^{1/2}\left(\bm{\eta}_g - \hat{\bm{\eta}}_g\right) = \left(n^{-1}\bm{Z}^{\T}\hat{\bm{W}}_g \bm{Z}\right)^{-1}\left(n^{-1/2}\sum\limits_{i=1}^n \gamma_{gi}w_{gi}e_{gi}\bm{z}_i \right).
\end{align*}
Since $\left\lbrace \left(y_{gi},r_{gi}\right) \right\rbrace_{i \in [n]}$ are mutually independent and $\E\left( \gamma_{gi}w_{gi}e_{gi} \mid \bm{Z}\right) = 0$,
\begin{align*}
    n\V\left(\hat{\bm{\eta}}_g \mid \bm{Z}\right) \approx \left(n^{-1}\bm{Z}^{\T}\hat{\bm{W}}_g \bm{Z}\right)^{-1} \left( n^{-1}\sum\limits_{i=1}^n \gamma_{gi}^2 w_{gi}^2 e_{gi}^2 \bm{z}_i \bm{z}_i^{\T}\right) \left(n^{-1}\bm{Z}^{\T}\hat{\bm{W}}_g \bm{Z}\right)^{-1}.
\end{align*}
Therefore, we need only approximate the middle term to estimate $\V\left(\hat{\bm{\eta}}_g \mid \bm{Z}\right)$. Simply plugging in $\hat{w}_{gi}^2$ for $w_{gi}^2$ will tend to underestimate $\V\left(\hat{\bm{\eta}}_g \mid \bm{Z}\right)$, since the uncertainty in $\hat{w}_{gi}$ increases as $w_{gi}$ increases. Further, plugging in $\hat{e}_{gi} = y_{gi} - \bm{z}_i^{\T}\hat{\bm{\eta}}_g$ for $e_{gi}$ will also underestimate $\V\left(\hat{\bm{\eta}}_g \mid \bm{Z}\right)$, since this ignores the uncertainty in $\hat{\bm{\eta}}_g$. We circumvent the former by replacing $w_{gi}^2$ with $\hat{v}_{gi}$ defined in \eqref{equation:HBGMM_estimates:Varweights}, where $\hat{v}_{gi} \geq \hat{w}_{gi}^2$ such that $\hat{v}_{gi} = \hat{w}_{gi}^2$ if and only if $\V\left(w_{gi} \mid \Data\right) = 0$. That is, $\hat{v}_{gi}$ helps account for the uncertainty in our estimate for $w_{gi}$. We lastly show how we estimate $e_{gi}^2$ in Section \ref{section:MetabMiss:Variance} of the Supplement, which leads to the following estimate for $\V\left( \hat{\bm{\eta}}_g \mid \bm{Z}\right)$:
\begin{align}
\label{equation:IPW:Variance}
\hat{\V}\left( \hat{\bm{\eta}}_g \mid \bm{Z}\right) = \left(\bm{Z}^{\T} \hat{\bm{W}}_g \bm{Z} \right)^{-1} \left\lbrace \sum\limits_{i=1}^n \left( 1-\hat{h}_{gi}\right)^{-2}\hat{\gamma}_{gi}^2\hat{v}_{gi}\hat{e}_{gi}^2 \bm{z}_i\bm{z}_i^{\T} \right\rbrace \left(\bm{Z}^{\T} \hat{\bm{W}}_g \bm{Z} \right)^{-1}.
\end{align}
The term $\left( 1-\hat{h}_{gi}\right)^{-2}$ is a finite sample correction, where $\hat{h}_{gi}$ is the $i$th leverage score of $\hat{\bm{W}}_g^{1/2}\bm{Z}$ for $i \in [n]$. This resembles the $\left( 1-\hat{h}_{gi}\right)^{-1}$ inflation term commonly used to correct the sandwich variance estimator \citep{Sandwich_GEE}. The difference arises because the residuals $e_{g1},\ldots,e_{gn}$ are dependent on the design matrix $\hat{\bm{W}}_g^{1/2}\bm{Z}$ when data are missing not at random (MNAR). As far as we are aware, this is the first such finite sample variance correction for inverse probability weighted estimators with data that are MNAR.\par

\section{Recovering $C$ when data are MNAR}
\label{section:CMNAR}
Here we describe our method for estimating the latent covariates $\bm{C}$, where we now return to using the notation of Model \eqref{equation:ModelForData}. Define $\bm{X}_{\interest}$ and $\bm{X}_{\nuisance}$ such that $\bm{X}=\left( \bm{X}_{\interest}\, \bm{X}_{\nuisance}\right)$, where $\bm{X}_{\interest}$ contains the covariates of interest like disease status and $\bm{X}_{\nuisance}$ contains observed nuisance covariates like the intercept and technical factors. We assume for simplicity of presentation that $\bm{X} = \bm{X}_{\interest}$, and we describe the simple extension when $\bm{X}=\left( \bm{X}_{\interest}\, \bm{X}_{\nuisance}\right)$ in Section \ref{section:Supp:Nuisance} in the Supplement.\par 
\indent Let $\bm{y}_g = \left(y_{gi}\right)_{i\in[n]}$ and $\bm{e}_g=\left(e_{gi}\right)_{i\in[n]}$ for each $g \in [p]$. Then
\begin{subequations}
\label{equation:EstC}
\begin{align}
    \label{equation:EstC:yg}
    &\bm{y}_{g} = \bm{X}\tilde{\bm{\beta}}_{g} + \bm{C}_2\bm{\ell}_{g} + \bm{e}_g, \quad \tilde{\bm{\beta}}_{g} = \bm{\beta}_{g} + \bm{\Omega}\bm{\ell}_{g}, \quad \bm{e}_g \asim \left(\bm{0}_n,\sigma_g^2 I_n\right),\quad g \in [p]\\
    \label{equation:EstC:OmegaC2}
    & \bm{\Omega} = \left(\bm{X}^{\T} \bm{X}\right)^{-1}\bm{X}^{\T}\bm{C}, \quad \bm{C}_2 = P_{X}^{\perp}\bm{C}.
\end{align}
\end{subequations}
\noindent It is easy to show that typical estimates for $\bm{\beta}_g$ using the design matrices $\left(\bm{X}\, \hat{\bm{C}}\right)$ and $\left(\bm{X}\, \bm{C}\right)$, like OLS and that in \eqref{equation:IPW:beta.hat}, will be identical if $\im\left(\hat{\bm{C}}\right)=\im\left(\bm{C}\right)$. Consequently, we need only estimate $\im\left(\bm{C}\right)$, which is quite auspicious because even though $\bm{C}$ is not identifiable in \eqref{equation:ModelForData}, $\im\left(\bm{C}\right)$ is identifiable under assumptions on the sparsity of $\left(\bm{\beta}_1 \cdots \bm{\beta}_p\right)$ \citep{CorrConf}. We therefore assume without loss of generality that $\bm{C}$ and our estimator for $\bm{C}$, $\hat{\bm{C}}$, satisfy $ = n^{-1}\bm{C}^{\T}P_{X}^{\perp}\bm{C}= n^{-1}\hat{\bm{C}}^{\T}P_{X}^{\perp}\hat{\bm{C}} = I_K$. We estimate $\bm{C}_2$ and $\bm{\Omega}$ in Sections \ref{subsection:Cperp} and \ref{subsection:Omega} below, and define
\begin{align}
\label{equation:Chat}
    \hat{\bm{C}} = \bm{X}\hat{\bm{\Omega}} + \hat{\bm{C}}_2.
\end{align}
For $\bm{Z}=\left(\bm{X}\, \hat{\bm{C}}\right)$ and $\bm{R}_g=\diag\left(r_{g1},\ldots,r_{gn}\right)$, our estimates for $\bm{\eta}_g=\left(\bm{\beta}_g^{\T},\bm{\ell}_g^{\T}\right)^{\T}$ are
\begin{subequations}
\label{equation:FullEstimates}
\begin{align}
\label{equation:FullBetaHat}
    \hat{\bm{\eta}}_g = &\begin{cases}
    \left(\bm{Z}^{\T}\bm{R}_g\bm{Z}\right)^{-1}\bm{Z}^{\T}\bm{R}_g\bm{y}_g & \text{if $g \in \Observed$}\\
    \eqref{equation:IPW:beta.hat} & \text{if $g \in \Missing$}
    \end{cases}\\
    \label{equation:FullVar}
    \hat{\V}\left(\hat{\bm{\eta}}_g\right)=&\begin{cases}
    \lbrace\Tr(\bm{R}_g)-d-K\rbrace^{-1}\norm{\bm{R}_g\left(\bm{y}_g-\bm{Z}\hat{\bm{\eta}}_g\right)}_2^2\left(\bm{Z}^{\T}\bm{R}_g\bm{Z}\right)^{-1} & \text{if $g \in \Observed$}\\
    \eqref{equation:IPW:Variance} & \text{if $g \in \Missing$.}
    \end{cases}
\end{align}
\end{subequations}


\subsection{Estimating latent factors that are orthogonal to the design}
\label{subsection:Cperp}
We first describe our estimators for $\tilde{\bm{\beta}}_g$, $\bm{\ell}_g$ and $\bm{C}_2$, which we also use in Section \ref{subsection:Omega} to estimate $\bm{\Omega}$. Let $\Missingsub = \left\lbrace g \in \Missing : lfdr_g \geq 0.8 \right\rbrace$ be the set of metabolites with missing data whose missingness mechanisms appear to follow \eqref{equation:MissingMech}, where $lfdr_g$ was defined in Section \ref{subsection:Jtest}. We estimate $\tilde{\bm{\beta}}_{g}, \bm{\ell}_{g}$ and $\bm{C}_2$ using metabolites with nearly complete data or missing data whose missingness mechanisms appear to follow \eqref{equation:MissingMech} with the following scaled quasi-likelihood obsjective function, under the restriction that $\bm{C}_2^{\T}\bm{X} = \bm{0}$ and $n^{-1}\bm{C}_2^{\T}\bm{C}_2 = I_K$:
\begin{align*}
\begin{aligned}
    \left\lbrace \left\lbrace \hat{\tilde{\bm{\beta}}}_g \right\rbrace_{g \in \Observed \cup \Missingsub}, \left\lbrace \hat{\bm{\ell}}_g \right\rbrace_{g \in \Observed \cup \Missingsub},  \hat{\bm{C}}_2\right\rbrace =& \argmax_{\substack{ \tilde{\bm{\beta}}_g \in \mathbb{R}^d,\, \bm{\ell}_g \in \mathbb{R}^K \\\bm{C}_2 \in \mathbb{R}^{n \times K}  }} \left[ -\sum\limits_{g \in \Observed} \norm{\bm{R}_g \left\lbrace \bm{y}_g - \left( \bm{X}\tilde{\bm{\beta}}_g + \bm{C}_2 \bm{\ell}_g \right) \right\rbrace}_2^2 \right.\\
    &\left. - \sum\limits_{g \in \Missingsub} \norm{\hat{\bm{W}}_g^{1/2}\left\lbrace \bm{y}_g - \left( \bm{X}\tilde{\bm{\beta}}_g + \bm{C}_2 \bm{\ell}_g \right) \right\rbrace}_2^2 \right].   
\end{aligned}
\end{align*}
This optimization treats missing data from metabolites with little to no missing data as MCAR. For fixed $\bm{C}_2$, the updates for $\tilde{\bm{\beta}}_g$ and $\bm{\ell}_g$ are given by \eqref{equation:FullBetaHat} with $\bm{Z}=\left(\bm{X}\, \bm{C}_2\right)$.

\subsection{Estimating latent factors in the image of the design}
\label{subsection:Omega}
We now describe how we estimate $\bm{\Omega}$. The estimates $\left(\hat{\tilde{\bm{\beta}}}_g^{\T},\hat{\bm{\ell}}_g^{\T}\right)^{\T}$ can be expressed as \eqref{equation:FullBetaHat} using the design matrix $\bm{Z}=\left(\bm{X}\, \hat{\bm{C}}_2\right)$. By \eqref{equation:EstC:yg}, this suggests an appropriate model for $\hat{\tilde{\bm{\beta}}}_g$ is $\hat{\tilde{\bm{\beta}}}_{g} \asim \left( \bm{\beta}_{g} + \bm{\Omega}\bm{\ell}_g, \hat{\V}\left(\hat{\tilde{\bm{\beta}}}_{g}\right)\right)$, where $\hat{\V}\left(\hat{\tilde{\bm{\beta}}}_{g}\right) \in \mathbb{R}^{d \times d}$ is the upper left $d \times d$ submatrix of $\hat{\V}\left(\hat{\bm{\eta}}_g\right)$ defined in \eqref{equation:FullVar} for $\bm{Z}=\left(\bm{X}\, \hat{\bm{C}}_2\right)$. If $\left(\bm{\beta}_1 \cdots \bm{\beta}_p\right)$ is sparse, the expression for $\tilde{\bm{\beta}}_g$ in \eqref{equation:EstC:yg} suggests we can regress the estimates for $\tilde{\bm{\beta}}_g$ onto those for $\bm{\ell}_g$ to estimate $\bm{\Omega}$. This is outlined in Algorithm \ref{algorithm:Omega}.
\begin{algorithm}[Estimating $\bm{\Omega}$]
\label{algorithm:Omega}
Let $\epsilon_{\qvalue} \in [0,1]$, $R \geq 0$ be an integer and $\hat{\tau}_{g,j}$ be the $j$th diagonal element of $\hat{\V}\left(\hat{\tilde{\bm{\beta}}}_g\right)\in\mathbb{R}^{d \times d}$ for all $g \in \Observed \cup \Missingsub$ and $j \in [d]$.
\begin{enumerate}[label=(\arabic*)]
\setcounter{enumi}{-1}
    \item For $j\in[d]$, let $\hat{\bm{\Omega}}^{(0)}_{j} = \left( \sum\limits_{g \in \Observed \cup \Missingsub} \hat{\tau}_{g,j}^{-1} \hat{\bm{\ell}}_g \hat{\bm{\ell}}_g^{\T} \right)^{-1} \left(\sum\limits_{g \in \Observed \cup \Missingsub} \hat{\tau}_{g,j}^{-1}\hat{\tilde{\bm{\beta}}}_{g_j} \hat{\bm{\ell}}_g \right)$ and $\hat{\bm{\Omega}}^{(0)} = \left( \hat{\bm{\Omega}}^{(0)}_{1} \cdots \hat{\bm{\Omega}}^{(0)}_{d} \right)^{\T}$. Define $\hat{\bm{C}}^{(0)} = \bm{X}\hat{\bm{\Omega}}^{(0)} + \hat{\bm{C}}_2$. If $R = 0$, return $\hat{\bm{C}} = \hat{\bm{C}}^{(0)}$.\label{item:Omega:Intial}
    \item Let $\hat{\bm{C}}^{(r)}$ be given. Define $\hat{\bm{\beta}}_g^{(r)}$ and $\hat{\V}\left\lbrace \hat{\bm{\beta}}_g^{(r)} \right\rbrace$ to be the first $d$ coordinates and upper left $d \times d$ block of $\hat{\bm{\eta}}_g$ and $\hat{\V}\left(\hat{\bm{\eta}}_g\right)$ defined in \eqref{equation:FullBetaHat} and \eqref{equation:FullVar}, respectively, for $\bm{Z}=\left(\bm{X}\, \hat{\bm{C}}^{(r)}\right)$. For $z^2 \sim \chi^2_1$, let $p_{g,j}=\Prob\left[ z^2\geq \left\lbrace\hat{\bm{\beta}}_g^{(r)}\right\rbrace^2/\hat{\V}\left\lbrace \hat{\bm{\beta}}_g^{(r)} \right\rbrace_{jj} \right]$ for all $g \in \Observed \cup \Missingsub$ and $j \in[d]$.\label{item:Omega:pvalue}
    \item Obtain the q-values $\left\lbrace q_{g,j} \right\rbrace_{g \in \Observed \cup \Missingsub}$ using the \textit{P} values $\left\lbrace p_{g,j} \right\rbrace_{g \in \Observed \cup \Missingsub}$ for each $j \in [d]$. Define $\hat{\bm{\Omega}}^{(r+1)} = \left( \hat{\bm{\Omega}}^{(r+1)}_{1} \cdots \hat{\bm{\Omega}}^{(r+1)}_{d} \right)^{\T}$ to be
    \begin{align*}
         \hat{\bm{\Omega}}^{(r+1)}_{j} =& \left\lbrace \sum\limits_{g \in \Observed \cup \Missingsub} I\left(q_{g,j} > \epsilon_{\qvalue}\right)\hat{\tau}_{g,j}^{-1} \hat{\bm{\ell}}_g \hat{\bm{\ell}}_g^{\T} \right\rbrace^{-1}\left\lbrace\sum\limits_{g \in \Observed \cup \Missingsub} I\left(q_{g,j} > \epsilon_{\qvalue}\right)\hat{\tau}_{g,j}^{-1} \hat{\tilde{\bm{\beta}}}_{g_j} \hat{\bm{\ell}}_{g} \right\rbrace.
    \end{align*}
    Update $r \leftarrow r+1$ and define $\hat{\bm{C}}^{(r)} = \bm{X}\hat{\bm{\Omega}}^{(r)} + \hat{\bm{C}}_2$.\label{item:Omega:Refine}
    \item Repeat Steps \ref{item:Omega:pvalue} and \ref{item:Omega:Refine} for $r=0,1,\ldots,R-1$ and return $\hat{\bm{\Omega}} = \hat{\bm{\Omega}}^{(R)}$.
\end{enumerate}
\end{algorithm}
Our software's default is $\epsilon_{\qvalue} = 0.1$ and $R=3$. While $\hat{\bm{\Omega}}^{(0)}$ is a suitable estimate for $\bm{\Omega}$ when $\left(\bm{\beta}_1\cdots\bm{\beta}_p\right)$ is very sparse, Step \ref{item:Omega:Refine} identifies and removes metabolites with non-zero coefficients of interest $\bm{\beta}_g$ and helps alleviate the impact of outliers in the regression estimate for $\bm{\Omega}$ when $\left(\bm{\beta}_1 \cdots \bm{\beta}_p\right)$ is only approximately sparse.

\section{A simulation study}
\label{section:simulations}
\subsection{Simulation setup}
\label{subsection:SimSetup}
Here we analyze simulated metabolomic data to compare the performance of our method with other existing methods. We simulated the log-intensities of $p=1200$ metabolites in $n=600$ individuals, 300 of which were cases and the remaining 300 were controls. The observed design matrix was $\bm{X} = \left( \bm{X}_{\interest}\, \bm{1}_n \right)$, where $\bm{X}_{\interest} = \left( \bm{1}_{n/2}^{\T}, \bm{0}_{n/2}^{\T}\right)^{\T} \in \mathbb{R}^n$. The parameters $p$ and $n$ were chosen to match those from our real data example in Section \ref{section:DataAnlysis}, and we include additional results when $n=100$ and $n=300$ in Section \ref{subsection:Supp:Smalln} of the supplement. We set $K=10$, and for some constant $a$ and appropriate $\Psi(x)$, simulated data as
\begin{subequations}
\label{equation:MetabMiss:Simulation}
\begin{align}
    \label{equation:Sim:adelta}
    &\log\left( \alpha_{g}\right) \sim N_1\left( \mu_{\alpha},0.4^2\right), \quad \delta_{g} \sim N_1\left( 16, 1.2^2\right), \quad g \in [p]\\
    \label{equation:Sim:C}
    &\bm{C}=\left(\bm{c}_1 \cdots \bm{c}_n\right)^{\T} \sim MN_{n \times K}\left( \left( a\bm{X}_{\interest} \, \bm{0}_n \cdots \bm{0}_n\right), I_n, I_K \right)\\
    \label{equation:Sim:ell}
    &\bm{\ell}_{g_k} \sim \pi_k \delta_0 + \left( 1-\pi_k\right)N_1\left( 0,\tau_k^2\right), \quad g\in[p]; k\in[K]\\
    \label{equation:Sim:meanvar}
    &\mu_{g} \sim N_1\left(18, 5^2 \right), \quad \sigma_{g}^2 \sim \text{Gamma}\left( 0.2^{-2}, 0.2^{-2}\right), \quad g\in[p]\\
    \label{equation:Sim:effect}
    &\beta_{g} \sim 0.8\delta_0 + 0.2N_1\left( 0, 0.4^2\right), \quad g\in[p]\\
    \label{equation:Sim:y}
    &y_{gi} \sim N_1\left( \mu_{g} + \bm{X}_{\interest_i}\beta_{g} + \bm{c}_i^{\T} \bm{\ell}_{g}, \sigma_{g}^2 \right), \quad g\in[p]; i\in[n]\\
    \label{equation:Sim:r}
    &r_{gi} = \text{Bernoulli}\left[ \Psi\left\lbrace \alpha_{g}\left(y_{gi}-\delta_{g} \right) \right\rbrace \right], \quad g\in[p]; i\in[n]
\end{align}
\end{subequations}
where $\delta_0$ is the point mass at 0 and $\mu_{\alpha}$ in \eqref{equation:Sim:adelta} was set so that if $Z$ has cumulative distribution function $\Psi\left\lbrace \exp\left( \mu_{\alpha}\right)x \right\rbrace$, $\V(Z)=1$. The constant $a$ in \eqref{equation:Sim:C} was chosen so that $\bm{C}$ explained 7.5\% of the variance in $\bm{X}_{\interest}$ on average across all simulations, and Table \ref{Table:LMetab} contains the values of $\pi_k$ and $\tau_k^2$. These were chosen so that the non-zero eigenvalues $\lambda_1,\ldots ,\lambda_K$ of $\mathcal{I}=(n-1)^{-1}P_{1_n}^{\perp}\bm{C}\left( p^{-1}\sum\limits_{g=1}^p \sigma_{g}^{-2}\bm{\ell}_{g}\bm{\ell}_{g}^{\T} \right)\bm{C}^{\T} P_{1_n}^{\perp}$ were 0.61, 0.33, 0.19, 0.14, 0.12, 0.08, 0.07, 0.05, 0.05 and 0.05 on average across all simulated datasets, since these were the first 10 eigenvalues of the estimated $\mathcal{I}$ in our data example in Section \ref{section:DataAnlysis}. Similarly, the prior variances for the missingness mechanism parameters in \eqref{equation:Sim:adelta}, as well as the mean and variance for the global mean $\mu_g$ in \eqref{equation:Sim:meanvar}, were set to their estimated equivalents from our data example in Section \ref{section:DataAnlysis}. Since we typically do not know the exact functional form of $\Psi(x)$ in practice, we set $\Psi(x) = \exp(x)/\left\lbrace 1+\exp(x) \right\rbrace$ and analyzed each simulated dataset assuming $\Psi(x)=F_4(x)$. The distribution of missing data is given in Table \ref{Table:SimFrac}, which closely matched that in our real data example.

\LtabMetab

\Fractab

\indent We simulated 60 datasets and in each simulation, removed all metabolites that were missing in more than 50\% of the samples, since we find that these metabolites tend to have large $J$ statistics in real data. We used Algorithm \ref{algorithm:MissMechOverview} with $\epsilon_{\miss} = 0.05$ and $K_{\miss} = 5$ to estimate the metabolite-dependent missingness mechanism parameters $\alpha_{g}$ and $\delta_{g}$, and subsequently estimated $\bm{C}$ as \eqref{equation:Chat} with $\epsilon_{\qvalue} = 0.1$, assuming $K=10$ was known. We lastly estimated $\beta_g$ and said estimator's variance using \eqref{equation:FullEstimates}, and formed 95\% confidence intervals and computed \textit{P} values assuming $\hat{\beta}_{g} \sim N_1\left( \beta_{g}, \hat{\V}\left( \hat{\beta}_{g}\right) \right)$. We refer to this procedure as ``MetabMiss".\par
\indent Similar to our real data example, $K_{\miss}$ was such that $q_{g,g_2}$, defined in Algorithm \ref{algorithm:SelectU}, was less than 0.05 in at least 90\% of all metabolites $g \in \Missing$ in each simulated dataset. Our results were identical when we let $K_{\miss}$ be as small as 3 and as large as 10. The fact that $K$ was assumed to be known when estimating $\bm{C}$ was inconsequential, since parallel analysis \citep{SVAinR} applied to metabolites with only complete data consistently estimated $K=10$. We demonstrate the fidelity of MetabMiss' estimates for the coefficients of interest $\beta_1,\ldots,\beta_p$ in Section \ref{subsection:SimResults}. We also illustrate the accuracy of Algorithm \ref{algorithm:MissMechOverview}'s estimates for $\alpha_g$ and $\delta_g$, as well as the uniformity of the bootstrapped Sargan-Hansen $J$ statistic \textit{P} values, in Section \ref{subsection:Supp:SimMisParams}.

\subsection{Simulation results}
\label{subsection:SimResults}
Given the paucity of methods available to analyze metabolomics data, we could only compare our method to those that account for non-random missing data or $\bm{C}$, but not both. It is not interesting to compare MetabMiss to methods that only account for the former, like those proposed in \cite{MNAR_SILAC,SILAC_MNAR2,MNARProteomics}, because ignoring $\bm{C}$ will dramatically inflate type I error. Instead, we compared MetabMiss to existing methods that have been used to recover $\bm{C}$ in metabolomic data but do not account for the non-random missing data, which included IRW-SVA \citep{SVA2008}, dSVA \citep{BiometrikaConfounding}, RUV-2 \citep{RUV} and RUV-4 \citep{RUV4}. We do not report results from the method proposed in \cite{CATE}, as it performed nearly identically to dSVA in all simulation scenarios. Since none of the of the aforementioned methods can accommodate missing data, we estimated $\bm{C}$ with each using only metabolites with complete observations, and computed confidence intervals and \textit{P} values using OLS with the estimated design matrix $\left( \bm{X}\, \hat{\bm{C}}\right)$, assuming missing data were MCAR. We remark that we could not analyze these simulated data with the methods proposed in \cite{RUVMetab1} or \cite{Metab_RRmix} because both methods rely on a random effects model whose implemented estimators are not amenable to any missing data.\par 
\indent We first evaluated each method's ability to identify metabolites with non-zero effect of interest $\beta_{g}$ while controlling the false discovery rate at a nominal level. The results are given in Figure \ref{Figure:HypothesisTesting}, where the only method to suitably control for false discoveries is MetabMiss. The fact that MetabMiss is slightly underpowered compared to the other methods is to be expected, as anti-conservative inference is typically more powerful. We also evaluated the confidence interval coverage for the effects of interest $\beta_{g}$ for each method in Figure \ref{Figure:CI}. These results illustrate the consequences of performing inference on estimators that do not properly account for the missing data, and also highlight the fidelity of our finite sample-corrected estimator for the variance defined in  \eqref{equation:IPW:Variance}. 

\begin{figure}[t!]
   \centering
   \includegraphics[scale=0.25]{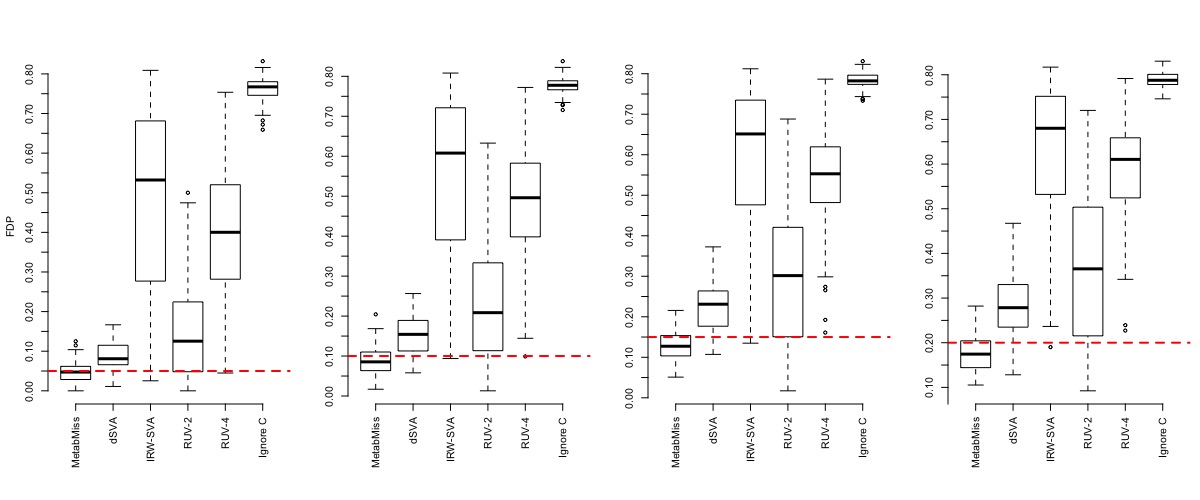}\\
   \includegraphics[scale=0.25]{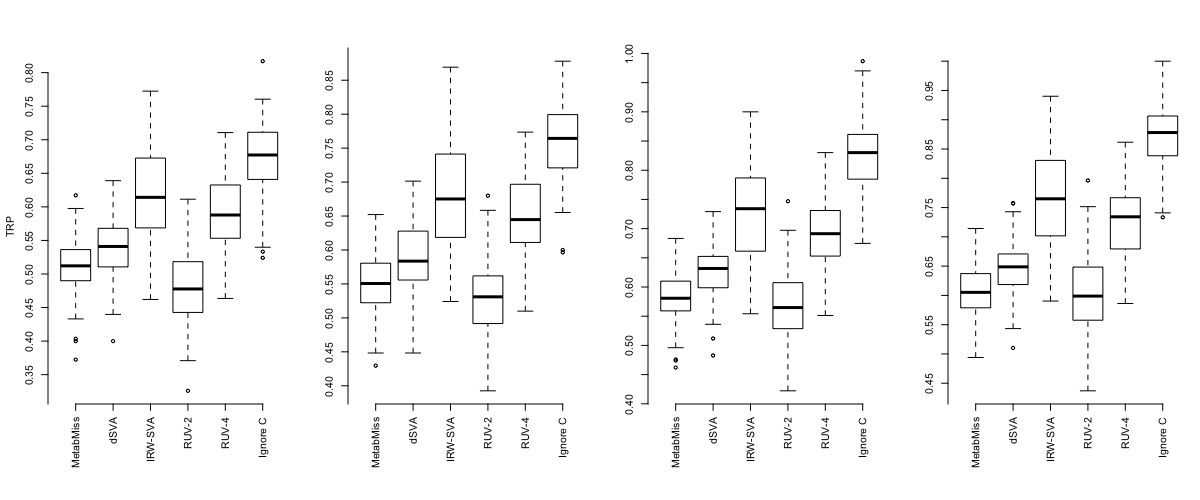}
   \caption[Inference using MetabMiss to simultaneously account for latent covariates and non-random missing data.]{From left to right: the false discovery proportion, FDP (a), and true recovery proportion, TRP (b), for metabolites with q-values $\leq 0.05$, 0.1, 0.15 and 0.2. The TRP is the fraction of metabolites with non-zero $\beta_g$ identified at a given q-value threshold. q-values were determined using the qvalue package in R \citep{qvalueSoftware}.}\label{Figure:HypothesisTesting}
\end{figure}

\begin{figure}[t!]
    \centering
    \includegraphics[scale=0.4]{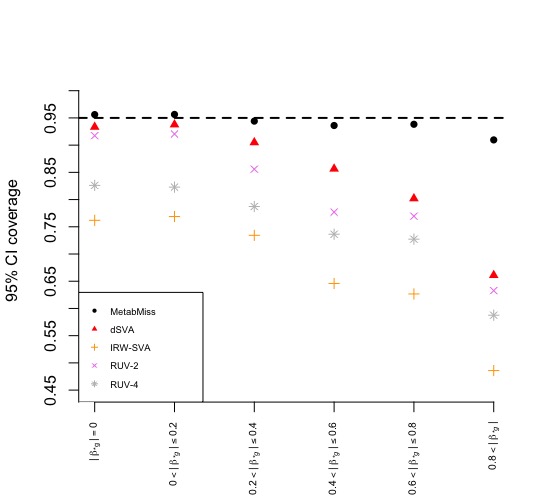}
    \caption{The fraction of effects of interest $\left\lbrace \beta_{g} \right\rbrace_{g \in \Missing}$ in all 60 simulated datasets that lie in their respective 95\% confidence intervals $\hat{\beta}_g \pm 1.96\left\lbrace \hat{\V}\left( \hat{\beta}_g\right) \right\rbrace^{1/2}$, stratified by $\lvert \beta_g \rvert$. The coverage when $\bm{C}$ was ignored was uniformly less than IRW-SVA's.}\label{Figure:CI}
\end{figure}

\section{Data analysis}
\label{section:DataAnlysis}
We used blood plasma metabolomic data measured in $n=533$ six year old Danish children enrolled in the Copenhagen Prospective Studies of Asthma in Children cohort \citep{COPSAC} to demonstrate the importance of accounting for both missing data and unobserved covariates in untargeted metabolomic data. Table \ref{Table:COPSACMiss} provides an overview of the extent of the missing data in each of the $p=1138$ measured metabolites. We excluded metabolites that were missing in more than 50\% of the samples and set $\epsilon_{\miss} = 0.05$ and $K_{\miss}=K_{PA}/2 = 10$ when estimating the missingness mechanisms with HB-GMM, where $K_{PA}$ was parallel analysis' \citep{BujaFA} estimates for $K$. $K_{\miss}$ was chosen using the procedure outlined in Section \ref{subsection:SelectingInstruments}. 

\COPSACMiss

\indent Once we estimated the missingness mechanisms, we could easily assess the relationships between the quantified metabolome and the many recorded phenotypes using MetabMiss. We were particularly interested in phenotypes related to asthma, and present the results for specific airway resistance ($\sraw$), which measures airway resistance to flow \citep{sRAW}. Using the design matrix $\bm{X}=\left(\bm{X}_{\interest} \, \bm{1}_n\right)$, where $\bm{X}_{\interest} \in \mathbb{R}^n$ was each individual's measured $\sraw$ value, we estimated $K$ with \citep{SVAinR} using the metabolites with complete data and regressed the quantified metabolites onto $\bm{X}$ using MetabMiss. We present the Q-Q plots of \textit{P} values in Figure \ref{Figure:RealData:sRAW}.\par
\indent Figure \ref{Figure:RealData:sRAW} shows that MetabMiss not only corrects the minor \textit{P} value inflation, but also empowers the analysis by reducing the residual variance. While the analysis with dSVA only identified a single metabolite, MetabMiss identified six additional metabolites at a a q-value threshold of 0.2: two sphingolipids, a benzoate derivative, pyruvate and three derivatives of piperine, which is an alkaloid found in black pepper. A reduction in sphingolipid synthesis was associated with increased airway hyperractivity in children \citep{Sphingo}, which is congruent with the estimated sign of the $\sraw$ effect on the intensity of the two sphingolipid metabolites. Benzoate preservatives have been linked to lung function-related phenotypes \citep{BenzoateAsthma,BenzoateRhinitis}, and pyruvate and lactate (q-value = 0.23) levels have previously been associated with asthma \citep{Pyruvate1,Pyruvate2}.\par 
\indent The three derivatives of piperine were particularly interesting in the context of our methodology because all three had between 12\% and 48\% missing data with J-test \textit{P} values between 0.77 and 0.99 (see Section \ref{subsection:Jtest}), suggesting that Model \eqref{equation:MissingMech} is a reasonable model for their missingness mechanisms. We found that higher concentrations of these three metabolites were associated with increased airway resistance. This corroborates the known biological impact of piperine, as it has been shown that piperine has a strong affinity for and activates TRPV1 cation channels on the ends of somatic and visceral parasympathetic nervous system sensory fibers \citep{TRPV1Piperine}. This triggers mast cells, bronchial epithelial cells and immune cells to release proinflammatory cytokines \citep{TRPV1ProInflammatory}, which ultimately causes bronchoconstriction \citep{TRPV1Lung,TRPV1Blocking}.\par 
\indent An interesting feature of Figure \ref{Figure:RealData:sRAW} is that the ordering of the \textit{P} values changes when one accounts for the missing data. In fact, we would not have identified the association between $\sraw$ and piperine concentration if one used existing latent factor correction methods. This is in part because metabolites with missing data had a slightly different latent factor signature than those with complete data, which is why MetabMiss is able to better account for the latent variation than existing methods.

\begin{figure}[!ht]
\centering
\includegraphics[scale=0.45]{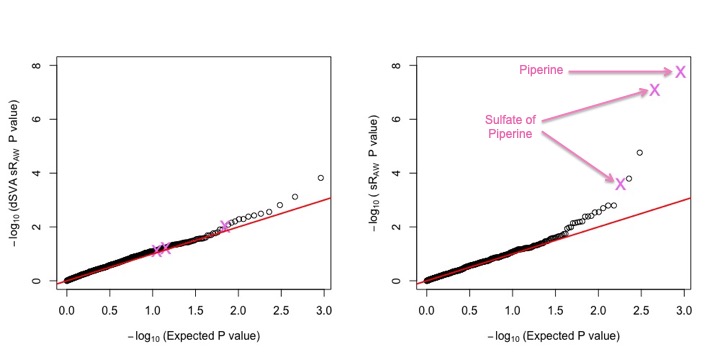}
\caption{A Q-Q plot of \textit{P} values for the null hypotheses $H_{0,g}:\beta_g = 0$ when $\bm{C}$ is estimated with dSVA using metabolites with complete data and the missing data are treated as MCAR (left), and using MetabMiss (right). The x-axis is the expected ordered \textit{P} value under the null hypothesis, assuming all tests are independent. The three quantified derivatives of piperine are each labeled with a violet ``$\times$".}\label{Figure:RealData:sRAW}
\end{figure}


\section{Discussion}
\label{section:Discussion}
We have presented, to the best of our knowledge, the first method to simultaneously account for latent factors and non-ignorable missing data in untargeted metabolomic data. Our method simplifies this complex problem by modularizing the estimation of each metabolite-dependent missingness mechanism and latent factors, and does so without assuming a specific probability model for the missing data. This modularization also makes modern metabolomic data analysis tractable, since our estimators for the missingness mechanism only depend on $\bm{Y}$ and are invariant to the choice of model matrix $\bm{X}$.\par
\indent An important tuning parameter in Algorithm \ref{algorithm:SelectU} is the number of estimated instruments to use to estimate $\alpha_g$ and $\delta_g$. We use two instruments ($\hat{\bm{U}}_g \in \mathbb{R}^{n \times 2}$) so that we have an extra degree of freedom to use the Sargan-Hansen $J$ statistic to identify metabolites whose missingness mechanisms may not follow \eqref{equation:MissingMech}. Including additional instruments may improve the efficiency of our estimator, which could be an interesting area of future research.

\section*{Acknowledgements}
We thank Hans Bisgaard, Klaus B{\o}nnelykke and the rest of the researchers at COPSAC for providing the data to make this research possible. We also thank Morten Arendt Rasmussen, Daniela Rago and Donata Vercelli for comments and suggestions that have substantially improved this work. 

\bibliographystyle{acmtrans-ims}
\bibliography{Bibliography}

\pagebreak
\renewcommand{\thefigure}{S\arabic{figure}}
\renewcommand{\thesection}{S\arabic{section}}
\setcounter{figure}{0}
\setcounter{table}{0}
\setcounter{section}{0}

\begin{changemargin}{-1cm}{-1cm}
\begin{center}
    {\bf \Large Supplemental material for ``Estimation and inference in metabolomics with non-random missing data and latent factors"}
\end{center}

\section{Notation}
\label{section:Supp:notation}
Besides the notation introduced in Section \ref{subsection:Notation} of the main text, we use the following notation throughout the supplement. Let $\bm{M} \in \mathbb{R}^{n \times m}$. We define $\bm{M}_{i\bigcdot} \in \mathbb{R}^{m}$ and $\bm{M}_{\bigcdot j} \in \mathbb{R}^{n}$ to be the $i$th row and $j$th column of $\bm{M}$, respectively. If $d > 0$ is the dimension of the null space of $\bm{M}^{\T}$, we define $\bm{Q}_M \in \mathbb{R}^{n \times d}$ be a matrix whose columns form an orthonormal basis for the null space of $\bm{M}^{\T}$. If $\bm{X}_1,\bm{X}_2,\ldots \in \mathbb{R}^{r \times s}$ is a sequence of random matrices (or vectors if $s=1$), we let $\bm{X}_n = O_P\left(a_n\right)$ and $\bm{X}_n = o_P\left(a_n\right)$ if $\norm{\bm{X}_n}_2/a_n = O_P(1)$ and $\norm{\bm{X}_n}_2/a_n = o_P(1)$ as $n \to \infty$, respectively.

\section{A bootstrap null distribution for the J statistics}
\label{section:Supp:Jstat}
Fix a $g \in \Missing$, define the null hypothesis $H_{0_g}$ to be that Model \eqref{equation:MissingMech} is correct and let
\begin{align*}
    J_g = n\bar{\bm{h}}_g\left\lbrace \hat{\alpha}_g^{\GMM}, \hat{\delta}_g^{\GMM} \right\rbrace^{\T} \bm{W}_g \bar{\bm{h}}_g\left\lbrace \hat{\alpha}_g^{\GMM}, \hat{\delta}_g^{\GMM} \right\rbrace.
\end{align*}
We show in Corollary \ref{corollary:MetabMiss:Jtest} in Section \ref{section:Supp:EstGMMProof} that when $H_{0_g}$ and additional technical assumptions hold, $J_g \tdist \chi^2_1$ as $n,p \to \infty$. However, we find that we over-reject using this asymptotic distribution, which is consistent with previous practitioners' observations \citep{AntiCons_Jtest2,AntiCons_Jtest1,EmpLikeBoot}. We therefore use \cite{EmpLikeBoot} to develop a bootstrapped null distribution for $J_g$, the details of which are given below.\par
\indent The bootstrap population is the observed sample $\left\lbrace \left( y_{gi},r_{gi},\hat{\bm{u}}_{gi} \right) \right\rbrace_{i \in [n]}$. Let
\begin{align*}
    d\hat{F}_{g}^{\text{sample}}(\bm{x}) = \sum\limits_{i=1}^{n}n^{-1}I\left\lbrace \bm{x}=\left( y_{gi},r_{gi},\hat{\bm{u}}_{gi} \right) \right\rbrace
\end{align*}
be the empirical density that puts weights $1/n$ on each sample point. If $\hat{\bm{u}}_{gi}$ only had two elements, then
\begin{align*}
    \E_{\hat{F}_{g}^{\text{sample}}}\left[\bar{\bm{h}}_g\left\lbrace \hat{\alpha}_g^{\GMM}, \hat{\delta}_g^{\GMM}\right\rbrace \right] = \bm{0}_2.
\end{align*}
However, since $\hat{\bm{u}}_{gi}$ has three elements, the above expectation will not be zero, which would violate the requirement that the population moment be zero under $H_{0_g}$. We therefore use the empirical likelihood to derive an alternative sampling distribution, $\hat{F}_g$, that ensures the bootstrapped population mean of $\bar{\bm{h}}_g\left\lbrace \hat{\alpha}_g^{\GMM}, \hat{\delta}_g^{\GMM}\right\rbrace$ is zero. That is, we define
\begin{align*}
    d\hat{F}_{g}\left(\bm{x}\right) = \sum\limits_{i=1}^n \eta_{gi} I\left\lbrace \bm{x}=\left( y_{gi},r_{gi},\hat{\bm{u}}_{gi} \right) \right\rbrace,
\end{align*}
where
\begin{align*}
    \left\lbrace \eta_{g1},\ldots,\eta_{gn}\right\rbrace = \argmax_{\Delta_1,\ldots,\Delta_n \in [0,1]} \prod_{i=1}^n \Delta_i, \quad \sum\limits_{i=1}^n \Delta_i = 1, \quad \sum\limits_{i=1}^n \Delta_i \bm{h}_{gi}\left\lbrace \hat{\alpha}_g^{\GMM}, \hat{\delta}_g^{\GMM} \right\rbrace = \bm{0}_3.
\end{align*}
We subsequently recompute $J_g$ for each bootstrapped sample drawn from $\hat{F}_{g}$ to derive the bootstrap null distribution.

\section{Estimating latent covariates in models with nuisance covariates}
\label{section:Supp:Nuisance}
Here we extend our estimator for $\bm{C}$ defined in \eqref{equation:Chat} of Section \ref{section:CMNAR} when $\bm{X} = \left(\bm{X}_{\interest}\, \bm{X}_{\nuisance} \right)$, where $\bm{X}_{\interest} \in \mathbb{R}^{n \times d_{\interest}}$ and $\bm{X}_{\nuisance} \in \mathbb{R}^{n \times d_{\nuisance}}$. To do so, let
\begin{align*}
    \bm{\beta}_{g} = \begin{pmatrix}
    \bm{\beta}_{g}^{(\interest)}\\
    \bm{\beta}_{g}^{(\nuisance)}
    \end{pmatrix}, \quad g \in [p].
\end{align*}
We can then re-write \eqref{equation:EstC} as
\begin{align*}
    &\bm{y}_g = P_{X_{\nuisance}}^{\perp}\bm{X}_{\interest}\tilde{\bm{\beta}}_{g}^{(\interest)} + \bm{X}_{\nuisance}\tilde{\bm{\beta}}_{g}^{(\nuisance)} + \bm{C}_2 \bm{\ell}_{g} + \bm{e}_g, \quad \tilde{\bm{\beta}}_{g}^{(\interest)} = \bm{\beta}_{g}^{(\interest)} + \bm{\Omega}\bm{\ell}_{g}, \quad g\in[p]\\
    &\bm{\Omega} = \left( \bm{X}_{\interest}^{\T} P_{X_{\nuisance}}^{\perp} \bm{X}_{\interest}\right)^{-1}\bm{X}_{\interest}^{\T} P_{X_{\nuisance}}^{\perp}\bm{C}, \quad \bm{C}_2 = P_{X}^{\perp}\bm{C}, \quad g\in[p]
\end{align*}
where $\tilde{\bm{\beta}}_{g}^{(\nuisance)}$ is a nuisance parameter. We estimate $\bm{C}_2$ as we did in Section \ref{subsection:Cperp}, where now
\begin{align*}
    \hat{\tilde{\bm{\beta}}}_g = \begin{pmatrix}
    \hat{\tilde{\bm{\beta}}}_g^{(\interest)}\\
    \hat{\tilde{\bm{\beta}}}_g^{(\nuisance)}
    \end{pmatrix}
\end{align*}
and $\hat{\tilde{\bm{\beta}}}_g^{(\interest)} \in \mathbb{R}^{d_{\interest}}$, $\hat{\tilde{\bm{\beta}}}_g^{(\nuisance)} \in \mathbb{R}^{d_{\nuisance}}$. Using the same reasoning as when $\bm{X} = \bm{X}_{\interest}$ in Section \ref{section:CMNAR}, we model $\hat{\tilde{\bm{\beta}}}_g^{(\interest)} \asim \left( \bm{\beta}_{g}^{(\interest)} + \bm{\Omega}\bm{\ell}_{g}, \hat{\bm{v}}_g \right)$, where $\hat{\bm{v}}_g$ is a submatrix of the estimate for the variance defined in \eqref{equation:IPW:Variance} when $g \in \Missing$ or the ordinary least squares estimator for $\V\left( \hat{\tilde{\bm{\beta}}}_g^{(\interest)} \right)$ when $g \in \Observed$ using the design matrix $\left( \bm{X}_{\interest} \, \bm{X}_{\nuisance} \, \hat{\bm{C}}_2\right)$. We subsequently estimate $\bm{\Omega}$ with Algorithm \ref{algorithm:Omega}.

\section{Justification of the estimate for $\V\left( \hat{\bm{\eta}}_g\right)$ defined in \eqref{equation:IPW:Variance}}
\label{section:MetabMiss:Variance}
Here were perform an error analysis to justify using \eqref{equation:IPW:Variance} to estimate $\V\left( \hat{\bm{\eta}}_g\right)$, where $\bm{\eta}_g$ is as defined in Section \ref{section:IPW}. We also define
\begin{align*}
   \bm{s}_{gi} = \bm{z}_i\left(y_{gi} - \bm{z}_i^{\T} \bm{\eta}_{g} \right), \quad \hat{\bm{s}}_{gi} = \bm{z}_i\left(y_{gi} - \bm{z}_i^{\T} \hat{\bm{\eta}}_{g} \right), \quad g \in \Missing; i \in [n]
\end{align*}
where $\hat{\bm{\eta}}_{g}$ is defined in \eqref{equation:IPW:beta.hat}. We assume throughout this section that $\alpha_g$ and $\delta_g$ are known. Unless otherwise stated, all expectations and variances are taken conditional on $\bm{Z}$. By the derivation in Section \ref{subsection:IPW:variance}, our goal is to approximate $\gamma_{gi}^2\E\left( w_{gi}^2\bm{s}_{gi}\bm{s}_{gi}^{\T} \mid \bm{Z} \right)$ for all $g \in \Missing$ and $i \in [n]$.\par
\indent Fix a $g \in \Missing$ and define $\xi_{gi} = \gamma_{gi}w_{gi}$ and $\bm{W}_g = \diag\left( \xi_{g1}, \ldots, \xi_{gn}\right)$. When $\gamma_{gi}, \alpha_g$ and $\delta_g$ are known, the estimator $\hat{\bm{\eta}}_g$ is such that
\begin{align*}
    \sum\limits_{i=1}^n \xi_{gi}\bm{s}_{gi} &= \bm{Z}^{\T} \bm{W}_g\left( \bm{y}_g - \bm{Z}\bm{\eta}_{g}\right) = \bm{Z}^{\T} \bm{W}_g\left( \bm{y}_g - \bm{Z}\hat{\bm{\eta}}_g\right) + \bm{Z}^{\T} \bm{W}_g\bm{Z} \left(\hat{\bm{\eta}}_g - \bm{\eta}_{g} \right)\\
    &= \bm{0} + \bm{Z}^{\T} \bm{W}_g\bm{Z} \left(\hat{\bm{\eta}}_g - \bm{\eta}_{g} \right),
\end{align*}
where the last equality follows by the definition of $\hat{\bm{\eta}}_g$. Therefore, for all $i \in [n]$,
\begin{align*}
    \hat{\bm{s}}_{gi} &= \bm{z}_i \left( y_{gi} - \bm{z}_i^{\T} \bm{\eta}_{g}\right) + \bm{z}_i\bm{z}_i^{\T} \left(\bm{\eta}_{g} -\hat{\bm{\eta}}_g\right) = \bm{s}_{gi} - \bm{z}_i\bm{z}_i^{\T} \left( \bm{Z}^{\T} \bm{W}_g \bm{Z}\right)^{-1}\sum\limits_{j=1}^n \xi_{gj} \bm{s}_{gj}\\
    &= \left\lbrace I_d - \xi_{gi}\bm{z}_i \bm{z}_i^{\T} \left( \bm{Z}^{\T} \bm{W}_g \bm{Z}\right)^{-1}  \right\rbrace \bm{s}_{gi} - \bm{z}_i \bm{z}_i^{\T} \left( \bm{Z}^{\T} \bm{W}_g \bm{Z}\right)^{-1} \sum\limits_{j \neq i} \xi_{gj} \bm{s}_{gj}\\
    &= \bm{z}_i\left\lbrace 1-\xi_{gi}\bm{z}_i^{\T} \left( \bm{Z}^{\T} \bm{W}_g \bm{Z}\right)^{-1}\bm{z}_i \right\rbrace e_{gi} - \bm{z}_i \bm{z}_i^{\T} \left( \bm{Z}^{\T} \bm{W}_g \bm{Z}\right)^{-1} \sum\limits_{j \neq i} \xi_{gj} \bm{s}_{gj}\\
    &= \left( 1-h_{gi}\right)\bm{s}_{gi} - \bm{z}_i \bm{z}_i^{\T} \left( \bm{Z}^{\T} \bm{W}_g \bm{Z}\right)^{-1} \sum\limits_{j \neq i} \xi_{gj} \bm{s}_{gj}
\end{align*}
where $h_{gi}$ is the $i$th leverage score of $\bm{W}_g^{1/2}\bm{Z}$. For $\bm{A}_i = \sum\limits_{j \neq i} \xi_{gj} \bm{s}_{gj}$, the naive plug-in estimator can be expressed as
\begin{align}
    \gamma_{gi}^2 w_{gi}^2 \hat{\bm{s}}_{gi}\hat{\bm{s}}_{gi}^{\T} =& \xi_{gi}^2 \left( 1-h_{gi}\right)^2 \bm{s}_{gi}\bm{s}_{gi}^{\T} - \left(1-h_{gi} \right)\bm{z}_i \bm{z}_i^{\T} \left( \bm{Z}^{\T} \bm{W}_g \bm{Z}\right)^{-1}  \bm{A}_i\xi_{gi}^2\bm{s}_{gi}^{\T}\nonumber\\
    &-\left\lbrace \left(1-h_{gi} \right)\bm{z}_i \bm{z}_i^{\T} \left( \bm{Z}^{\T} \bm{W}_g \bm{Z}\right)^{-1} \bm{A}_i \xi_{gi}^2\bm{s}_{gi}^{\T} \right\rbrace^{\T}\nonumber\\
    \label{equation:Var:Naive}
    &+\xi_{gi}^2\left(1-h_{gi} \right)^2 \bm{z}_i \bm{z}_i^{\T} \left( \bm{Z}^{\T} \bm{W}_g \bm{Z}\right)^{-1} \bm{A}_i \bm{A}_i^{\T} \left( \bm{Z}^{\T} \bm{W}_g \bm{Z}\right)^{-1}\bm{z}_i \bm{z}_i^{\T} ,
\end{align}
and the corrected estimator as
\begin{align}
  \left(1-h_{gi} \right)^{-2}\gamma_{gi}^2 w_{gi}^2 \hat{\bm{s}}_{gi}\hat{\bm{s}}_{gi}^{\T} = & \xi_{gi}^2 \bm{s}_{gi}\bm{s}_{gi}^{\T} - \left(1-h_{gi} \right)^{-1}\bm{z}_i \bm{z}_i^{\T} \left( \bm{Z}^{\T} \bm{W}_g \bm{Z}\right)^{-1}  \bm{A}_i\xi_{gi}^2\bm{s}_{gi}^{\T}\nonumber\\
  &- \left\lbrace \left(1-h_{gi} \right)^{-1}\bm{z}_i \bm{z}_i^{\T} \left( \bm{Z}^{\T} \bm{W}_g \bm{Z}\right)^{-1}  \bm{A}_i\xi_{gi}^2\bm{s}_{gi}^{\T} \right\rbrace^{\T}\nonumber\\
  \label{equation:Var:Corrected}
  &+ \xi_{gi}^2 \bm{z}_i \bm{z}_i^{\T} \left( \bm{Z}^{\T} \bm{W}_g \bm{Z}\right)^{-1} \bm{A}_i \bm{A}_i^{\T} \left( \bm{Z}^{\T} \bm{W}_g \bm{Z}\right)^{-1}\bm{z}_i \bm{z}_i^{\T}.
\end{align}
\indent We first note that $\bm{A}_i$ is independent of $r_{gi}$ and $y_{gi}$ and $\E\left( \bm{A}_i\right) = \bm{0}$. Therefore, if we ignore the uncertainty in $n^{-1}\bm{Z}^{\T}\bm{W}_g \bm{Z}$, the second and third terms in \eqref{equation:Var:Naive} and \eqref{equation:Var:Corrected} will have expectation $\bm{0}$. The last terms are positive semi-definite, where since $\left(r_{g1},y_{g1} \right),\ldots,\left(r_{gn},y_{gn} \right)$ are independent, $\left(\bm{Z}^{\T} \bm{W}_g\bm{Z} \right)^{-1}\bm{A}_i\bm{A}_i \left(\bm{Z}^{\T} \bm{W}_g\bm{Z} \right)^{-1}$ will have eigenvalues that are $O_P\left( n^{-1}\right)$ under suitable regularity conditions. However, the first term in \eqref{equation:Var:Naive} will tend to be small and therefore underestimate $\gamma_{gi}^2\E\left(r_{gi}w_{gi}^2 \bm{s}_{gi}\bm{s}_{gi}^{\T} \right)$, especially when $d$ or the weights $w_{gi}$ are large, since this will increase leverage scores. Further, unlike data whose missing values are missing at random, the leverage scores are correlated with $\bm{s}_{gi}$, where large values of $\bm{s}_{gi}$ typically imply the weights will be large. These two facts cause the naive plug-in sandwich estimator to underestimate the variance. The corrected estimator circumvents these issues because the first term in \eqref{equation:Var:Corrected} is exactly the term whose expectation we are attempting to estimate.

\section{Additional simulation results}
\label{section:Supp:AddSimResults}

\subsection{Accuracy of our estimates for $\alpha_{g}$ and $\delta_{g}$}
\label{subsection:Supp:SimMisParams}
We simulated and analyzed 20 datasets according to \eqref{equation:MetabMiss:Simulation} with $\Psi(x) = F_4(x)$ to compare our method's (HB-GMM) estimates for $\alpha_{g}$ and $\delta_{g}$ defined in \eqref{equation:HBGMM_estimates} to the those proposed in \cite{GMM_MNAR}. The latter are simply $\hat{\alpha}_g^{\GMM}$ and $\hat{\delta}_g^{\GMM}$ defined in \eqref{equation:TwoStepGMM}. We could not compare it to the estimators proposed in \cite{MNAR_SILAC} or \cite{SILAC_MNAR2}, because they assume the missingness mechanisms are the same for each analyte. We remark that the estimators for $\alpha_{g}$ and $\delta_{g}$ these authors proposed in the aforementioned articles rely on the assumption that $y_{gi}$ is normally distributed.\par
\indent  The results are given in Figure \ref{Figure:MissingParams}. Our Bayesian estimator (HB-GMM) outperforms the standard two-step generalized method of moments estimator proposed in \cite{GMM_MNAR} (GMM), which illustrates the advantages of pooling information across metabolites.

\begin{figure}[b!]
\centering
    \includegraphics[scale=0.4]{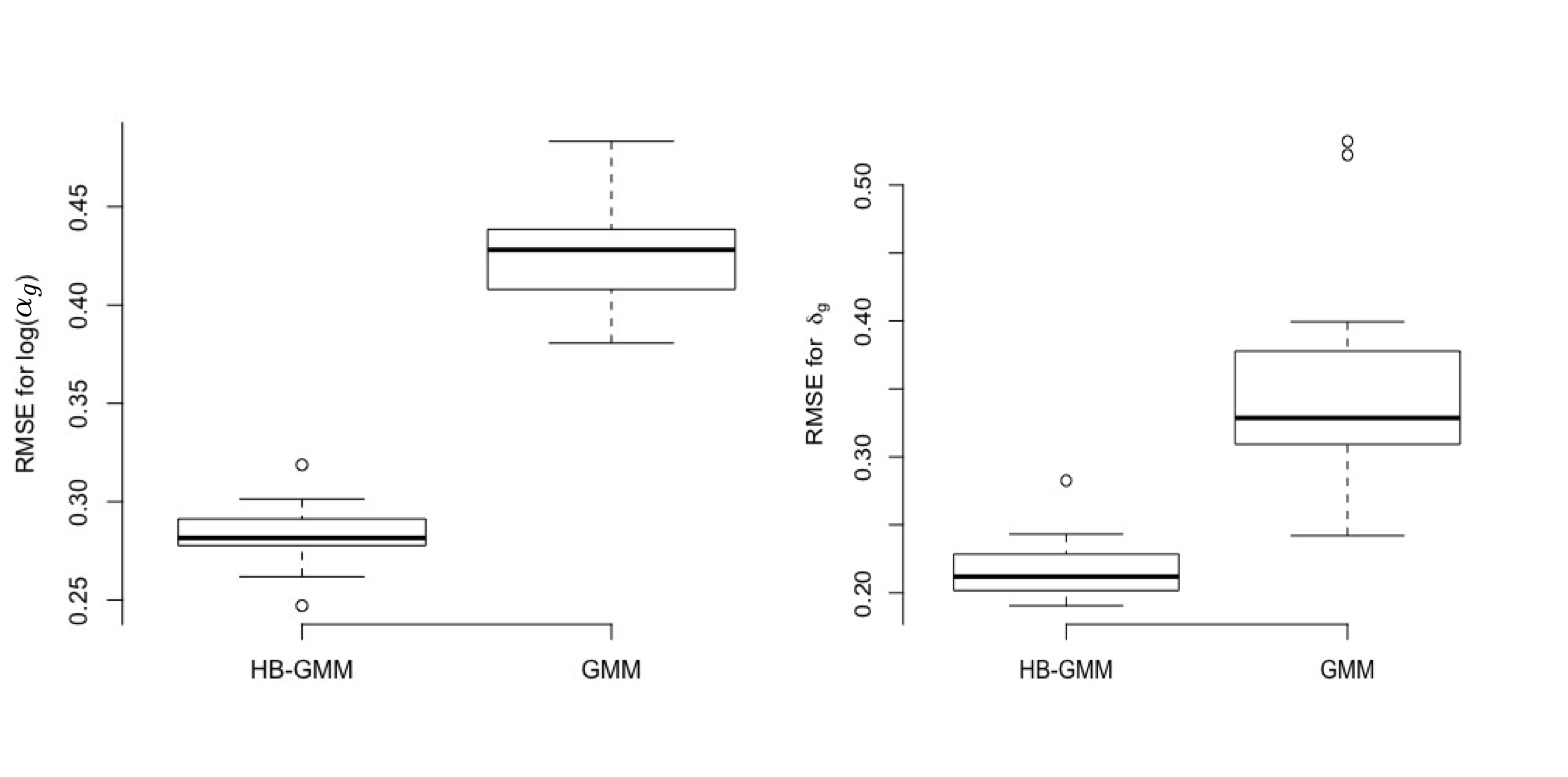}
    \caption{The root mean squared error (RMSE) $\left[ \lvert \Missing \rvert^{-1}\sum\limits_{g \in \Missing}\left\lbrace \log\left( \hat{\alpha}_g\right) - \log\left( \alpha_g\right) \right\rbrace^2 \right]^{1/2}$ (left) and $\left\lbrace \lvert \Missing \rvert^{-1}\sum\limits_{g \in \Missing}\left(  \hat{\delta}_g -  \delta_g \right)^2 \right\rbrace^{1/2}$ (right) over 20 simulations. The estimates $\log\left( \hat{\alpha}_g\right)$ and $\hat{\delta}_g$ were either our Bayesian estimators proposed in \eqref{equation:HBGMM_estimates:alphadelta} (HB-GMM), or the two-step estimator proposed in \cite{GMM_MNAR} (GMM).}\label{Figure:MissingParams}
\end{figure}

\subsection{$J$ statistics from simulated data}
\label{seuction:Supp:JstatSim}
Here we examined the uniformity of the $J$ statistics \textit{P} values from the data simulated in Section \ref{section:simulations} when $\Psi(x) = F_4(x)$. The results are given in Figure \ref{Figure:MetaMiss:Jtest}, where the \textit{P} values appear to be uniformly distributed.

\begin{figure}[t!]
   \centering
   \includegraphics[scale=0.3]{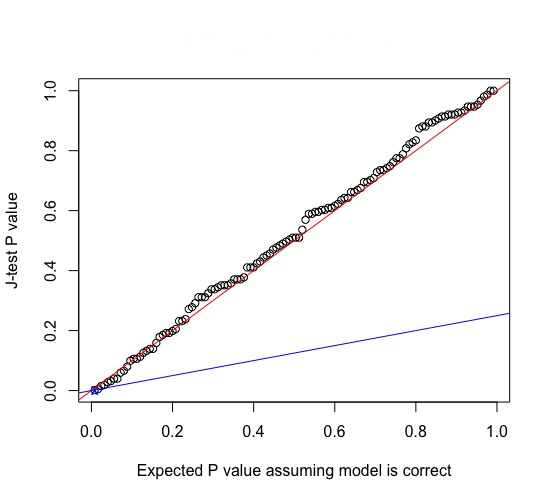}
   \includegraphics[scale=0.3]{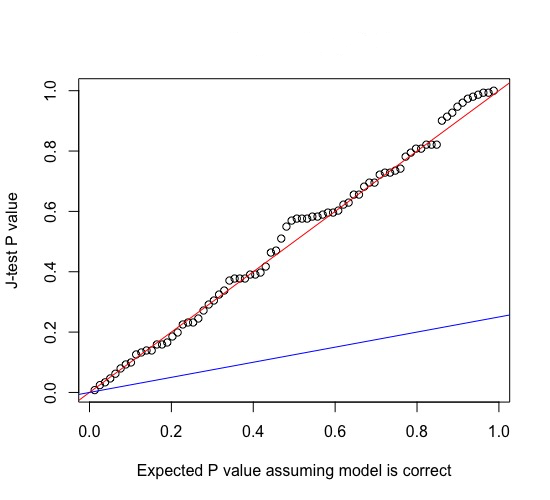}
   \caption{Bootstrapped $J$ statistic \textit{P} values from a single simulated dataset for metabolites with between 5\% and 35\% missing data (left) and 35\% and 50\% missing data (right). The red line is the line $y=x$ and the blue line is the line $y=x/4$. Points that lie below the blue line would be rejected by the Benjamini-Hochberg procedure at a level $\alpha = 0.25$, and are marked with a blue ``$\times$".}\label{Figure:MetaMiss:Jtest}
\end{figure}

\subsection{Simulations with smaller sample sizes}
\label{subsection:Supp:Smalln}
Here we analyze additional data simulated according to \eqref{equation:MetabMiss:Simulation} with $\Psi(x)=\exp(x)/\left\lbrace 1+\exp(x) \right\rbrace$ and $n=300$ or $n=100$ to demonstrate our method's performance on data with smaller sample sizes. Just like we did in Section \ref{subsection:SimResults}, we analyzed each of the 60 simulated datasets for each value of $n$ with MetabMiss by making the incorrect assumption that $\Psi(x)=F_4(x)$. For simplicity of presentation, we only report each method's potential to estimate and perform inference on $\beta_{g}$ for $g \in \Missing$. The results for each set of simulations are given in Figures \ref{Figure:MetaMiss:FDR_Supp} and \ref{Figure:MetaMiss:CI_Supp}.

\begin{figure}[b!]
   \centering
   \includegraphics[scale=0.5]{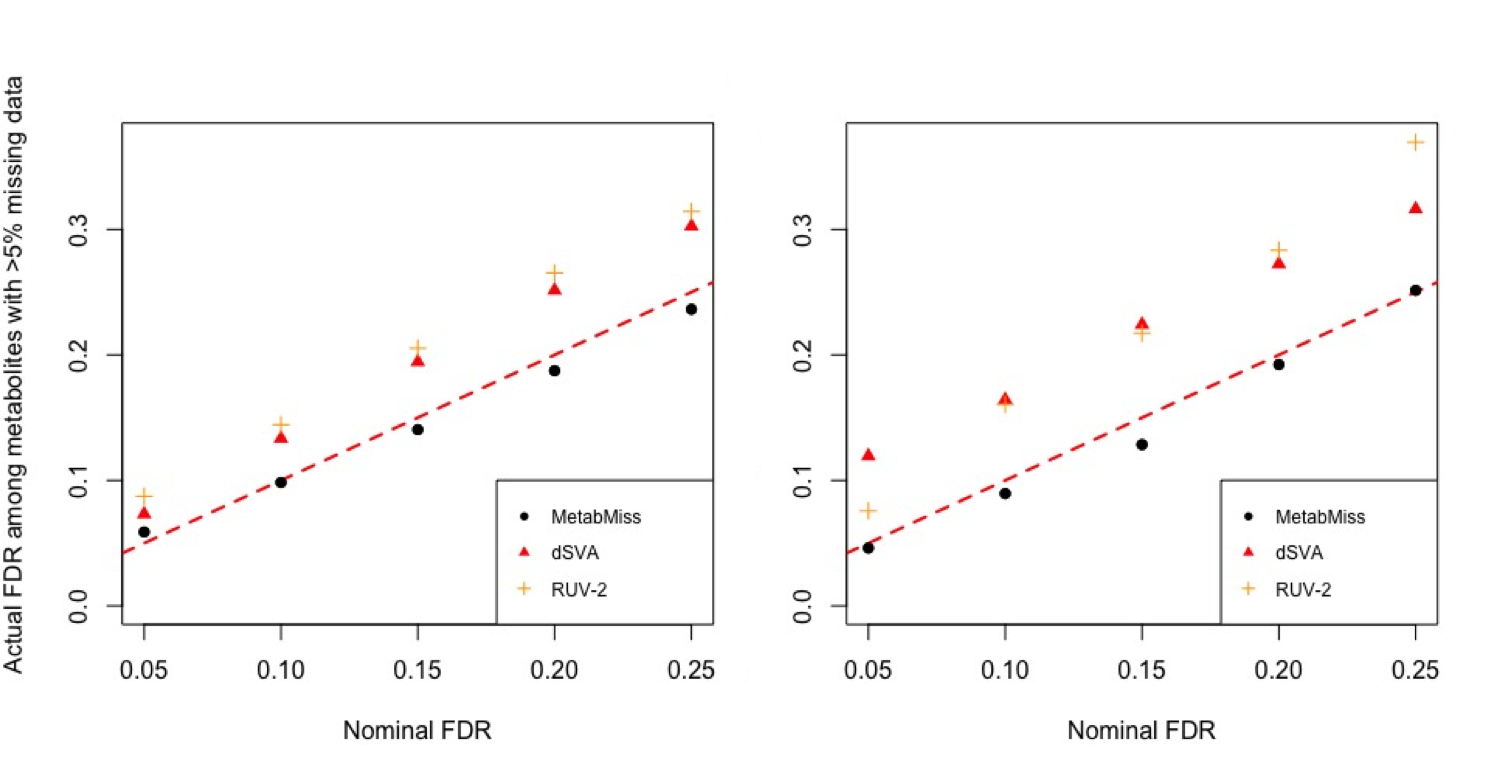}
   \caption{The false discovery rate (FDR) among rejected metabolites with $>5\%$ but $\leq 50\%$ missing data as a function of q-value threshold (Nominal FDR) when $n=300$ (left) and when $n=100$ (right). IRW-SVA's and RUV-4's performance was uniformly worse that dSVA's in both simulation settings. The dashed red line is the line $y=x$.}\label{Figure:MetaMiss:FDR_Supp}
\end{figure}

\begin{figure}[b!]
   \centering
   \includegraphics[scale=0.20]{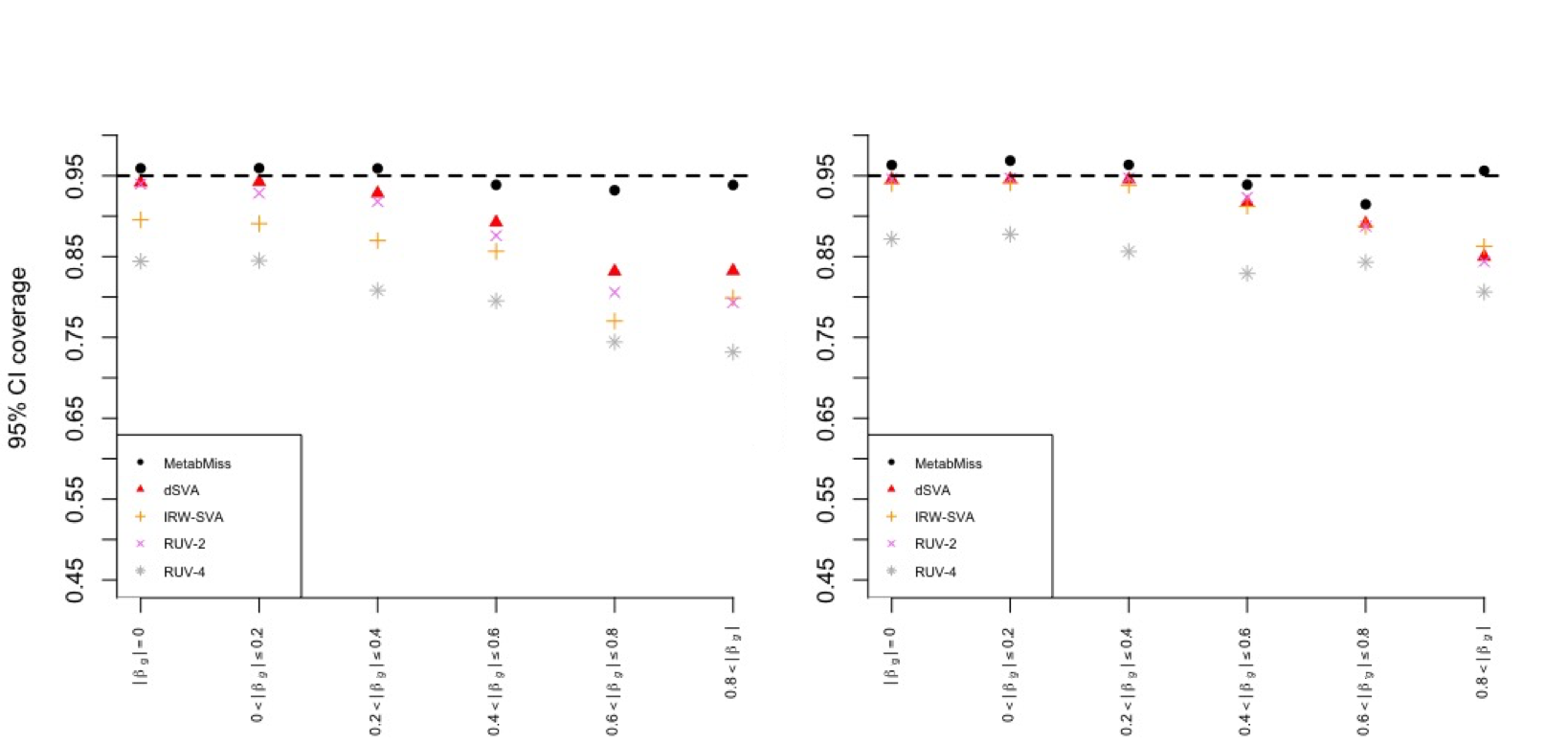}
   \caption{The 95\% confidence interval coverage when $n=300$ (left) and when $n=100$ (right). These plots are analogous to Figure \ref{Figure:CI}.}\label{Figure:MetaMiss:CI_Supp}
\end{figure}

\section{A mathematical justification of Algorithm \ref{algorithm:SelectU}}
\label{section:Supp:SelectionProof}
In this section, we justify the instrumental variable selection step Algorithm \ref{algorithm:SelectU}. We first prove the asymptotic properties of $\hat{\bm{C}}_{\miss}$ in Lemma \ref{lemma:MetabMiss:PCA} and then prove in Theorem \ref{theorem:MetabMiss:OLS_MNAR} that the \textit{P} value $p_{g,k}$ in Step \ref{item:InstrumentSelection:Pvalue} is asymptotically uniform under the null hypothesis that the $k$th column of $\hat{\bm{C}}_{\miss}$ is independent of $\bm{y}_g$. We lastly prove in Theorem \ref{theorem:Supp:OrderEffects} that under weak conditions, one is justified ignoring the uncertainty in the estimated indices $g_1$ and $g_2$ from Step \ref{item:InstrumentSelection:Indices} when deriving the asymptotic distribution in \eqref{equation:AsymptoticDistn}. We first state the assumptions that we will use throughout this section, as well as Sections \ref{section:Supp:NormalProof} are \ref{section:Supp:EstGMMProof}.
\begin{assumption}
\label{assumption:Basic}
Suppose $\bm{y}_g=\left(y_{g1}\cdots y_{gn}\right)^{\T} = \bm{Z}\bm{\xi}_{g} + \tilde{\bm{C}}\tilde{\bm{\ell}}_{g} + \bm{e}_g$ for all $g \in [p]$, where $\bm{e}_g=\left(e_{g1}\cdots e_{gn}\right)^{\T}$, $\bm{Z} \in \mathbb{R}^{n \times t}$ for $t \geq 1$ and $\tilde{\bm{C}} \in \mathbb{R}^{n \times \tilde{K}}$ for $\tilde{K} \geq 1 \wedge (3-t)$, and that Model \eqref{equation:MissingMech} holds for $g \in \Missing$. Define $\tilde{\bm{L}}_{\Observed}$ to be the sub-matrix of $\tilde{\bm{L}} = \left(\tilde{\bm{\ell}}_1 \cdots \tilde{\bm{\ell}}_p \right)^{\T}$, restricted to the rows $g \in \Observed$, and let $p_s = \abs{\Observed}$. Then following hold for some constant $c_1 > 1$:
\begin{enumerate}[label=(\roman*)]
    \item $\bm{Z} \in \mathbb{R}^{n \times t}$ is non-random, $\norm{\bm{Z}}_{\infty}\leq c_1$, $\bm{1}_n \in \im\left(\bm{Z}\right)$ and $\mathop{\lim}\limits_{n \to \infty} n^{-1}\bm{Z}^{\T} \bm{Z} = \bm{\Sigma}_Z \succ \bm{0}$.
    \item Both $p$ and $p_s$ are non-decreasing function of $n$ and $\mathop{\lim}\limits_{n \to \infty}p/n, \mathop{\lim}\limits_{n \to \infty}p_s/n \in \left( 0,\infty\right)$.
    \item $\tilde{\bm{L}}_{\Observed}^{\T} \tilde{\bm{L}}_{\Observed}$ is diagonal with non-increasing elements and $\mathop{\lim}\limits_{p_s \to \infty}p_s^{-1}\tilde{\bm{L}}_{\Observed}^{\T} \tilde{\bm{L}}_{\Observed} = \diag\left(\lambda_1,\ldots,\lambda_{\tilde{K}}\right)$. For $K_{\miss} \in \left[1,\tilde{K}\right]$ and $\lambda_{\tilde{K}+1}=0$, the eigenvalues satisfy $\lambda_1 > \cdots > \lambda_{K_{\miss}} > \lambda_{\left(K_{\miss}+1\right)} \geq 0$.\label{item:AssumptionBasics:L}
    \item $\tilde{\bm{C}} = \bm{Z}\bm{A} + \bm{\Xi}$ for some non-random matrix $\bm{A} \in \mathbb{R}^{t \times \tilde{K}}$. The random matrix $\bm{\Xi}$ is independent of $\bm{e}_1,\ldots,\bm{e}_p$, $ \bm{\Xi}_{1 \bigcdot},\ldots,\bm{\Xi}_{n \bigcdot}$ are independent and identically distributed and $\bm{\Xi}_{1 \bigcdot} \asim \left(\bm{0}_{\tilde{K}}, I_{\tilde{K}} \right)$. Further, the entries of $\bm{\Xi}$ have uniformly bounded eighth moments.
    \item $e_{g1},\ldots,e_{gn} \asim \left( 0,\sigma_g^2\right)$ are independent and identically distributed with uniformly bounded fourth moments for all $g \in [p]$. Further, $\bm{e}_1,\ldots,\bm{e}_p$ are independent.
    \item The function $\Psi(x):\mathbb{R} \to (0,1)$ defined in \eqref{equation:MissingMech} is continuous and strictly increasing, where $\mathop{\lim}\limits_{x \to -\infty}\Psi(x)=0$ and $\mathop{\lim}\limits_{x \to \infty}\Psi(x)=1$. Further, $\alpha_{g} > 0$ for all $g \in \Missing$.
    \item If $g \in \Observed$, $\Prob\left(r_{gi}=1\right)=1$ for all $i\in[n]$.\label{item:AssumptionBasics:MCAR}
\end{enumerate}
\end{assumption}

\begin{remark}
\label{remark:Z}
The covariates $\bm{Z}$ are a set of observed covariates that systematically effect $\bm{Y}$ and can be used as observed instruments when estimating the missingness mechanisms. Since such factors are rarely observed, we had assumed for simplicity that $\bm{Z}=\bm{1}_n$ in Sections \ref{section:RoadMap} and \ref{section:HBGMM}. We let $\bm{Z}$ include more than the intercept to allow for the possibility the practitioner observes additional covariates that can be used as instruments.
\end{remark}

\begin{remark}
\label{remark:C}
$\tilde{\bm{C}}$ in Assumption \ref{assumption:Basic} is not the same $\bm{C}$ defined in \eqref{equation:ModelForData}. It is instead the concatenation of $\bm{C}$ and $\bm{X}_{-\bm{Z}}$, the columns of $\bm{X}$ that are not a part of $\bm{Z}$.
\end{remark}

\begin{remark}
\label{remark:LIdentify}
Since $\tilde{\bm{L}}$ is identifiable up to multiplication by an invertible matrix on the right, the assumption that assumption that $\tilde{\bm{L}}^{\T}\tilde{\bm{L}}$ is diagonal with decreasing elements and $\C\left(\bm{\Xi}_{i \bigcdot}\right)=I_{\tilde{K}}$ for all $i\in[n]$ is without loss of generality.
\end{remark}

\begin{remark}
\label{remark:L}
Since $\tilde{\bm{C}} = \left(\bm{C}\, \bm{X}_{-\bm{Z}}\right)$ and $\bm{X}_{-\bm{Z}}$ will typically have a weak effect on $\bm{Y}$, the smallest eigenvalues of $p_s^{-1}\tilde{\bm{L}}^{\T}\tilde{\bm{L}}$ will tend to converge to 0 as $p_s$ gets large. Therefore, we only require a subset of $\lambda_1,\ldots,\lambda_{\tilde{K}}$ be bounded away from 0 in Item \ref{item:AssumptionBasics:L} in Assumption \ref{assumption:Basic}.
\end{remark}

\begin{remark}
\label{remark:AssumptionBasics:MCAR}
Item \ref{item:AssumptionBasics:MCAR} simplifies the problem by assuming that metabolites whose data are used to compute $\hat{\bm{C}}_{\miss}$ have no missing data. While this is typically not true in real data, $\epsilon_{\miss}$ in Algorithm \ref{algorithm:MissMechOverview} is set to be small enough that any bias incurred due to data that are missing not at random is negligible in practice.
\end{remark}

\begin{lemma}
\label{lemma:MetabMiss:Pz}
Fix a $g \in \Missing$. If Assumption \ref{assumption:Basic} holds, then $\Prob\left( r_{gi} = 1 \mid \bm{Z}_{i \bigcdot}\right) \geq c$ for all $i \in [n]$ for some constant $c > 0$ that does not depend on $n$.
\end{lemma}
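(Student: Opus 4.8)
The plan is to condition on the latent signal $y_{gi}$ and exploit that, for a fixed metabolite $g$, the systematic part of $y_{gi}$ is bounded uniformly in $i$ and $n$ while its mean-zero stochastic part has bounded variance; hence $y_{gi}$ cannot be driven so far below $\delta_g$ that $\Psi$ is forced to zero. Fix $g \in \Missing$. Because $\bm{Z}$ is non-random, conditioning on $\bm{Z}_{i\bigcdot}$ does not disturb \eqref{equation:MissingMech}, so the tower property gives
\[
\Prob\left(r_{gi}=1 \mid \bm{Z}_{i\bigcdot}\right) = \E\left[ \Psi\left\lbrace \alpha_g\left( y_{gi} - \delta_g\right)\right\rbrace \mid \bm{Z}_{i\bigcdot}\right].
\]
I would then substitute $\tilde{\bm{C}} = \bm{Z}\bm{A} + \bm{\Xi}$ from Assumption \ref{assumption:Basic}\ref{item:AssumptionBasics:L} into $y_{gi} = \bm{Z}_{i\bigcdot}^{\T}\bm{\xi}_g + \tilde{\bm{C}}_{i\bigcdot}^{\T}\tilde{\bm{\ell}}_g + e_{gi}$ and write $y_{gi} - \delta_g = m_{gi} + V_{gi}$, where $m_{gi} = \bm{Z}_{i\bigcdot}^{\T}\bm{\xi}_g + \bm{Z}_{i\bigcdot}^{\T}\bm{A}\tilde{\bm{\ell}}_g - \delta_g$ is deterministic given $\bm{Z}_{i\bigcdot}$ and $V_{gi} = \bm{\Xi}_{i\bigcdot}^{\T}\tilde{\bm{\ell}}_g + e_{gi}$ satisfies $\E\left(V_{gi}\right) = 0$ and, by the independence and second-moment conditions in Assumption \ref{assumption:Basic}, $\V\left(V_{gi}\right) = \norm{\tilde{\bm{\ell}}_g}_2^2 + \sigma_g^2 =: \sigma_{V,g}^2 < \infty$.

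The next step is to note that $m_{gi}$ is bounded uniformly in $i$ and $n$: since $\norm{\bm{Z}}_{\infty} \le c_1$ and $\bm{\xi}_g, \bm{A}, \tilde{\bm{\ell}}_g, \delta_g$ are fixed for the fixed index $g$, the crude bound $\abs{m_{gi}} \le c_1 \norm{\bm{\xi}_g + \bm{A}\tilde{\bm{\ell}}_g}_1 + \abs{\delta_g} =: M_g$ holds, with $M_g$ not depending on $n$ or $i$. I would then control the lower tail of $V_{gi}$ with Chebyshev's inequality: choosing any $T > 0$ with $\sigma_{V,g}^2/T^2 \le 1/2$ gives $\Prob\left(V_{gi} \ge -T\right) \ge 1/2$. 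On the event $\lbrace V_{gi} \ge -T\rbrace$ we have $\alpha_g\left(m_{gi} + V_{gi}\right) \ge -\alpha_g\left(M_g + T\right)$, so since $\Psi$ is increasing and maps into $(0,1)$,
\[
\E\left[ \Psi\left\lbrace \alpha_g\left( y_{gi} - \delta_g\right)\right\rbrace \mid \bm{Z}_{i\bigcdot}\right] \ge \Psi\left\lbrace -\alpha_g\left(M_g + T\right)\right\rbrace \Prob\left(V_{gi} \ge -T\right) \ge \tfrac{1}{2}\Psi\left\lbrace -\alpha_g\left(M_g + T\right)\right\rbrace.
\]
Setting $c = \tfrac{1}{2}\Psi\left\lbrace -\alpha_g\left(M_g + T\right)\right\rbrace$ then finishes the proof, since $\Psi > 0$ everywhere makes $c > 0$ and $M_g, T, \alpha_g$ depend only on $g$-level quantities.

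There is no deep obstacle here; the argument is elementary, and the only real work is bookkeeping for uniformity. I must confirm that $M_g$ and $\sigma_{V,g}^2$ are genuinely fixed finite constants for the fixed $g$ and do not grow with $n$, which relies on $\tilde{K}$ and $\bm{A}$ being fixed and on the \emph{entrywise} bound $\norm{\bm{Z}}_\infty \le c_1$ rather than an operator-norm bound that could degrade in $n$. The degenerate case $\sigma_{V,g}^2 = 0$ is handled by taking any fixed $T > 0$: then $y_{gi} - \delta_g = m_{gi}$ is deterministic and $\Prob\left(r_{gi}=1 \mid \bm{Z}_{i\bigcdot}\right) = \Psi\left\lbrace \alpha_g m_{gi}\right\rbrace \ge \Psi\left\lbrace -\alpha_g M_g\right\rbrace > 0$ directly.
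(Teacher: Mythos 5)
Your proof is correct and follows essentially the same route as the paper's: both decompose $y_{gi}$ into a systematic part $\bm{Z}_{i\bigcdot}^{\T}\left(\bm{\xi}_g+\bm{A}\tilde{\bm{\ell}}_g\right)$ that is uniformly bounded via $\norm{\bm{Z}}_{\infty}\leq c_1$ plus an identically distributed stochastic part, and then invoke monotonicity and strict positivity of $\Psi$. The only cosmetic difference is in the last step, where you truncate with Chebyshev to produce the explicit constant $\tfrac{1}{2}\Psi\left\lbrace-\alpha_g\left(M_g+T\right)\right\rbrace$, whereas the paper simply observes that $\E\left[\Psi\left\lbrace\alpha_g\left(\mu+\tilde{e}_{g1}-\delta_g\right)\right\rbrace\right]$ is a fixed positive number independent of $i$ and $n$.
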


\begin{proof}
Define $\tilde{\bm{\xi}}_{g} = \bm{\xi}_{g}+\bm{A}\tilde{\bm{\ell}}_{g}$. By Assumption \ref{assumption:Basic}, $\tilde{e}_{g1} = y_{g1}- \bm{Z}_{1 \bigcdot}^{\T} \tilde{\bm{\xi}}_{g} ,\ldots,\tilde{e}_{gn} =y_{gn}- \bm{Z}_{n \bigcdot}^{\T} \tilde{\bm{\xi}}_{g}$ are identically distributed. Let $\mu_n = \mathop{\min}\limits_{i \in [n]}\left(  \bm{Z}_{i \bigcdot}^{\T} \tilde{\bm{\xi}}_{g}\right)$. Note that $\mu_n \geq \mu$ for some constant $\mu > -\infty$ because the entries of $\bm{Z}$ are uniformly bounded. Then because $\Psi(x)$ is strictly increasing and non-zero, for any $n > 0$ and $i \in [n]$,
\begin{align*}
    \Prob\left( r_{gi}=1 \mid   \bm{Z}_{i \bigcdot}\right) &= \E\left[ \Psi\left\lbrace \alpha_{g}\left( \bm{Z}_{i \bigcdot}^{\T} \tilde{\bm{\xi}}_{g} + \tilde{e}_{gi}  - \delta_{g}\right) \right\rbrace \mid  \bm{Z}_{i \bigcdot} \right] \geq \E\left[ \Psi\left\lbrace \alpha_{g}\left( \mu + \tilde{e}_{gi}  - \delta_{g} \right) \right\rbrace \right]\\
     &= \E\left[ \Psi\left\lbrace \alpha_{g}\left( \mu + \tilde{e}_{g1}  - \delta_{g} \right) \right\rbrace \right] > 0.
\end{align*}
\end{proof}

We now prove a useful lemma that describes the properties of $\hat{\bm{C}}_{\miss}$, defined in Section \ref{subsection:EstimatingInstruments}, when $\epsilon_{\miss} = 0$. For notational convenience, we define $K \leftarrow \tilde{K}$, $\bm{C} \leftarrow \tilde{\bm{C}}$, $\bm{\ell}_g \leftarrow \tilde{\bm{\ell}}_g$, $\bm{L} \leftarrow \tilde{\bm{L}}$ and $\bm{L}_{\Observed} \leftarrow \tilde{\bm{L}}_{\Observed}$ for the remainder of the supplement.

\begin{lemma}
\label{lemma:MetabMiss:PCA}
Suppose Assumption \ref{assumption:Basic} holds, and let $n^{-1/2}\hat{\bm{C}}_2$ be the first $K$ right singular vectors of $\bm{Y}_{\Observed}P_{Z}^{\perp} = \bm{L}_{\Observed}\left( P_{Z}^{\perp}\bm{C}\right)^{\T} + \bm{E}_{\Observed}P_{Z}^{\perp}$. Define
\begin{align}
\label{equation:MetabMiss:Ctilde}
    \bm{C}_2 = P_{Z}^{\perp}\bm{C}\left( n^{-1}\bm{C}^{\T} P_{Z}^{\perp}\bm{C} \right)^{-1/2}\bm{U}, \quad \tilde{\bm{L}}_{\Observed} = \bm{L}_{\Observed}\left( n^{-1}\bm{C}^{\T} P_{Z}^{\perp}\bm{C} \right)^{1/2}\bm{U},
\end{align}
where $\bm{U}$ is a unitary matrix such that $\tilde{\bm{L}}_{\Observed}^{\T}\tilde{\bm{L}}_{\Observed}$ is diagonal with non-increasing elements. Then
\begin{align}
\label{equation:MetabMiss:C2hat}
    \hat{\bm{C}}_2 = \bm{C}_2\hat{\bm{v}} + n^{1/2}\bm{Q}_{P_{Z}^{\perp}C}\hat{\bm{w}},
\end{align}
where for $\bm{a}_k \in \mathbb{R}^{K}$ the $k$th standard basis vector,
\begin{subequations}
\label{equation:MetabMiss:wvResults}
\begin{align}
    \label{equation:MetabMiss:wvResults:bound}
    &\norm{\hat{\bm{v}}_{\bigcdot k} - \bm{a}_k}_2 = O_P\left\lbrace \left(np_s\right)^{-1/2} \right\rbrace, \quad \norm{\hat{\bm{w}}_{\bigcdot k}}_2 = O_P\left( p_s^{-1/2}\right), \quad k\in\left[K_{\miss}\right]\\
    \label{equation:MetabMiss:wvResults:w}
    & \norm{ \hat{\bm{w}}_{\bigcdot k} - \left(n\lambda_k p_s\right)^{-1} \bm{Q}_{P_{Z}^{\perp}C}^{\T} P_{Z}^{\perp}\bm{E}_{\Observed}^{\T} \left( n^{1/2}\tilde{\bm{L}}_{\Observed} + n^{-1/2}\bm{E}_{\Observed}P_{Z}^{\perp}\bm{C}_2 \right)\hat{\bm{v}}_{\bigcdot k} }_2 = O_P\left\lbrace \left(np_s\right)^{-1/2} \right\rbrace, \quad k\in \left[K_{\miss}\right].
\end{align}
\end{subequations}
\end{lemma}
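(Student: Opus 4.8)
The plan is to treat this as a spiked singular-subspace perturbation problem. First I would rewrite the data matrix using \eqref{equation:MetabMiss:Ctilde} as $\bm{M}:=\bm{Y}_{\Observed}P_{Z}^{\perp}=\tilde{\bm{L}}_{\Observed}\bm{C}_2^{\T}+\bm{E}_{\Observed}P_{Z}^{\perp}$, and record the three structural facts that drive everything: $n^{-1/2}\bm{C}_2$ has orthonormal columns, $\tilde{\bm{L}}_{\Observed}^{\T}\tilde{\bm{L}}_{\Observed}$ is diagonal with entries of order $p_s$ whose leading $K_{\miss}$ values are asymptotically distinct by Assumption \ref{assumption:Basic}\ref{item:AssumptionBasics:L}, and $\im(\bm{C}_2)=\im(P_{Z}^{\perp}\bm{C})$ so that the columns of $n^{-1/2}\bm{C}_2$ together with the columns of $\bm{Q}_{P_{Z}^{\perp}C}$ form an orthonormal basis of $\mathbb{R}^n$. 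Since $n^{-1/2}\hat{\bm{C}}_2$ collects the leading $K$ eigenvectors of $\bm{M}^{\T}\bm{M}$, the decomposition $\hat{\bm{C}}_2=\bm{C}_2\hat{\bm{v}}+n^{1/2}\bm{Q}_{P_{Z}^{\perp}C}\hat{\bm{w}}$ in \eqref{equation:MetabMiss:C2hat} is just the resolution of each eigenvector into its in-subspace and out-of-subspace parts, so $\hat{\bm{v}}$ and $\hat{\bm{w}}$ are uniquely determined and the content of the lemma is a first-order expansion of them.

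Second, I would assemble the probabilistic inputs. Using the moment bounds in Assumption \ref{assumption:Basic} together with standard random-matrix estimates ($\norm{\bm{E}_{\Observed}P_{Z}^{\perp}}_2=O_P(n^{1/2})$ and $\norm{\bm{E}_{\Observed}^{\T}\bm{E}_{\Observed}}_2=O_P(n)$ by Bai--Yin-type bounds, and a law of large numbers giving $n^{-1}\bm{\Xi}^{\T}P_{Z}^{\perp}\bm{\Xi}\to I_K$, so the random signal subspace $P_{Z}^{\perp}\bm{C}=P_{Z}^{\perp}\bm{\Xi}$ is well conditioned), I would control the signal, noise, and cross terms of $\bm{M}^{\T}\bm{M}$: the signal eigenvalues are of order $np_s$, the noise bulk is of order $n$, and the off-diagonal blocks coupling $\im(\bm{C}_2)$ to $\im(\bm{Q}_{P_{Z}^{\perp}C})$ and within $\im(\bm{C}_2)$ are of order $n^{3/2}p_s^{1/2}$. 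Weyl's inequality then gives $\hat{\sigma}_k^2=np_s\lambda_k\{1+O_P((np_s)^{-1/2})\}$ for $k\in[K_{\miss}]$, while Davis--Kahan gives $o_P(1)$ convergence of the leading singular subspace; combined with the strict separation $\lambda_1>\cdots>\lambda_{K_{\miss}}$ this upgrades to individual alignment $\hat{\bm{v}}_{\bigcdot k}\to\bm{a}_k$ after the usual sign convention.

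Third, I would extract the sharp rates and the explicit leading term by working with the exact eigenvalue equation rather than a black-box perturbation bound. Letting $\hat{\bm{c}}_k$ denote the $k$th column of $\hat{\bm{C}}_2$, projecting $\bm{M}^{\T}\bm{M}\hat{\bm{c}}_k=\hat{\sigma}_k^2\hat{\bm{c}}_k$ onto $\im(\bm{Q}_{P_{Z}^{\perp}C})$ and using $\bm{Q}_{P_{Z}^{\perp}C}^{\T}\bm{C}_2=\bm{0}$ annihilates the signal and yields the closed form
\begin{align*}
\hat{\bm{w}}_{\bigcdot k}=\frac{1}{\hat{\sigma}_k^2 n^{1/2}}\bm{Q}_{P_{Z}^{\perp}C}^{\T}P_{Z}^{\perp}\bm{E}_{\Observed}^{\T}\left( n\tilde{\bm{L}}_{\Observed}\hat{\bm{v}}_{\bigcdot k}+\bm{E}_{\Observed}P_{Z}^{\perp}\hat{\bm{c}}_k\right).
\end{align*}
Substituting $\hat{\bm{c}}_k=\bm{C}_2\hat{\bm{v}}_{\bigcdot k}+n^{1/2}\bm{Q}_{P_{Z}^{\perp}C}\hat{\bm{w}}_{\bigcdot k}$, replacing $\hat{\sigma}_k^2$ by $np_s\lambda_k$, and absorbing the $\hat{\bm{w}}$-dependent piece (smaller because $\norm{\hat{\bm{w}}_{\bigcdot k}}_2=O_P(p_s^{-1/2})$) into the remainder reproduces exactly the leading term of \eqref{equation:MetabMiss:wvResults:w} with error $O_P((np_s)^{-1/2})$; the crude bound $\norm{\hat{\bm{w}}_{\bigcdot k}}_2=O_P(p_s^{-1/2})$ in \eqref{equation:MetabMiss:wvResults:bound} follows from the same identity by a direct norm estimate. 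The analogous projection onto $\im(\bm{C}_2)$ gives a $K$-dimensional fixed-point equation for $\hat{\bm{v}}_{\bigcdot k}$ whose off-diagonal driving term is $O_P(n^{3/2}p_s^{1/2})$ against an eigengap of order $n^2 p_s$, yielding $\norm{\hat{\bm{v}}_{\bigcdot k}-\bm{a}_k}_2=O_P((np_s)^{-1/2})$.

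The main obstacle is the self-consistent nature of these identities: the closed form for $\hat{\bm{w}}_{\bigcdot k}$ contains both $\hat{\bm{c}}_k$ and $\hat{\sigma}_k^2$, so one must feed the crude Davis--Kahan/Weyl bounds back into the exact equations and verify that the induced errors are genuinely of smaller order than the claimed rates, in particular that the relative error in $\hat{\sigma}_k^2$ does not contaminate the $O_P((np_s)^{-1/2})$ remainder. Two further points demand care throughout. The quantities $\bm{C}_2$, $\tilde{\bm{L}}_{\Observed}$ and the $\lambda_k$ are themselves random through $P_{Z}^{\perp}\bm{C}=P_{Z}^{\perp}\bm{\Xi}$, so the concentration statements must be established jointly rather than conditionally on the signal; and the per-column expansions, as opposed to subspace-level statements, genuinely require the strict ordering $\lambda_1>\cdots>\lambda_{K_{\miss}}$, which is precisely why the conclusions are asserted only for $k\in[K_{\miss}]$.
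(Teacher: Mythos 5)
Your proposal is correct and follows essentially the same route as the paper: the paper forms the Gram matrix of $\bm{Y}_{\Observed}P_{Z}^{\perp}$ in the block basis $\left(\bm{C}_2\;\bm{Q}_{P_Z^{\perp}C}\right)$, bounds the signal, noise and cross blocks ($\tilde{\bm{L}}_{\Observed}^{\T}\bm{E}_1$, $\bm{E}_1^{\T}\bm{E}_1$, $\bm{E}_2^{\T}\bm{E}_1$, $\bm{E}_2^{\T}\bm{E}_2$), applies Weyl's inequality for the eigenvalues, and then solves the exact partitioned eigenvector equations $\hat{\bm{w}}_{\bigcdot k}=\left(\hat{\mu}_k I-\bm{D}\right)^{-1}\bm{B}^{\T}\hat{\bm{v}}_{\bigcdot k}$ self-consistently using the eigengap in Assumption \ref{assumption:Basic}\ref{item:AssumptionBasics:L} — which is exactly your projection of $\bm{M}^{\T}\bm{M}\hat{\bm{c}}_k=\hat{\sigma}_k^2\hat{\bm{c}}_k$ onto the two subspaces and the same bootstrap of crude bounds into the closed-form identity.
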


\begin{proof}
For notational simplicity, we drop the subscript $\Observed$ and redefine $p \leftarrow p_s$, $\lambda_k \leftarrow n\lambda_k$ for all $k \in [K]$ and $\bm{C}_2 \leftarrow n^{-1/2}\bm{Q}_Z^{\T}\bm{C}_2$. Therefore, $\bm{Y}\bm{Q}_Z^{\T} = n^{1/2}\tilde{\bm{L}}\bm{C}_2^{\T} + \bm{E}\bm{Q}_Z$, $\bm{C}_2^{\T} \bm{C}_2 = I_K$ and $np^{-1}\tilde{\bm{L}}^{\T} \tilde{\bm{L}} = \diag\left(\gamma_1,\ldots,\gamma_K \right)$, where $\gamma_k = \lambda_k\left\lbrace 1+o_P(1) \right\rbrace$ for all $k \in \left[K_{\miss}\right]$ by Assumption \ref{assumption:Basic}. Define $\bm{E}_1 = \bm{E}\bm{Q}_Z\bm{C}_2$, $\bm{Q} = \bm{Q}_{\bm{C}_2}$, $\bm{E}_2 = \bm{E}\bm{Q}_Z\bm{Q}$ and $\bm{\Sigma} = \diag\left( \sigma_1^2,\ldots,\sigma_p^2\right)$. Define
\begin{align}
    \bm{S} &= \left(p \lambda_{K_{\miss}} \right)^{-1}\left(\bm{C}_2 \, \bm{Q} \right)^{\T} \bm{Y}^{\T}P_{Z}^{\perp} \bm{Y} \left(\bm{C}_2 \, \bm{Q} \right)\nonumber\\
    \label{equation:MetabMiss:S}
    & = \begin{pmatrix}
    \left( p \lambda_{K_{\miss}}\right)^{-1} \left(n^{1/2}\tilde{\bm{L}} + \bm{E}_1 \right)^{\T}\left(n^{1/2}\tilde{\bm{L}} + \bm{E}_1 \right) & \left( p \lambda_{K_{\miss}}\right)^{-1}\left(n^{1/2}\tilde{\bm{L}} + \bm{E}_1 \right)^{\T}\bm{E}_2 \\
    \left( p \lambda_{K_{\miss}}\right)^{-1}\bm{E}_2^{\T} \left(n^{1/2}\tilde{\bm{L}} + \bm{E}_1 \right) & \left( p \lambda_{K_{\miss}}\right)^{-1} \bm{E}_2^{\T}\bm{E}_2.
    \end{pmatrix}
\end{align}
By Theorem 5.37 of \cite{Vershynin}, $\norm{\left( p \lambda_{K_{\miss}}\right)^{-1} \bm{E}_2^{\T}\bm{E}_2}_2 = O_P\left(\lambda_{K_{\miss}} ^{-1}\right)$ under Assumption \ref{assumption:Basic}. Further, it is easy to see that conditional on $\bm{C}$,
\begin{align*}
    \left\lbrace n/\left( p\lambda_{K_{\miss}}\right) \right\rbrace^{1/2}\tilde{\bm{L}}^{\T}\bm{E}_{j_{\bigcdot r}} \asim \left( \bm{0}_K, n/\left( p\lambda_{K_{\miss}}\right)\tilde{\bm{L}}^{\T}\bm{\Sigma}\tilde{\bm{L}} \right),
\end{align*}
for $j=1,2$, meaning $\norm{\left\lbrace n/\left( p\lambda_{K_{\miss}}\right) \right\rbrace^{1/2}\tilde{\bm{L}}^{\T}\bm{E}_1}_2 = O_P(1)$ and $\norm{\left\lbrace n/\left( p\lambda_K\right) \right\rbrace^{1/2}\tilde{\bm{L}}^{\T}\bm{E}_2}_2 = O_P\left( n^{1/2}\right)$. Next, for $\rho = p^{-1}\sum\limits_{g=1}^p \sigma_g^2$, $\E\left( p^{-1}\bm{E}_1^{\T}\bm{E}_1 \mid \bm{C}\right) = \rho I_K$. Further, for $r,s \in [K]$,
\begin{align*}
    \V\left( p^{-1}\bm{E}_{1_{\bigcdot r}}^{\T}\bm{E}_{1_{\bigcdot s}} \mid \bm{C} \right) &= p^{-2}\sum\limits_{g=1}^p \V\left\lbrace \left(\bm{Q}_Z\bm{C}_{2_{\bigcdot r}}\right)^{\T} \bm{e}_g \left(\bm{Q}_Z\bm{C}_{2_{\bigcdot s}}\right)^{\T}\bm{e}_g \mid \bm{C} \right\rbrace\\
    & \leq p^{-2}\sum\limits_{g=1}^p \E\left[ \left\lbrace \left(\bm{Q}_Z\bm{C}_{2_{\bigcdot r}}\right)^{\T} \bm{e}_g\right\rbrace^4 \mid \bm{C} \right]^{1/2} \E\left[ \left\lbrace\left(\bm{Q}_Z\bm{C}_{2_{\bigcdot s}}\right)^{\T} \bm{e}_g\right\rbrace^4 \mid \bm{C} \right]^{1/2},
\end{align*}
where
\begin{align*}
    \E\left[ \left\lbrace \left(\bm{Q}_Z\bm{C}_{2_{\bigcdot r}}\right)^{\T} \bm{e}_g\right\rbrace^4 \mid \bm{C} \right] &= \sum\limits_{i,j=1}^n \left(\bm{Q}_Z\bm{C}_2 \right)_{i r}^2 \left(\bm{Q}_Z\bm{C}_2 \right)_{j r}^2\E\left(e_{gi}^2 e_{gj}^2 \mid \bm{C}\right) \leq c \left\lbrace \sum\limits_{i=1}^n \left(\bm{Q}_Z\bm{C}_2 \right)_{i r}^2\right\rbrace^2 = c
\end{align*}
for $c = \mathop{\max}\limits_{g \in [p]} \E\left( e_{g1}^4\right)$. This shows that $\norm{p^{-1}\bm{E}_1^{\T}\bm{E}_1 - \rho I_K}_2 = O_P\left( p^{-1/2}\right)$. Lastly, $\E\left( p^{-1}\bm{E}_2^{\T}\bm{E}_1 \mid \bm{C}\right) = \bm{0}$ and for $k \in [K], i \in [n-t-K]$,
\begin{align*}
    \E\left[ \left\lbrace p^{-1}\left( \bm{E}_2^{\T}\bm{E}_1\right)_{ik}\right\rbrace^2 \mid \bm{C}\right] &= \V\left\lbrace p^{-1}\left( \bm{E}_2^{\T}\bm{E}_1\right)_{ik}  \mid \bm{C}\right\rbrace= p^{-2}\sum\limits_{g=1}^p \V\left\lbrace \left( \bm{Q}_Z \bm{Q}\right)_{\bigcdot i}^{\T}\bm{e}_g \bm{e}_g^{\T} \left( \bm{Q}_Z \bm{C}_2\right)_{\bigcdot k} \mid \bm{C}\right\rbrace\\
    & \leq cp^{-1}
\end{align*}
for $c$ defined above. Therefore, $\norm{p^{-1}\bm{E}_2^{\T}\bm{E}_1}_2 = O_P\left\lbrace \left( np^{-1}\right)^{1/2}\right\rbrace$.\par
\indent Define $\mu_k = \gamma_k/\lambda_{K_{\miss}}$ and let $\hat{\mu}_k$ be the the $k$th eigenvalue of \eqref{equation:MetabMiss:S} for $k \in [K]$. By Weyl's inequality, the above work shows that $\hat{\mu}_k = \mu_k + \rho/\lambda_{K_{\miss}} + o_P(1)$ for all $k \in \left[K_{\miss}\right]$. Next, define
\begin{align*}
    \tilde{\bm{N}} &= \left\lbrace n/\left(p\lambda_{K_{\miss}} \right)\right\rbrace^{1/2}\tilde{\bm{L}} + \left(p\lambda_{K_{\miss}} \right)^{-1/2}\bm{E}_1,\quad\bm{B} = \left(p\lambda_{K_{\miss}} \right)^{-1}\left(n^{1/2}\tilde{\bm{L}} + \bm{E}_1 \right)^{\T}\bm{E}_2\\
    \bm{D} &= \left(p\lambda_{K_{\miss}} \right)^{-1} \bm{E}_2^{\T}\bm{E}_2
\end{align*}
and let $\hat{\bm{v}} \in \mathbb{R}^{K \times K}, \hat{\bm{w}} \in \mathbb{R}^{(n-t-K) \times K}$ be such that $\left( \hat{\bm{v}}^{\T}\,  \hat{\bm{w}}^{\T} \right)^{\T}$ are the first $K$ eigenvectors of \eqref{equation:MetabMiss:S}. Then
\begin{align*}
    \hat{\mu}_k \hat{\bm{v}}_{\bigcdot k} &= \left\lbrace \tilde{\bm{N}}^{\T} \tilde{\bm{N}} + \bm{B} \left( \hat{\mu}_k I_{n-t-K} - \bm{D} \right)^{-1}\bm{B}^{\T} \right\rbrace \hat{\bm{v}}_{\bigcdot k}, \quad k\in \left[K_{\miss}\right]\\
     \hat{\bm{w}}_{\bigcdot k} &= \left( \hat{\mu}_k I_{n-t-K} - \bm{D} \right)^{-1}\bm{B}^{\T} \hat{\bm{v}}_{\bigcdot k} \quad k\in \left[K_{\miss}\right].
\end{align*}
It is easy to see that \eqref{equation:MetabMiss:wvResults:bound} follows from the above work and the fact that $1-\lambda_{k+1}/\lambda_k \geq c_1^{-1}$ for all $k\in\left[K_{\miss}\right]$ in Item \ref{item:AssumptionBasics:L} of Assumption \ref{assumption:Basic}. \eqref{equation:MetabMiss:wvResults:w} follows because $\hat{\mu}_k/\mu_k = O_P\left\lbrace \left(np\right)^{-1/2} \right\rbrace = O_P\left(\lambda_{K_{\miss}}^{-1}\right)$ by Weyl's Theorem.
\end{proof}

\begin{lemma}
\label{lemma:MetabMiss:OLS_MNAR_observed}
Suppose the assumptions of Lemma \ref{lemma:MetabMiss:PCA} hold and let $g \in \Missing$ and $\bm{C}_2$ be as defined in \eqref{equation:MetabMiss:Ctilde}. For some $k \in \left[K_{\miss}\right]$, define $z_g$ to be the ordinary least squares z-score for the $[t+1]$st regressor from the regression $\bm{y}_g \sim \left( \bm{Z}\, \bm{C}_{2_{\bigcdot k}} \right)$, restricted to only the observed values of $\bm{y}_g$. Then if $\bm{C}_{2_{\bigcdot k}}$ is independent of $\bm{y}_g$, $z_g \tdist N_1\left( 0,1\right)$ as $n \to \infty$.
\end{lemma}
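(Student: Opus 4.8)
The plan is to reduce the reported z-score to a studentised bilinear form, condition on $\bm{y}_g$ so that the regressor becomes the only source of randomness, and then combine a triangular-array central limit theorem with Slutsky's lemma. Write $S_g=\left\lbrace i\in[n]:r_{gi}=1\right\rbrace$ for the observed samples, $\bm{P}=P_{Z_{S_g}}^{\perp}$ for the projection onto the orthogonal complement of the restricted design $\bm{Z}_{S_g}$, and $\bm{w}=\bm{C}_{2_{\bigcdot k},S_g}$ for the restricted regressor. By the Frisch--Waugh--Lovell theorem the fitted coefficient on the $(t+1)$st regressor is $\hat{\beta}=(\bm{P}\bm{w})^{\T}\bm{y}_{g,S_g}/\norm{\bm{P}\bm{w}}_2^2$, so with $\tilde{\bm{y}}=\bm{P}\bm{y}_{g,S_g}$ and the usual residual-variance estimate $\hat{\sigma}^2$ the z-score is $z_g=\bm{w}^{\T}\tilde{\bm{y}}/(\hat{\sigma}\norm{\bm{P}\bm{w}}_2)$. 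I would first note, via Lemma \ref{lemma:MetabMiss:Pz}, that $\Prob(r_{gi}=1\mid\bm{Z}_{i\bigcdot})\geq c>0$, so $\abs{S_g}\asymp n$ and, since the entries of $\bm{Z}$ are uniformly bounded with $n^{-1}\bm{Z}^{\T}\bm{Z}\to\bm{\Sigma}_Z\succ\bm{0}$, the restricted Gram matrix $\abs{S_g}^{-1}\bm{Z}_{S_g}^{\T}\bm{Z}_{S_g}$ stays bounded away from singularity with probability tending to one; this makes $\bm{P}$, $\hat{\beta}$ and $\hat{\sigma}^2$ well defined.

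The crux is the conditioning. Under the null $\bm{C}_{2_{\bigcdot k}}\independent\bm{y}_g$, and since $S_g$, $\tilde{\bm{y}}$ and (to leading order) $\hat{\sigma}^2$ are functions of $\bm{y}_g$ alone, conditioning on $\bm{y}_g$ fixes these while leaving $\bm{C}_{2_{\bigcdot k}}$ distributed according to its marginal law. Because $\tilde{\bm{y}}$ is orthogonal to $\bm{Z}_{S_g}$, its zero-extension $\bm{b}=\tilde{\bm{y}}^{\mathrm{ext}}\in\mathbb{R}^n$ (vanishing off $S_g$) satisfies $\bm{b}\perp\im(\bm{Z})$ and $\norm{\bm{b}}_2=\norm{\tilde{\bm{y}}}_2$, so I may treat $\bm{b}$ as a constant and study $\bm{w}^{\T}\tilde{\bm{y}}=\bm{C}_{2_{\bigcdot k}}^{\T}\bm{b}$ under the marginal law of $\bm{C}_{2_{\bigcdot k}}$. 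To describe that law I would reuse \eqref{equation:MetabMiss:Ctilde} and the computations behind Lemma \ref{lemma:MetabMiss:PCA}: writing $\bm{C}=\bm{Z}\bm{A}+\bm{\Xi}$ as in Assumption \ref{assumption:Basic} gives $\bm{C}_{2_{\bigcdot k}}=P_Z^{\perp}\bm{\Xi}\bm{u}_k$ with $\bm{u}_k=\bm{M}_{\bigcdot k}$ and $\bm{M}=(n^{-1}\bm{\Xi}^{\T}P_Z^{\perp}\bm{\Xi})^{-1/2}\bm{U}$. Since $n^{-1}\bm{\Xi}^{\T}P_Z^{\perp}\bm{\Xi}\to I_K$ and, by the eigengap in Item \ref{item:AssumptionBasics:L} of Assumption \ref{assumption:Basic}, $\bm{U}$ converges to a fixed orthogonal matrix, one gets $\bm{u}_k=\bm{u}_k^{\infty}+O_P(n^{-1/2}+p_s^{-1/2})$ for a deterministic unit vector $\bm{u}_k^{\infty}$. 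Because $\bm{b}\perp\im(\bm{Z})$, $\bm{C}_{2_{\bigcdot k}}^{\T}\bm{b}=(\bm{\Xi}\bm{u}_k)^{\T}\bm{b}=\sum_i(\bm{\Xi}_{i\bigcdot}^{\T}\bm{u}_k)b_i$, and replacing $\bm{u}_k$ by $\bm{u}_k^{\infty}$ turns this into a sum of independent, mean-zero, unit-variance summands with uniformly bounded fourth moments. A Lindeberg--Feller central limit theorem then gives $\bm{w}^{\T}\tilde{\bm{y}}\tdist N(0,\norm{\tilde{\bm{y}}}_2^2)$, the Lindeberg condition $\max_i b_i^2/\norm{\bm{b}}_2^2\to 0$ following from $\tilde{\bm{y}}$ inheriting the i.i.d.-with-bounded-moments structure of $\bm{e}_g$.

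For the denominator I would show the two factors concentrate so that the studentisation is exact in the limit. A quadratic-form concentration argument, using $\C(\bm{C}_{2_{\bigcdot k}})\approx P_Z^{\perp}$, gives $\norm{\bm{P}\bm{w}}_2^2=\left(\abs{S_g}-t\right)(1+o_P(1))$; and since $\hat{\beta}=O_P(n^{-1/2})$ the fitted sum of squares $\hat{\beta}^2\norm{\bm{P}\bm{w}}_2^2=O_P(1)$ is negligible against $\norm{\tilde{\bm{y}}}_2^2=O_P(n)$, so that $\hat{\sigma}^2=\norm{\tilde{\bm{y}}}_2^2/(\abs{S_g}-t-1)\,(1+o_P(1))$. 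Multiplying, the squared denominator is $\norm{\tilde{\bm{y}}}_2^2(1+o_P(1))$, which matches the numerator variance; Slutsky's lemma then yields $z_g\tdist N(0,1)$.

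The hard part will be controlling the data-dependent normaliser $\bm{M}$ (equivalently $\bm{u}_k$): it couples the coordinates of $\bm{C}_{2_{\bigcdot k}}$ and is not exactly independent of $\bm{y}_g$ at finite $n$, so the replacement $\bm{u}_k\mapsto\bm{u}_k^{\infty}$ must perturb the numerator by only $o_P(n^{1/2})$. A crude operator-norm bound is insufficient, because the error is $(\bm{u}_k-\bm{u}_k^{\infty})^{\T}\bm{\Xi}^{\T}\bm{b}$ and the coordinates of $\bm{\Xi}^{\T}\bm{b}$ along the factor directions on which $\bm{y}_g$ actually loads can be as large as $O_P(n)$; pairing these against the $O_P(n^{-1/2})$ entries of $\bm{u}_k-\bm{u}_k^{\infty}$ only yields $O_P(n^{1/2})$, which is borderline. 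Dispatching this will require the sharper, direction-by-direction eigenvector-perturbation estimates underlying Lemma \ref{lemma:MetabMiss:PCA} together with cancellation between the loaded directions, and a careful accounting of how the MNAR selection $S_g$ (which depends on $\bm{y}_g$ but, under the null, not on $\bm{C}_{2_{\bigcdot k}}$) enters these bilinear forms. I would treat this perturbation term as the principal technical step, and regard the remaining normal-approximation and concentration arguments as routine given Lemmas \ref{lemma:MetabMiss:Pz} and \ref{lemma:MetabMiss:PCA}.
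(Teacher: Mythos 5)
Your reduction via Frisch--Waugh--Lovell and conditioning on $\bm{y}_g$ is a legitimate alternative route, and it shares the same engine as the paper's argument: under the null the score is a weighted sum of the independent, mean-zero $\bm{\Xi}_{ik}$'s, to which Lindeberg--Feller applies. The paper instead centres $\bm{y}$ at the pseudo-true coefficient $\bm{\beta}_{r=1}=\{\E(\bm{Z}^{\T}\bm{R}\bm{Z})\}^{-1}\E(\bm{Z}^{\T}\bm{R}\bm{y})$, analyses the blocks of the selected-sample Gram matrix one by one, and applies an unconditional CLT to $n^{-1/2}\sum_i\bm{\Xi}_{ik}r_i(y_i-\bm{z}_i^{\T}\bm{\beta}_{r=1})$; it also devotes real effort to proving the limiting variance is bounded away from zero (a truncation argument using the lower tail of $\Psi$), a point you assume implicitly when you take $\norm{\tilde{\bm{y}}}_2^2\asymp n$ in the Lindeberg ratio.

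The genuine problem is that you leave your self-declared ``principal technical step'' --- the replacement $\bm{u}_k\mapsto\bm{u}_k^{\infty}$ --- unresolved, and your diagnosis of why it is hard is wrong for this lemma. The $O_P(n)$ coordinates of $\bm{\Xi}^{\T}\bm{b}$ ``along the factor directions on which $\bm{y}_g$ loads'' cannot occur under the null: because $\bm{C}_{2_{\bigcdot k}}$ depends on all of $\bm{\Xi}$ through the normaliser $(n^{-1}\bm{\Xi}^{\T}P_Z^{\perp}\bm{\Xi})^{-1/2}$, the hypothesis $\bm{C}_{2_{\bigcdot k}}\independent\bm{y}_g$ effectively forces $\bm{\Xi}\independent\bm{y}_g$ (the paper makes this reduction explicitly), and then $\bm{\Xi}\independent\bm{R}_g$ as well by Model \eqref{equation:MissingMech}. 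Consequently every coordinate of $\bm{\Xi}^{\T}\bm{b}$ is a sum of independent products of a mean-zero $\bm{\Xi}_{ij}$ with a fixed weight $b_i$, hence $O_P(n^{1/2})$, and the perturbation satisfies $(\bm{u}_k-\bm{a}_k)^{\T}\bm{\Xi}^{\T}\bm{b}=O_P(n^{-1/2})\cdot O_P(n^{1/2})=O_P(1)=o_P(n^{1/2})$ --- exactly the one-line estimate the paper uses. No ``direction-by-direction eigenvector-perturbation estimates'' are needed here; you appear to be conflating this lemma, which uses the population $\bm{C}_2$ of \eqref{equation:MetabMiss:Ctilde}, with Theorem \ref{theorem:MetabMiss:OLS_MNAR}, where $\hat{\bm{C}}_2$ is estimated from $\bm{Y}_{\Observed}$ and the perturbation bounds of Lemma \ref{lemma:MetabMiss:PCA} genuinely enter. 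With that observation your argument closes, but as written the proof is incomplete at the step you yourself identify as the crux.
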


\begin{proof}
For notation purposes, we define $\bm{y}_g=\bm{y}=\left(y_1,\ldots,y_n\right)^{\T}$, $\bm{e}_g = \bm{e}=\left(e_1,\ldots,e_n\right)^{\T}$, $\bm{Z}=\left( \bm{z}_1 \cdots \bm{z}_n\right)^{\T}$, $\mu_i = \bm{z}_i^{\T}\left(\bm{\xi}_{g}+\bm{A}\bm{\ell}_{g}\right)$, $r_i=r_{gi}$ for all $i \in [n]$ and $\bm{c}_{\perp} =\bm{C}_{2_{\bigcdot k}}$. Define $\bm{R} = \diag\left( r_1,\ldots,r_n\right)$ and let $\hat{\ell}$ be the $t+1$st regression coefficient from the ordinary least squares regression of $\bm{y}$ onto $\left( \bm{Z}, \bm{c}_{\perp}\right)$ that ignores missing data. That is,
\begin{align}
\label{equation:OLS_MNAR}
    \left(\hat{\bm{\beta}}, \hat{\ell} \right)^{\T} = \left\lbrace \left( \bm{Z}, \bm{c}_{\perp}\right)^{\T} \bm{R} \left( \bm{Z}, \bm{c}_{\perp}\right) \right\rbrace^{-1}\begin{pmatrix}
    \bm{Z}^{\T} \bm{R}\bm{y}\\
    \bm{c}_{\perp}^{\T} \bm{R}\bm{y}
    \end{pmatrix}.
\end{align}
Note that
\begin{align*}
    \bm{c}_{\perp} = P_{Z}^{\perp}\bm{c}, \quad \bm{c} = \bm{\Xi}\left(n^{-1}\bm{\Xi}^{\T} P_{Z}^{\perp}\bm{\Xi} \right)^{-1/2}\bm{u}_k,
\end{align*}
where $\bm{u}_k \in \mathbb{R}^K$ is the $k$th column of $\bm{U}$ defined in the statement of Lemma \ref{lemma:MetabMiss:PCA}. By the assumptions of $\bm{\Xi}$ in Assumption \ref{assumption:Basic}, $n^{-1}\bm{\Xi}^{\T} P_{Z}^{\perp}\bm{\Xi} = I_K + O_P\left( n^{-1/2}\right)$ and $\norm{\bm{u}_k-\bm{a}_k}_2\ = O_P\left( n^{-1/2}\right)$, where $\bm{a}_k \in \mathbb{R}^K$ is the $k$th standard basis vector. Since $\bm{c}_{\perp}$ is independent of $\bm{y}$, it is also independent of $\bm{R}$ by Model \eqref{equation:MissingMech}, meaning it suffices to assume $\bm{\Xi}$ is independent of $\bm{y}$. Further, we can re-write \eqref{equation:OLS_MNAR} as
\begin{align*}
     \left(\hat{\bm{\beta}}, \hat{\ell} \right)^{\T} = \begin{pmatrix}
    I_t & \bm{s}\\
    \bm{0} & I_K
    \end{pmatrix} \left\lbrace \left( \bm{Z}, \bm{c}\right)^{\T} \bm{R} \left( \bm{Z}, \bm{c}\right) \right\rbrace^{-1}\begin{pmatrix}
    \bm{Z}^{\T} \bm{R}\bm{y}\\
    \bm{c}^{\T} \bm{R}\bm{y}
    \end{pmatrix}, \quad \bm{s} = \left( \bm{Z}^{\T} \bm{Z}\right)^{-1}\bm{Z}^{\T} \bm{c}.
\end{align*}
Therefore, to understand the distribution of $\hat{\ell}$, it suffices to replace $\bm{c}_{\perp}$ with $\bm{c}$ in \eqref{equation:OLS_MNAR}.\par
\indent Define the function
\begin{align*}
    \bm{h}\left(\bm{\beta}, \ell \right) = n^{-1}\left(\bm{Z} , \bm{c} \right)^{\T} \bm{R}\left( \bm{y} - \bm{Z}\bm{\beta} - \bm{c}\ell \right)
\end{align*}
and let
\begin{align*}
    \bm{\beta}_{r=1} = \left\lbrace \E\left( \bm{Z}^{\T}\bm{R}\bm{Z} \right) \right\rbrace^{-1} \E\left( \bm{Z}^{\T} \bm{R}\bm{y}\right).
\end{align*}
Note that $\mathop{\limsup}\limits_{n \to \infty} \norm{\bm{\beta}_{r=1}}_2 < \infty$. We start by understanding the asymptotic properties of $\bm{h}\left(\bm{\beta}_{r=1}, 0 \right)$, which we can do by analyzing the following:

\begin{enumerate}[label=(\roman*)]
    \item \underline{$n^{-1}\bm{Z}^{\T} \bm{R}\bm{y}$}
    \begin{align*}
        \V\left( n^{-1}\bm{Z}^{\T} \bm{R}\bm{y}\right) = n^{-2}\sum\limits_{i=1}^n \bm{z}_i \bm{z}_i^{\T} \V\left( r_i y_i\right) \preceq n^{-2}\sum\limits_{i=1}^n \bm{z}_i \bm{z}_i^{\T} \E\left( y_i^2\right) \preceq n^{-1}c \left(n^{-1}\bm{Z}^{\T}\bm{Z} \right)
    \end{align*}
    where $c > 0$ is a constant. Therefore, $n^{-1}\bm{Z}^{\T} \bm{R}\bm{y} = \E\left( n^{-1}\bm{Z}^{\T} \bm{R}\bm{y}\right) + O_P\left(n^{-1/2} \right)$.\label{item:MNAR:ZtRy}
    \item \underline{$n^{-1}\bm{c}^{\T} \bm{R}\left(\bm{y}-\bm{Z}\bm{\beta}_{r=1} \right)$}
    Since $\bm{\Xi}$ is independent of $y_1,\ldots,y_n$ (and therefore $r_1,\ldots,r_n$),
    \begin{align}
        n^{-1/2}\bm{\Xi}^{\T}\bm{R}\left(\bm{y}-\bm{Z}\bm{\beta}_{r=1}\right) = O_P\left(1\right).
    \end{align}
    Therefore,
    \begin{align*}
        n^{-1}\bm{c}^{\T} \bm{R}\left(\bm{y}-\bm{Z}\bm{\beta}_{r=1} \right) &= \bm{u}_k^{\T} \left(n^{-1}\bm{\Xi}^{\T} P_{Z}^{\perp}\bm{\Xi} \right)^{-1/2}n^{-1}\bm{\Xi}^{\T} \bm{R}\left(\bm{y}-\bm{Z}\bm{\beta}_{r=1} \right)\\
        &= n^{-1}\sum\limits_{i=1}^n  \bm{\Xi}_{ik} r_i \left( y_i - \bm{z}_i^{\T} \bm{\beta}_{r=1}\right) + O_P\left(n^{-1} \right)
    \end{align*},
    where $\E\left\lbrace \bm{\Xi}_{ij} r_i \left( y_i - \bm{x}_i^{\T} \bm{\beta}_{r=1}\right) \right\rbrace = 0$ and 
    \begin{align*}
        n^{-1}\sum\limits_{i=1}^n \bm{\Xi}_{ij}^2 r_i \left( y_i - \bm{z}_i^{\T} \bm{\beta}_{r=1}\right)^2 = n^{-1}\sum\limits_{i=1}^n \E\left\lbrace r_i \left( y_i - \bm{z}_i^{\T} \bm{\beta}_{r=1}\right)^2\right\rbrace + O_P\left(n^{-1/2} \right)
    \end{align*}
    by the bounded fourth moment assumptions. Let $\alpha=\alpha_{g} > 0, \delta = \delta_{g} \in \mathbb{R}$ and $\gamma_i = \mu_i - \bm{z}_i^{\T} \bm{\beta}_{r=1}$. Note that $\max_{i \in [n]}\abs{\gamma_i}, \max_{i \in [n]}\abs{\mu_i} \leq c$ for some constant $c > 0$ by Assumption \ref{assumption:Basic}. Let $M > 0$ be large enough so that 
    \begin{align*}
        \max_{i \in [n]}\left[ \E\left\lbrace \left(e_1 + \gamma_i \right)^2 \right\rbrace I\left( e_1 \geq -M\right) \right] \geq c_1^{-1}.
    \end{align*}
    Then because $e_1,\ldots,e_n$ are identically distributed,
    \begin{align*}
        \E\left\lbrace r_i \left( y_i - \bm{z}_i^{\T} \bm{\beta}_{r=1}\right)^2\right\rbrace &= \E\left[\Psi\left\lbrace \alpha\left(e_{i} + \mu_i-\delta \right) \right\rbrace \left(e_i + \gamma_i \right)^2 \right]\\
        & \geq \E\left[\Psi\left\lbrace \alpha\left(e_{i} + \mu_i-\delta \right) \right\rbrace \left(e_i + \gamma_i \right)^2 I\left( e_i \geq -M\right) \right]\\
        & \geq \Psi\left[ \alpha\left\lbrace-M + \min_{j \in [n]}\left(\mu_j\right)-\delta \right\rbrace \right]c_1^{-1},
    \end{align*}
    where 
    \begin{align*}
        \mathop{\liminf}\limits_{n \to \infty}\left\lbrace \min_{j \in [n]}\left(\mu_j\right)\right\rbrace > -\infty \Rightarrow \mathop{\liminf}\limits_{n \to \infty} \Psi\left[ \alpha\left\lbrace-M + \min_{j \in [n]}\left(\mu_j\right)-\delta \right\rbrace \right] > 0.
    \end{align*}
    By the Lindeberg-Feller Central Limit Theorem, we get that
    \begin{align*}
        &n^{-1/2}v_n^{-1/2}\bm{c}^{\T} \bm{R}\left( \bm{y}-\bm{Z}\bm{\beta}_{r=1}\right) \edist N\left( 0,1\right) + o_P(1)\\
        & v_n = n^{-1}\sum\limits_{i=1}^n \E\left\lbrace r_i \left( y_i - \bm{z}_i^{\T} \bm{\beta}_{r=1}\right)^2\right\rbrace,
    \end{align*}
    where $\mathop{\liminf}\limits_{n \to \infty}v_n > 0$ by the above work.\label{item:ctRy}
    \item We see that $n^{-1}\bm{Z}^{\T} \bm{R}\bm{Z} = \E\left( n^{-1}\bm{Z}^{\T} \bm{R}\bm{Z}\right) + O_P\left( n^{-1/2}\right)$, where $\E\left( n^{-1}\bm{Z}^{\T} \bm{R}\bm{Z}\right) \succeq c n^{-1}\bm{Z}^{\T} \bm{Z}$ for some constant $c > 0$ by Lemma \ref{lemma:MetabMiss:Pz}. An analysis identical to that in \ref{item:ctRy} shows $n^{-1}\bm{Z}^{\T} \bm{R}\bm{c} = O_P\left( n^{-1/2}\right)$.\label{item:MNAR:therest}
    \item \underline{$n^{-1}\bm{c}^{\T} \bm{R}\bm{c}$}
    Let $n_{r=1} = \sum\limits_{i=1}^n r_i$.
    \begin{align*}
        n^{-1}\bm{c}^{\T} \bm{R}\bm{c} &= \bm{u}_k^{\T} \left(n^{-1}\bm{\Xi}^{\T} P_{Z}^{\perp}\bm{\Xi} \right)^{-1/2}n^{-1}\bm{\Xi}^{\T} \bm{R}\bm{\Xi}\left(n^{-1}\bm{\Xi}^{\T} P_{Z}^{\perp}\bm{\Xi} \right)^{-1/2} \bm{u}_k\\
        & = n^{-1}\sum\limits_{i=1}^n \Prob\left( r_i=1 \mid \bm{Z}\right) + O_P\left( n^{-1/2}\right) = n^{-1}n_{r=1} + O_P\left( n^{-1/2}\right).
    \end{align*}\label{item:MNAR:ctRc}
\end{enumerate}

This shows that
\begin{align*}
    n^{1/2}\begin{pmatrix} \hat{\bm{\beta}} - \bm{\beta}_{r=1}\\ \hat{\ell} \end{pmatrix} = \begin{pmatrix} n^{-1}\bm{Z}^{\T} \bm{R}\bm{Z} & n^{-1}\bm{Z}^{\T} \bm{R}\bm{c}\\
    n^{-1}\bm{c}^{\T} \bm{R}\bm{Z} & n^{-1}\bm{c}^{\T} \bm{R}\bm{c}\end{pmatrix}^{-1}\left\lbrace \begin{matrix} n^{-1/2}\bm{Z}^{\T}\bm{R}\left( \bm{y}-\bm{Z}\bm{\beta}_{r=1} \right) \\ n^{-1/2}\bm{c}^{\T}\bm{R}\left( \bm{y}-\bm{Z}\bm{\beta}_{r=1} \right) \end{matrix} \right\rbrace.
\end{align*}
The results from \ref{item:MNAR:ZtRy}, \ref{item:ctRy}, \ref{item:MNAR:therest} and \ref{item:MNAR:ctRc} and a little algebra then show that as $n \to \infty$,
\begin{align*}
    &\left\lbrace  \bm{M}_{(t+1)(t+1)} \tilde{v}_n\right\rbrace^{-1/2}\hat{\ell} \edist N(0,1) + o_P(1), \quad \bm{M} = \begin{pmatrix} \bm{Z}^{\T} \bm{R}\bm{Z} & \bm{Z}^{\T} \bm{R}\bm{c}\\
    \bm{c}^{\T} \bm{R}\bm{Z} & \bm{c}^{\T} \bm{R}\bm{c}\end{pmatrix}^{-1}\\
    &\tilde{v}_n = \left(n_{r=1}-t-1 \right)^{-1}\sum\limits_{i=1}^n r_i \left( y_i - \bm{z}_i^{\T} \bm{\beta}_{r=1}\right)^2.
\end{align*}
Lastly, for $\bm{\delta} = \hat{\bm{\beta}}-\bm{\beta}_{r=1}$ and $\tilde{n}_{r=1} = n_{r=1} - t - 1$,
\begin{align}
    \hat{\tilde{v}}_n =& \tilde{n}_{r=1}^{-1}\sum\limits_{i=1}^n r_i\left( y_i - \bm{z}_i^{\T} \hat{\bm{\beta}} - \bm{c}_i \hat{\ell}\right)^2 = \tilde{v}_n+ 2\tilde{n}_{r=1}^{-1}\sum\limits_{i=1}^n r_i \left( y_i - \bm{z}_i^{\T} \bm{\beta}_{r=1}\right)\left( \bm{z}_i^{\T}  \bm{\delta} - \bm{c}_i \hat{\ell}\right)\nonumber\\
    \label{equation:MetabMiss:vtilde}
    &  + \tilde{n}_{r=1}^{-1}\sum\limits_{i=1}^n r_i \left( \bm{z}_i^{\T}  \bm{\delta} + \bm{c}_i \hat{\ell}\right)^2.
\end{align}
By \ref{item:MNAR:ZtRy}, \ref{item:ctRy}, \ref{item:MNAR:therest} and \ref{item:MNAR:ctRc}, $\norm{\bm{\delta}}_2, \norm{\hat{\ell}}_2 = O_P\left( n^{-1/2}\right)$, meaning the the last two terms are $o_P(1)$ as $n \to \infty$. This completes the proof.
\end{proof}

\begin{theorem}
\label{theorem:MetabMiss:OLS_MNAR}
Suppose Assumption \ref{assumption:Basic} holds and let $g \in \Missing$ and $\bm{C}_2$ be as defined in \eqref{equation:MetabMiss:Ctilde}. For some $k \in \left[K_{\miss}\right]$, define $\hat{z}_g$ to be the ordinary least squares z-score for the $[t+1]$st regressor from the regression $\bm{y}_g \sim \left( \bm{Z}\, \hat{\bm{C}}_{2_{\bigcdot k}} \right)$, restricted to only the observed values of $\bm{y}_g$. Then under the null hypothesis that $\bm{C}_{2_{\bigcdot k}}$ or $\hat{\bm{C}}_{2_{\bigcdot k}}$ is independent of $\bm{y}_g$, $\hat{z}_g \tdist N_1\left( 0,1\right)$ as $n,p \to \infty$.
\end{theorem}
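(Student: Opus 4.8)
The plan is to transfer the conclusion of Lemma~\ref{lemma:MetabMiss:OLS_MNAR_observed}, which treats the true regressor $\bm{c} := \bm{C}_{2_{\bigcdot k}}$, to the estimated regressor $\hat{\bm{c}} := \hat{\bm{C}}_{2_{\bigcdot k}}$ by a perturbation argument. Writing the decomposition \eqref{equation:MetabMiss:C2hat} columnwise,
\begin{align*}
\hat{\bm{c}} = \bm{c} + \bm{r}, \qquad \bm{r} = \bm{C}_2\left(\hat{\bm{v}}_{\bigcdot k} - \bm{a}_k\right) + n^{1/2}\bm{Q}_{P_{Z}^{\perp}C}\hat{\bm{w}}_{\bigcdot k},
\end{align*}
I would show that every sample statistic on which the z-score depends --- the Gram pieces $n^{-1}\bm{Z}^{\T}\bm{R}\hat{\bm{c}}$ and $n^{-1}\hat{\bm{c}}^{\T}\bm{R}\hat{\bm{c}}$, the score $n^{-1/2}\hat{\bm{c}}^{\T}\bm{R}(\bm{y}_g - \bm{Z}\bm{\beta}_{r=1})$, and the residual variance estimate --- has the same probability limit, respectively the same asymptotic distribution, as its counterpart built from $\bm{c}$ in the proof of Lemma~\ref{lemma:MetabMiss:OLS_MNAR_observed}. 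Since the z-score is a continuous function of these statistics with a nonvanishing denominator, Slutsky's theorem then yields $\hat{z}_g \tdist N_1(0,1)$.

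The first observation is a crude norm bound on $\bm{r}$. Since $n^{-1}\bm{C}_2^{\T}\bm{C}_2 = I_K$ gives $\norm{\bm{C}_2}_2 = n^{1/2}$ and $\bm{Q}_{P_{Z}^{\perp}C}$ has orthonormal columns, \eqref{equation:MetabMiss:wvResults:bound} yields $\norm{\bm{r}}_2 = O_P(p_s^{-1/2}) + n^{1/2}O_P(p_s^{-1/2}) = O_P\{(n/p_s)^{1/2}\} = O_P(1)$, using $p_s \asymp n$ from Assumption~\ref{assumption:Basic}. Because $\bm{R} \preceq I_n$, Cauchy--Schwarz then controls the denominator perturbations: $n^{-1}\abs{\hat{\bm{c}}^{\T}\bm{R}\hat{\bm{c}} - \bm{c}^{\T}\bm{R}\bm{c}} \leq n^{-1}(2\norm{\bm{c}}_2 + \norm{\bm{r}}_2)\norm{\bm{r}}_2 = O_P(p_s^{-1/2}) = o_P(1)$, and likewise $n^{-1}\bm{Z}^{\T}\bm{R}\bm{r} = o_P(1)$. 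Hence the weighted Gram matrix converges to the same limit as in the lemma, so the multiplier $\bm{M}_{(t+1)(t+1)}$ and, by an identical argument applied to the analogue of $\tilde{v}_n$, the variance estimate are asymptotically unchanged.

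The crux is the score $n^{-1/2}\hat{\bm{c}}^{\T}\bm{R}(\bm{y}_g - \bm{Z}\bm{\beta}_{r=1}) = n^{-1/2}\bm{c}^{\T}\bm{R}(\bm{y}_g-\bm{Z}\bm{\beta}_{r=1}) + n^{-1/2}\bm{r}^{\T}\bm{R}(\bm{y}_g-\bm{Z}\bm{\beta}_{r=1})$, where, when $\bm{c} \independent \bm{y}_g$, the first term is asymptotically normal by the central limit theorem step in the proof of Lemma~\ref{lemma:MetabMiss:OLS_MNAR_observed}. The difficulty is that a naive Cauchy--Schwarz bound on the second term gives only $O_P\{(n/p_s)^{1/2}\} = O_P(1)$, not $o_P(1)$, because the large-norm piece $n^{1/2}\bm{Q}_{P_{Z}^{\perp}C}\hat{\bm{w}}_{\bigcdot k}$ of $\bm{r}$ is of the same order as $\norm{\bm{c}}_2$. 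To gain the extra factor I would substitute the explicit leading form of $\hat{\bm{w}}_{\bigcdot k}$ from \eqref{equation:MetabMiss:wvResults:w}, which shows that $\hat{\bm{w}}_{\bigcdot k}$ depends on the data only through $\bm{E}_{\Observed}$, $\bm{C}$, $\bm{Z}$ and $\tilde{\bm{L}}_{\Observed}$. Since $g \in \Missing$, the residual $\bm{e}_g$, and hence $(\bm{y}_g,\bm{R})$ conditional on $(\bm{C},\bm{Z})$, is independent of $\bm{E}_{\Observed}$ by the cross-metabolite independence in item (v) of Assumption~\ref{assumption:Basic}. Conditioning on $(\bm{C},\bm{Z},\bm{y}_g)$, the term $\hat{\bm{w}}_{\bigcdot k}^{\T}\bm{Q}_{P_{Z}^{\perp}C}^{\T}\bm{R}(\bm{y}_g-\bm{Z}\bm{\beta}_{r=1})$ becomes, to leading order, an affine form in the mean-zero matrix $\bm{E}_{\Observed}$; a direct conditional second-moment computation, using $\norm{\tilde{\bm{L}}_{\Observed}}_2 = O_P(p_s^{1/2})$ and the bounded-moment assumptions, bounds its variance by a quantity that is $o_P(1)$. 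The contribution of the smaller piece $\bm{C}_2(\hat{\bm{v}}_{\bigcdot k}-\bm{a}_k)$ is negligible by \eqref{equation:MetabMiss:wvResults:bound}, since even granting the worst case $\norm{\bm{C}_2^{\T}\bm{R}(\bm{y}_g-\bm{Z}\bm{\beta}_{r=1})}_2 = O_P(n)$ one obtains $n^{-1/2}O_P\{(np_s)^{-1/2}\}O_P(n) = O_P(p_s^{-1/2}) = o_P(1)$. Thus the score perturbation is $o_P(1)$ and the numerator inherits the normal limit.

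Finally, the ``or'' in the null is dispatched by symmetry: if instead $\hat{\bm{c}} \independent \bm{y}_g$, one need not show the perturbation vanishes, but may run the Lindeberg--Feller argument of Lemma~\ref{lemma:MetabMiss:OLS_MNAR_observed} directly on $\hat{\bm{c}}$, the required entrywise boundedness and variance matching following from $\hat{\bm{c}} = \bm{c} + \bm{r}$ with $\norm{\bm{r}}_2 = O_P(1)$. In either case the z-score has the same normal limit. The main obstacle is precisely the score-perturbation bound of the previous paragraph: it is the one place where the naive operator-norm estimate fails and where the independence $\bm{e}_g \independent \bm{E}_{\Observed}$ afforded by $g \in \Missing$ must be exploited through a conditional variance calculation rather than a worst-case bound.
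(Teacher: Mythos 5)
Your proposal is correct and follows essentially the same route as the paper: reduce to Lemma \ref{lemma:MetabMiss:OLS_MNAR_observed} via the decomposition \eqref{equation:MetabMiss:C2hat}, control the Gram and variance pieces by crude norm bounds from \eqref{equation:MetabMiss:wvResults:bound}, and handle the score cross term by substituting the leading form \eqref{equation:MetabMiss:wvResults:w} of $\hat{\bm{w}}_{\bigcdot k}$ and exploiting the independence of $\bm{E}_{\Observed}$ from $\left(\bm{y}_g,\bm{R}\right)$ for $g\in\Missing$. You correctly identify as the crux exactly the step the paper states but leaves with details omitted, and your conditional second-moment argument fills it in soundly.
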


\begin{proof}
Let $z_g$, $\bm{R}$, $r_i$, $\bm{y}$ and $y_i$ be as defined in Lemma \ref{lemma:MetabMiss:OLS_MNAR_observed}. Both $\hat{z}_g$ and $z_g$ can be written as $\frac{\text{``estimate"}}{\text{``standard error of estimate"}}$. Define
\begin{align*}
    \hat{\bm{M}} = \begin{pmatrix}
    \bm{Z}^{\T} \bm{R} \bm{Z} & \bm{Z}^{\T}\bm{R}\hat{\bm{C}}_{2_{\bigcdot k}}\\
    \hat{\bm{C}}_{2_{\bigcdot k}}^{\T} \bm{R} \bm{Z} & \hat{\bm{C}}_{2_{\bigcdot k}}^{\T} \bm{R} \hat{\bm{C}}_{2_{\bigcdot k}}
    \end{pmatrix}^{-1}, \quad \bm{M} = \begin{pmatrix}
    \bm{Z}^{\T} \bm{R} \bm{Z} & \bm{Z}^{\T}\bm{R}\bm{C}_{2_{\bigcdot k}}\\
    \bm{C}_{2_{\bigcdot k}}^{\T} \bm{R} \bm{Z} & \bm{C}_{2_{\bigcdot k}}^{\T} \bm{R} \bm{C}_{2_{\bigcdot k}}
    \end{pmatrix}^{-1}.
\end{align*}
The numerators of $\hat{z}_g$ and $z_g$ are the $[t+1]st$ elements of
\begin{align*}
    \hat{\bm{M}}\begin{pmatrix} \bm{Z}^{\T} \bm{R} \bm{y} \\ \hat{\bm{C}}_{2_{\bigcdot k}}^{\T} \bm{R} \bm{y} \end{pmatrix}, \quad \bm{M}\begin{pmatrix} \bm{Z}^{\T} \bm{R} \bm{y} \\ \bm{C}_{2_{\bigcdot k}}^{\T} \bm{R} \bm{y} \end{pmatrix}
\end{align*}
and the denominators are
\begin{align*}
    \left[  \hat{\bm{M}}_{(t+1)(t+1)} \left\lbrace \left( n_{r=1}-t-1 \right)^{-1} \left( \bm{R} \bm{y}\right)^{\T} P_{R\left( Z,\, \hat{C}_{2_{\bigcdot k}}\right)}^{\perp} \left( \bm{R} \bm{y}\right) \right\rbrace\right]^{1/2},\\
    \left[ \bm{M}_{(t+1)(t+1)} \left\lbrace \left( n_{r=1}-t-1 \right)^{-1} \left( \bm{R} \bm{y}\right)^{\T} P_{R\left( Z,\, C_{2_{\bigcdot k}}\right)}^{\perp} \left( \bm{R} \bm{y}\right) \right\rbrace\right]^{1/2},
\end{align*}
where $n_{r=1} = \sum\limits_{i=1}^n r_i$. Note that for $\hat{\tilde{v}}_n$ defined in \eqref{equation:MetabMiss:vtilde},
\begin{align*}
  \hat{\tilde{v}}_n &= \left( n_{r=1}-t-1 \right)^{-1}\left( \bm{R} \bm{y}\right)^{\T} P_{R\left( Z,\, \left[C_2 \right]_{\bigcdot k}\right)}^{\perp} \left( \bm{R} \bm{y}\right).
\end{align*}
\indent First, if $ \hat{\bm{C}}_{2_{\bigcdot k}}$ is independent of $\bm{y}$, then it suffices to assume that $\bm{\Xi}$ is independent of both $\bm{y}$ and $\bm{R}$. Therefore, by Lemma \ref{lemma:MetabMiss:OLS_MNAR_observed}, we simply have to show the following to prove the theorem:
\begin{subequations}
\begin{align}
    \label{equation:MetabMiss:Midff}
    &\norm{n\hat{\bm{M}} - n\bm{M}}_2 = o_P\left( n^{-1/2}\right)\\
    \label{equation:MetabMiss:ldiff}
    &n^{-1}\hat{\bm{C}}_{2_{\bigcdot k}}^{\T} \bm{R} \bm{y} = n^{-1}\bm{C}_{2_{\bigcdot k}}^{\T} \bm{R} \bm{y} + o_P\left( n^{-1/2}\right)\\
    \label{equation:MetabMiss:vdiff}
    & \left( n_{r=1}-t-1 \right)^{-1} \left( \bm{R} \bm{y}\right)^{\T} P_{R\left( Z,\, \hat{C}_{2_{\bigcdot k}}\right)}^{\perp} \left( \bm{R} \bm{y}\right) =  \hat{\tilde{v}}_n + o_P(1).
\end{align}
\end{subequations}
\indent We start by showing \eqref{equation:MetabMiss:Midff}. Define $\hat{\bm{c}} = \hat{\bm{C}}_{2_{\bigcdot k}}$ and $\bm{c} = \bm{C}_{2_{\bigcdot k}}$. Then by Lemma \ref{lemma:MetabMiss:PCA} and \eqref{equation:MetabMiss:C2hat},
\begin{align*}
    n^{-1}\hat{\bm{c}}^{\T} \bm{R} \hat{\bm{c}} =& n^{-1}\hat{\bm{v}}_{\bigcdot k}^{\T} \bm{C}_2^{\T} \bm{R}\bm{C}_2 \hat{\bm{v}}_{\bigcdot k} + 2n^{-1/2}\hat{\bm{v}}_{\bigcdot k}^{\T} \bm{C}_2^{\T} \bm{R}\bm{Q}_{C_2} \hat{\bm{w}}_{\bigcdot k}\\
    & + \hat{\bm{w}}_{\bigcdot k}^{\T} \bm{Q}_{C_2}^{\T} \bm{R}\bm{Q}_{C_2} \hat{\bm{w}}_{\bigcdot k}
\end{align*}
By \eqref{equation:MetabMiss:wvResults:bound}, the third term is $o_P\left( n^{-1/2}\right)$. Similarly, since $\norm{n^{-1/2} \bm{R}\bm{C}_2}_2 \leq \norm{n^{-1/2}\bm{C}_2}_2 = 1$, we can use \eqref{equation:MetabMiss:wvResults:bound} to get
\begin{align*}
    n^{-1}\hat{\bm{c}}^{\T} \bm{R} \hat{\bm{c}} = n^{-1}\bm{c}^{\T} \bm{R} \bm{c} + 2n^{-1/2}\bm{c}^{\T} \bm{R}\bm{Q}_{C_2} \hat{\bm{w}}_{\bigcdot k} + o_P\left(n^{-1/2} \right).
\end{align*}
We then use \eqref{equation:MetabMiss:wvResults:w} to show that the second term in the above expression is $o_P\left(n^{-1/2} \right)$. The proof of this follows from the fact that $\bm{c}^{\T} \bm{R}$ is independent of $\bm{E}_{\Observed}$. The details are nearly identically to those used to prove Lemma \ref{lemma:MetabMiss:PCA} and are ommitted. An identical technique can also be used to show that $\norm{n^{-1}\bm{Z}^{\T} \bm{R}\hat{\bm{c}} - n^{-1}\bm{Z}^{\T} \bm{R}\bm{c}}_2 = o_P\left( n^{-1/2}\right)$, which proves \eqref{equation:MetabMiss:Midff}.\par
\indent To show \eqref{equation:MetabMiss:ldiff}, we again use \eqref{equation:MetabMiss:C2hat}, which shows that
\begin{align*}
    n^{-1}\hat{\bm{c}}^{\T} \bm{R}\bm{y} = n^{-1}\hat{\bm{v}}_{\bigcdot k}^{\T} \bm{C}_2^{\T} \bm{R}\bm{y} + n^{-1/2} \hat{\bm{w}}_{\bigcdot k}^{\T} \bm{Q}_{C_2}^{\T} \bm{R}\bm{y} =& n^{-1}\bm{c}^{\T} \bm{R}y + n^{-1/2} \hat{\bm{w}}_{\bigcdot k}^{\T} \bm{Q}_{C_2}^{\T} \bm{R}\bm{y}\\
    & + o_P\left( n^{-1/2}\right),
\end{align*}
where the second equality follows from \eqref{equation:MetabMiss:wvResults:bound}. Again, the proof that $n^{-1/2} \hat{\bm{w}}_{\bigcdot k}^{\T} \bm{Q}_{C_2}^{\T} \bm{R}\bm{y} = o_P\left( n^{-1/2}\right)$ follows from \eqref{equation:MetabMiss:wvResults:w} the fact that $\bm{R}\bm{y}$ is independent of $\bm{E}_{\Observed}$, and is omitted.\par 
\indent To show \eqref{equation:MetabMiss:vdiff}, let $\tilde{n} = n_{r=1} - t - 1$. Then
\begin{align*}
    \hat{x}_n &= \tilde{n}^{-1}\left( \bm{R} \bm{y}\right)^{\T} P_{R\left( Z,\, \hat{C}_{2_{\bigcdot k}}\right)}^{\perp} \left( \bm{R} \bm{y}\right)\\
    & =\tilde{n}^{-1}\bm{y}^{\T}\bm{R}\bm{y} -  n^{-1}\left( \tilde{n}^{-1/2}\bm{R}\bm{y}\right)^{\T} \left(\bm{Z},\, \hat{\bm{c}} \right)\left(n \hat{\bm{M}} \right)\left(\bm{Z},\, \hat{\bm{c}} \right)^{\T} \left( \tilde{n}^{-1/2}\bm{R}\bm{y}\right).
\end{align*}
First, $\norm{\tilde{n}^{-1/2}\bm{R}\bm{y}}_2 \leq \norm{\tilde{n}^{-1/2}\bm{y}}_2 = O_P(1)$. Next, because $\norm{n\hat{\bm{M}} - n\bm{M}} = o_P\left( n^{-1/2}\right)$, $\norm{n^{-1/2}\bm{Z}}_2 = O(1)$ and $\norm{n^{-1/2}\hat{\bm{c}}}_2 = 1$,
\begin{align*}
    \hat{x}_n =\tilde{n}^{-1}\bm{y}^{\T}\bm{R}\bm{y}- n^{-1}\left( \tilde{n}^{-1/2}\bm{R}\bm{y}\right)^{\T} \left(\bm{Z},\, \hat{\bm{c}} \right)\left(n \bm{M} \right)\left(\bm{Z},\, \hat{\bm{c}} \right)^{\T} \left( \tilde{n}^{-1/2}\bm{R}\bm{y}\right) + o_P\left( n^{-1/2}\right).
\end{align*}
Lastly, \eqref{equation:MetabMiss:C2hat} and \eqref{equation:MetabMiss:wvResults:bound} imply
\begin{align*}
 \hat{x}_n =\tilde{n}^{-1}\bm{y}^{\T}\bm{R}\bm{y}- n^{-1}\left( \tilde{n}^{-1/2}\bm{R}\bm{y}\right)^{\T} \left(\bm{Z},\, \bm{c} \right)\left(n \bm{M} \right)\left(\bm{Z},\, \bm{c} \right)^{\T} \left( \tilde{n}^{-1/2}\bm{R}\bm{y}\right) + o_P\left( 1\right) = \hat{\tilde{v}}_n + o_P\left( 1\right),
\end{align*}
which completes the proof.
\end{proof}

\begin{theorem}
\label{theorem:Supp:OrderEffects}
Fix a $g \in \Missing$, suppose Assumption \ref{assumption:Basic} holds with $\bm{Z}=\bm{1}_n$ and let $\sigma$ be a permutation of $\left[K_{\miss}\right]$ such that $\abs{\Corr\left\lbrace y_{g1},\bm{\Xi}_{1\sigma(1)} \mid r_{g1}=1\right\rbrace} \geq \cdots \geq \abs{\Corr\left\lbrace y_{g1},\bm{\Xi}_{1\sigma\left(K_{\miss}\right)} \mid r_{g1}=1\right\rbrace}$. For $p_{g,k}$ defined in Step \ref{item:InstrumentSelection:Pvalue} of Algorithm \ref{algorithm:SelectU}, suppose the corresponding q-value from Step \ref{item:InstrumentSelection:Qvalue}, $q_{g,k}$, is defined to be 
\begin{align*}
    q_{g,k} = \frac{p_{g,k}\hat{\pi}_{0,k}}{\hat{F}_k\left(p_{g,k}\right)}, \quad g\in\Missing;k \in \left[K_{\miss}\right],
\end{align*}
where $\hat{\pi}_{0,k},\hat{F}_k(x) \in [0,1]$ are estimates for
\begin{align*}
    \pi_{0,k} &= p^{-1}\sum\limits_{g\in\Missing} I\left( \hat{\bm{C}}_{\miss_{\bigcdot k}} \indep \bm{y}_g \right), \quad k \in \left[K_{\miss}\right]\\
    F_k(x) &= p^{-1}\sum\limits_{g\in\Missing} \Prob\left( p_{g,k} \leq x \right)\quad k \in \left[K_{\miss}\right]; x \in [0,1].
\end{align*}
Assume the following hold:
\begin{enumerate}[label=(\roman*)]
    \item There exists $s_1,s_2 > 0$ such that $\left(p^{s_1}\hat{\pi}_{0,k}\right)^{-1}, \left\lbrace p^{s_1}\hat{F}_{k}\left(\mathop{\min}\limits_{g\in\Missing}p_{g,k}\right)\right\rbrace^{-1} = o_P(1)$ as $n \to \infty$ for all $k \in \left[ 0,K_{\miss}\right]$.\label{item:Supp:qvalue}
    \item $\abs{\Corr\left\lbrace y_{g1},\bm{\Xi}_{1\sigma(2)} \mid r_{g1}=1\right\rbrace}> \abs{\Corr\left\lbrace y_{g1},\bm{\Xi}_{1\sigma(3)} \mid r_{g1}=1\right\rbrace}$.\label{item:Supp:Corr}
\end{enumerate}
Then for $g_1,g_2$ defined in Step \ref{item:InstrumentSelection:Indices} of Algorithm \ref{algorithm:SelectU},
\begin{align*}
    \lim_{n \to \infty}\Prob\left[ g_1,g_2 \in \left\lbrace \sigma(1),\sigma(2) \right\rbrace\right] = 1.
\end{align*}
\end{theorem}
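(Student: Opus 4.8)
The plan is to show that the two smallest q-values $q_{g,g_1}\le q_{g,g_2}$ selected in Step \ref{item:InstrumentSelection:Indices} are, with probability tending to one, exactly $q_{g,\sigma(1)}$ and $q_{g,\sigma(2)}$. Since $K_{\miss}$ is fixed, it suffices to prove that the single event
\[
\max\!\left\{q_{g,\sigma(1)},q_{g,\sigma(2)}\right\}<\min_{3\le j\le K_{\miss}}q_{g,\sigma(j)}
\]
occurs asymptotically almost surely. My first step is to reduce this comparison of q-values to a comparison of the underlying \textit{P} values. Because $\hat{\pi}_{0,k},\hat{F}_k(x)\in[0,1]$ and $\hat{F}_k$ is nondecreasing, condition (i) (which forces $p^{s_1}\hat{\pi}_{0,k}\to\infty$ and $p^{s_1}\hat{F}_k(\min_{g'}p_{g',k})\to\infty$ in probability) yields, simultaneously over the finitely many $k\in\left[K_{\miss}\right]$ and with probability tending to one,
\[
p^{-s_1}\,p_{g,k}\;\le\;q_{g,k}=\frac{p_{g,k}\hat{\pi}_{0,k}}{\hat{F}_k\left(p_{g,k}\right)}\;\le\;p^{s_1}\,p_{g,k}.
\]
Hence the q-values differ from their \textit{P} values only by the factor $p^{\pm s_1}$, which is sub-exponential in $n$ because $p/n$ has a finite positive limit by Assumption \ref{assumption:Basic}, so $n^{-1}\log p\to 0$.

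The second, and most substantial, step is to pin down the exponential decay rate of $p_{g,k}$. Writing $\rho_k=\left|\Corr\left\lbrace y_{g1},\bm{\Xi}_{1k}\mid r_{g1}=1\right\rbrace\right|$ and $\hat{z}_{g,k}$ for the z-score of Theorem \ref{theorem:MetabMiss:OLS_MNAR} associated with the $k$th factor, I would extend the decomposition in the proof of Lemma \ref{lemma:MetabMiss:OLS_MNAR_observed} from the null to the alternative: the same steps show $\hat{z}_{g,k}=n^{1/2}b(\rho_k)+O_P(1)$ for a function $b(\cdot)$ that is strictly increasing with $b(0)=0$, so that $\rho_k=0$ reproduces the null limit $\hat{z}_{g,k}\tdist N_1(0,1)$. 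Feeding this into the Gaussian (equivalently large-$\nu$ $t$) tail used to form $p_{g,k}$ gives
\[
-n^{-1}\log p_{g,k}=\theta_k+o_P(1),\qquad \theta_k=\tfrac12\,b(\rho_k)^2,
\]
where $\theta_k$ is strictly increasing in $\rho_k$ and equals $0$ exactly when $\rho_k=0$; in that case $p_{g,k}$ is asymptotically uniform by Theorem \ref{theorem:MetabMiss:OLS_MNAR}, which is consistent with $\theta_k=0$. That this rate is controlled by the population correlation $\rho_k$, rather than by the estimated factor, follows from the closeness of $\hat{\bm{C}}_{2_{\bigcdot k}}$ to $\bm{C}_{2_{\bigcdot k}}$ in Lemma \ref{lemma:MetabMiss:PCA} (via the bounds on $\hat{\bm{w}}_{\bigcdot k}$), exactly as exploited in the proof of Theorem \ref{theorem:MetabMiss:OLS_MNAR}.

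Finally I would combine the two steps. Taking $-n^{-1}\log$ of the q-value bounds from the first step and using $n^{-1}\log p\to 0$, every q-value inherits the rate of its \textit{P} value, $-n^{-1}\log q_{g,k}=\theta_k+o_P(1)$ for each $k\in\left[K_{\miss}\right]$. Because $\theta$ is ordered as the correlations are, $-n^{-1}\log\max\left\lbrace q_{g,\sigma(1)},q_{g,\sigma(2)}\right\rbrace\to\theta_{\sigma(2)}$ while $-n^{-1}\log\min_{j\ge 3}q_{g,\sigma(j)}\to\theta_{\sigma(3)}$, and condition (ii) supplies the strict gap $\theta_{\sigma(2)}>\theta_{\sigma(3)}$. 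Therefore $\max\left\lbrace q_{g,\sigma(1)},q_{g,\sigma(2)}\right\rbrace$ decays strictly faster than $\min_{j\ge 3}q_{g,\sigma(j)}$, the displayed event holds with probability tending to one, and $\left\lbrace g_1,g_2\right\rbrace=\left\lbrace \sigma(1),\sigma(2)\right\rbrace$. The main obstacle is the non-null z-score analysis in the second step: Theorem \ref{theorem:MetabMiss:OLS_MNAR} only certifies the null limit, so the real work is establishing $\hat{z}_{g,k}=n^{1/2}b(\rho_k)+O_P(1)$ with strictly monotone $b$ and controlling the estimated-factor perturbation $\hat{\bm{w}}_{\bigcdot k}$ under the alternative, which is what makes the decay rates $\theta_k$ well defined and correctly ordered.
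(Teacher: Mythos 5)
Your proposal is correct and follows essentially the same route as the paper: the paper likewise shows that $\tilde{n}^{-1}z_{g,k}^2$ converges to a strictly increasing function of $\abs{\Corr\left\lbrace y_{g1},\bm{\Xi}_{1k}\mid r_{g1}=1\right\rbrace}$ (using Lemma \ref{lemma:MetabMiss:PCA} to replace $\hat{\bm{C}}_{2_{\bigcdot k}}$ by $\bm{\Xi}_{\bigcdot k}$), and then uses the Gaussian tail together with condition (i) to show the ratio of the relevant \textit{P} values beats any power of $p$, which is just the multiplicative phrasing of your $-n^{-1}\log$ rate comparison. The only cosmetic difference is that the paper establishes the non-null z-score limit in multiplicative form, $\hat{z}_{g,k}=n^{1/2}b(\rho_k)\left\lbrace 1+o_P(1)\right\rbrace$, rather than your slightly stronger additive $O_P(1)$ claim, but either suffices for the argument.
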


\begin{remark}
\label{remark:qvalue}
Item \ref{item:Supp:qvalue} is a weak condition, since $\hat{F}_{k}\left(\mathop{\min}\limits_{g\in\Missing}p_{g,k}\right)$ is typically $\abs{\Missing}^{-1}$ \citep{qvalue}.
\end{remark}

\begin{remark}
\label{remark:IndicesZ}
Item \ref{item:Supp:Corr} has an analogue when $\bm{Z} \neq \bm{1}_n$, although it is not as intuitive as when $\bm{Z} = \bm{1}_n$.
\end{remark}

\begin{proof}
Without loss of generality, assume $\sigma$ is the identity. For $\hat{\bm{C}}_2$ defined in \eqref{equation:MetabMiss:C2hat}, let $\hat{\bm{c}}_k = \hat{\bm{C}}_{2_{\bigcdot k}}$ for all $k \in \left[K_{\miss}\right]$. Then for $\bm{R}_g = \diag\left(r_{g1},\ldots,r_{gn}\right)$ and $\tilde{n}=\sum\limits_{i=1}^n r_{gi} - 2$, define $z_{g,k}^2$ for each $k \in \left[K_{\miss}\right]$ to be
\begin{align*}
    x_{g,k} &= \tilde{n}^{-1}\bm{y}_g^{\T}\bm{R}_g P_{R_g1_n}^{\perp}\bm{R}_g\hat{\bm{c}}_k \left(\hat{\bm{c}}_k^{\T}\bm{R}_gP_{R_g1_n}^{\perp}\bm{R}_g \hat{\bm{c}}_k\right)^{-1} \hat{\bm{c}}_k^{\T}\bm{R}_gP_{R_g1_n}^{\perp}\bm{R}_g\bm{y}_g\\
    \tilde{n}^{-1}z_{g,k}^2 &= \frac{x_{g,k}}{\tilde{n}^{-1}\left\lbrace \bm{y}_g^{\T}\bm{R}_gP_{R_g1_n}^{\perp}\bm{R}_g\bm{y}_g - \tilde{n}x_{g,k} \right\rbrace}.
\end{align*}
Then the \textit{P} values defined in Step \ref{item:InstrumentSelection:Pvalue} of Algorithm \ref{algorithm:SelectU} are $p_{g,k} = 2\Phi\left(-\abs{z_{g,k}}^{1/2}\right)$, where $\Phi$ is the probit function. We first note that $n^{-1}\tilde{n} = \Prob\left(r_{g1}=1\right) + o_{a.s.}(1)$ where $\Prob\left(r_{g1}=1\right) > 0$ by Lemma \ref{lemma:MetabMiss:Pz} and
\begin{align*}
    \tilde{n}^{-1}\bm{\Xi}_{\bigcdot k}^{\T} \bm{R}_g P_{R_g1_n}^{\perp} \bm{R}_g \bm{\Xi}_{\bigcdot k} &= \tilde{n}^{-1}\sum\limits_{i=1}^n r_{gi}\bm{\Xi}_{ik}^2 - \left( \tilde{n}^{-1}\sum\limits_{i=1}^n r_{gi}\bm{\Xi}_{ik} \right)^2 = \V\left(\bm{\Xi}_{1k} \mid r_{g1}=1 \right) + o_{a.s.}(1)\\
    \tilde{n}^{-1}\bm{y}_g^{\T}\bm{R}_g P_{R_g1_n}^{\perp}\bm{R}_g\hat{\bm{c}}_k &= \tilde{n}^{-1}\sum\limits_{i=1}^n r_{gi}y_{gi}\bm{\Xi}_{ik} - \left(\tilde{n}^{-1}\sum\limits_{i=1}^n r_{gi}y_{gi}\right)\left(\tilde{n}^{-1}\sum\limits_{i=1}^n r_{gi}\bm{\Xi}_{ik}\right)\\
    & = \C\left(\bm{\Xi}_{i1},y_{g1}\mid r_{g1}=1\right) + o_{a.s.}\left(1\right)
\end{align*}
as $n \to \infty$. Next, by Lemma \ref{lemma:MetabMiss:PCA} and Assumption \ref{assumption:Basic}, it is easy to show that $\norm{\hat{\bm{c}}_k - \bm{\Xi}_{\bigcdot k}}_2 = o_P\left(n^{1/2}\right)$. Therefore, since $\V\left(\bm{\Xi}_{1k} \mid r_{g1}=1 \right) > 0$,
\begin{align*}
    x_{g,k} = \Corr\left(y_{g1},\bm{\Xi}_{1k} \mid r_{g1}=1\right)^2 \V\left(y_{g1}\mid r_{g1}=1\right) + o_P(1), \quad k\in\left[K_{\miss}\right]
\end{align*}
as $n \to \infty$. Therefore, for $k \in [2]$ and $t \in \left\lbrace 3,\ldots,K_{\miss} \right\rbrace$
\begin{align*}
    \abs{z_{g_k}} - \abs{z_{g_t}} \geq n^{1/2}\left\lbrace\abs{\Corr\left( y_{g1},\bm{\Xi}_{1k} \mid r_{g1}=1 \right)} - \abs{\Corr\left( y_{g1},\bm{\Xi}_{1t}  \mid r_{g1}=1 \right)}\right\rbrace \left\lbrace 1+o_P(1) \right\rbrace.
\end{align*}
The theorem then holds because for all $s > 0$,
\begin{align*}
    \frac{\Phi\left(-2\abs{z_{g_k}}\right)}{\Phi\left(-2\abs{z_{g_t}}\right)}p^s = o_P(1), \quad k \in [2]; t \in \left\lbrace 3,\ldots,K_{\miss} \right\rbrace
\end{align*}
as $n \to \infty$.
\end{proof}

\section{Approximating the sample moments with independent normal distributions}
\label{section:Supp:NormalProof}
In this section, we justify approximating the distribution of $\bar{\bm{h}}\left(\alpha_g,\delta_g \right) \mid \left(\alpha_g,\delta_g \right)$ with a normal distribution. We also show that for $\left\lbrace g_1,\ldots,g_r \right\rbrace \subseteq \Missing$ a set of finite cardinality, $\bar{\bm{h}}\left(\alpha_{g_1},\delta_{g_1} \right),\ldots,\bar{\bm{h}}\left(\alpha_{g_r},\delta_{g_r} \right)$ are asymptotically independent as $n,p \to \infty$, conditional on $\left(\alpha_{g_1},\delta_{g_1} \right),\ldots,\left(\alpha_{g_r},\delta_{g_r} \right)$. This helps to justify our hierarchical Bayesian generalized method of moments (HB-GMM) procedure in Section \ref{subsection:HBGMM}. Our main result is Theorem \ref{theorem:MetabMiss:Normal}. We first place an assumption on the smoothness of $\Psi(x)$, which is used throughout this and the next section.

\begin{assumption}
\label{assumption:MetabMiss:MissMech}
$\Psi(x)$ is twice continuously differentiable with bounded first and second derivatives. Further, for some large constants $M_1,M_2 > 0$,
\begin{enumerate}[label=(\roman*)]
    \item Either $a\abs{x}^k \Psi(x)=1+R(x)$ or $a\exp\left(k\abs{x}\right)\Psi(x)=1+R(x)$ for all $x \in \left(-\infty,-M_1\right)$, where $\mathop{\lim}\limits_{x \to -\infty}R(x) = 0$ and $\abs{dR(x)/dx }, \abs{d^2R(x)/dx^2 } \leq M_1$ for some $a,k > 0$.
    \item $\E\left(\left[ \Psi\left\lbrace \left(\alpha_{g}+M_1^{-1}\right)\left( y_{g1} - \delta_{g}\right) \right\rbrace \right]^{-\left(3+M_2^{-1}\right)}\right) < \infty$ for all $g \in \Missing$.
\end{enumerate}
\end{assumption}

\begin{remark}
\label{remark:MetabMiss:Bound}
Under Assumption \ref{assumption:Basic}, one can show Assumption \ref{assumption:MetabMiss:MissMech} holds for the following values of $\Psi(x)$:
\begin{enumerate}[label=(\roman*)]
    \item If $\Psi(x)=\exp(x)/\left\lbrace 1+\exp(x) \right\rbrace$, Assumption \ref{assumption:MetabMiss:MissMech} holds if the entries of $\bm{\Xi}$ and $\bm{e}_g$ have a moment generating function that is defined on all of $\mathbb{R}$.
    \item If $\Psi(x)=F_{\nu}(x)$, Assumption \ref{assumption:MetabMiss:MissMech} holds if $\E\left(\abs{ \bm{\Xi}_{1k} }^{3\nu + \epsilon}\right), \E\left(\abs{ e_{g1} }^{3\nu + \epsilon}\right) < \infty$ for some $\epsilon > 0$ for all $k \in [K]$.
\end{enumerate}
\end{remark}

\begin{lemma}
\label{lemma:MetabMissPsi}
Fix a $g \in \Missing$ and let $M_1,M_2$ be as defined in Assumption \ref{assumption:MetabMiss:MissMech}. Under Assumptions \ref{assumption:Basic} and \ref{assumption:MetabMiss:MissMech},
\begin{align*}
    \limsup_{n \to \infty}\left[\max_{i \in [n]}  \left\lbrace \E\left(\left[ \Psi\left\lbrace \left(\alpha_{g}+M_1^{-1}\right)\left( y_{gi} - \delta_{g}\right) \right\rbrace \right]^{-\left(3+M_2^{-1}\right)}\right) \right\rbrace \right] < \infty.
\end{align*}
\end{lemma}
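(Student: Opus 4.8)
The plan is to exploit the fact that, for a fixed $g$, the only part of $y_{gi}$ that varies with $i$ is a bounded deterministic location, while the stochastic part is identically distributed in $i$. Following the decomposition used in the proof of Lemma \ref{lemma:MetabMiss:Pz}, I would set $\tilde{\bm{\xi}}_g = \bm{\xi}_g + \bm{A}\tilde{\bm{\ell}}_g$ and $\tilde{e}_{gi} = \bm{\Xi}_{i\bigcdot}\tilde{\bm{\ell}}_g + e_{gi}$, so that Assumption \ref{assumption:Basic} yields $y_{gi} = \bm{Z}_{i\bigcdot}^{\T}\tilde{\bm{\xi}}_g + \tilde{e}_{gi}$, where $\tilde{e}_{g1},\ldots,\tilde{e}_{gn}$ are independent and identically distributed (the rows of $\bm{\Xi}$ are i.i.d., the $e_{gi}$ are i.i.d., and the two are independent), and the means $\mu_{gi} = \bm{Z}_{i\bigcdot}^{\T}\tilde{\bm{\xi}}_g$ obey $\abs{\mu_{gi}} \le c_1\norm{\tilde{\bm{\xi}}_g}_1 =: B$ uniformly in $i$ and $n$ because $\norm{\bm{Z}}_\infty \le c_1$ and $t$ is fixed.

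Writing $\tilde\alpha_g = \alpha_g + M_1^{-1} > 0$ and $m = 3 + M_2^{-1} > 0$, I would first reduce the $\limsup_n\max_{i\in[n]}$ to a single worst-case expectation. Since $\Psi$ is increasing, $x\mapsto\Psi(x)^{-m}$ is decreasing, so using $\mu_{gi}\ge -B$ and the identical distribution of the $\tilde{e}_{gi}$,
\[
\E\left[\Psi\{\tilde\alpha_g(y_{gi}-\delta_g)\}^{-m}\right]\le\E\left[\Psi\{\tilde\alpha_g(\tilde{e}_{g1}-\delta_g-B)\}^{-m}\right]=:\mathcal{C}
\]
for every $i$ and $n$, so it suffices to show $\mathcal{C}<\infty$. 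The crucial point — and the main obstacle — is that monotonicity alone is the wrong direction: Assumption \ref{assumption:MetabMiss:MissMech}(ii) controls the expectation at the location $\delta_g-\mu_{g1}\le \delta_g+B$, which gives a \emph{larger} argument of $\Psi$ and hence a \emph{smaller} integrand, so its finiteness does not transfer to $\mathcal{C}$ by comparison. The quantitative tail form of $\Psi$ is therefore indispensable.

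The transfer across locations I would obtain from the tail regularity in Assumption \ref{assumption:MetabMiss:MissMech}(i). I claim that for any $c\ge 0$ there is $\kappa\in(0,1]$ with $\Psi(u-\tilde\alpha_g c)\ge\kappa\,\Psi(u)$ for all $u$. In the polynomial case $\Psi(u-\tilde\alpha_g c)/\Psi(u)=(\abs{u}/\abs{u-\tilde\alpha_g c})^{k}\{1+R(u-\tilde\alpha_g c)\}/\{1+R(u)\}\to 1$ as $u\to-\infty$, while in the exponential case the same ratio tends to $e^{-k\tilde\alpha_g c}$; as $u\to+\infty$ both factors tend to $1$, and on any compact set the ratio is bounded below by a positive constant by continuity and strict positivity of $\Psi$, which proves the claim. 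Applying it with $c=B+\mu_{g1}\ge 0$ and $u=\tilde\alpha_g(y_{g1}-\delta_g)$ gives $\Psi\{\tilde\alpha_g(\tilde{e}_{g1}-\delta_g-B)\}\ge\kappa\,\Psi\{\tilde\alpha_g(y_{g1}-\delta_g)\}$, hence
\[
\mathcal{C}\le\kappa^{-m}\,\E\left[\Psi\{\tilde\alpha_g(y_{g1}-\delta_g)\}^{-m}\right]<\infty
\]
by Assumption \ref{assumption:MetabMiss:MissMech}(ii). Everything after the tail estimate is routine bookkeeping with the i.i.d. decomposition of $y_{gi}$, so the argument concludes.
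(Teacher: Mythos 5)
Your proof is correct and follows essentially the same route as the paper's: decompose $y_{gi}$ into a uniformly bounded deterministic location plus an i.i.d.\ part, use monotonicity of $\Psi$ to reduce the $\max_{i}$ to a single location-shifted expectation, and then transfer Assumption \ref{assumption:MetabMiss:MissMech}(ii) across that finite shift. The one place you go beyond the paper is the transfer step itself: the paper simply asserts that the result ``follows because $\mu-\mu_1$ is finite,'' whereas you correctly observe that the shift points in the unfavorable direction for a pure monotonicity comparison and supply the needed bound $\inf_u \Psi(u-\tilde\alpha_g c)/\Psi(u)>0$ from the tail form of $\Psi$ in Assumption \ref{assumption:MetabMiss:MissMech}(i) --- a detail the paper leaves implicit but which your argument makes rigorous.
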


\begin{proof}
Let $\mu_i = \bm{Z}_{i\bigcdot}^{\T} \left(\bm{\xi}_{g} + \bm{A}\bm{\ell}_{g}\right)$ and $\tilde{e}_{gi}=y_{gi} - \mu_i$ for each $i \in [n]$. Then $\tilde{e}_{g1},\ldots,\tilde{e}_{gn}$ are identically distributed and $\mathop{\liminf}\limits_{n \to \infty}\mathop{\min}\limits_{i\in[n]}\mu_i \geq \mu > -\infty$ because the entries of $\bm{Z}$ are uniformly bounded. Therefore, for any $i \in [n]$,
\begin{align*}
    \Psi\left\lbrace \left(\alpha_{g}+M_1^{-1}\right)\left( y_{gi} - \delta_{g}\right) \right\rbrace &\geq \Psi\left[ \left(\alpha_{g}+M_1^{-1}\right)\left\lbrace \tilde{e}_{gi}+\mu_1 + \left(\mu-\mu_1\right) - \delta_{g}\right\rbrace \right]\\
    & \edist \Psi\left[ \left(\alpha_{g}+M_1^{-1}\right)\left\lbrace y_{g1} + \left(\mu-\mu_1\right) - \delta_{g}\right\rbrace \right].
\end{align*}
The result then follows because $\mu-\mu_1$ is finite.
\end{proof}

\begin{lemma}
\label{lemma:MetabMiss:NormalKnown}
Suppose $t < 3$, fix a $g \in \Missing$ and let $g_1,\ldots,g_{3-t} \in \left[K_{\miss}\right]$. For $\bm{C}_2$ defined in \eqref{equation:MetabMiss:Ctilde} and $w_{gi}\left( \alpha,\delta\right) = r_{gi}/\Psi\left\lbrace \alpha\left(y_{gi}-\delta\right) \right\rbrace$, define
\begin{align*}
    &\bm{u}_{gi} = \left( \bm{Z}_{i\bigcdot}^{\T} ,\, \bm{C}_{2_{ig_1}}, \cdots, \bm{C}_{2_{ig_{3-t}}} \right)^{\T} \in \mathbb{R}^3, \quad i \in [n]\\
    &\tilde{\bm{h}}_g\left( \alpha,\delta \right) = n^{-1}\sum\limits_{i=1}^n \bm{u}_{gi}\left\lbrace 1-w_{gi}\left( \alpha,\delta\right) \right\rbrace\\
    &\tilde{\bm{\Sigma}}_g\left( \alpha,\delta \right) = n^{-1}\sum\limits_{i=1}^n \left[ \bm{u}_{gi}\left\lbrace 1-w_{gi}\left( \alpha,\delta\right) \right\rbrace - \tilde{\bm{h}}_g\left( \alpha,\delta \right) \right] \left[ \bm{u}_{gi}\left\lbrace 1-w_{gi}\left( \alpha,\delta\right) \right\rbrace - \tilde{\bm{h}}_g\left( \alpha,\delta \right) \right]^{\T}.
\end{align*}
Then if Assumptions \ref{assumption:Basic} and \ref{assumption:MetabMiss:MissMech} hold,
\begin{align*}
    & \norm{n\V\left\lbrace \tilde{\bm{h}}_g\left( \alpha_g,\delta_g \right)\right\rbrace - \tilde{\bm{\Sigma}}_g\left( \alpha_g,\delta_g \right)}_2 = o_{P}(1), \quad \norm{n\V\left\lbrace \tilde{\bm{h}}_g\left( \alpha_g,\delta_g \right)\right\rbrace}_2,\, \norm{\left[ n\V\left\lbrace \tilde{\bm{h}}_g\left( \alpha_g,\delta_g \right)\right\rbrace \right]^{-1}}_2 \leq c\\
    &n^{1/2}\left\lbrace \tilde{\bm{\Sigma}}_g\left( \alpha_g,\delta_g \right) \right\rbrace^{-1/2} \tilde{\bm{h}}_g\left( \alpha_g,\delta_g \right) \tdist N_3\left( \bm{0}_3, I_3\right)
\end{align*}
as $n \to \infty$, where $c > 0$ is a constant that does not depend on $n$. Further, if $\left\lbrace g_1,\ldots,g_r \right\rbrace \subseteq \Missing$ is a set of at most finite cardinality, $\tilde{\bm{h}}_{g_1}\left( \alpha_{g_1},\delta_{g_1} \right),\ldots,\tilde{\bm{h}}_{g_r}\left( \alpha_{g_r},\delta_{g_r} \right)$ are asymptotically independent as $n \to \infty$.
\end{lemma}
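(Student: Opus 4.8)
The plan is to reduce each $\tilde{\bm{h}}_g(\alpha_g,\delta_g)$ to a normalized sum of \emph{independent}, mean-zero, three-dimensional summands and then invoke a multivariate Lindeberg--Feller central limit theorem, with essentially all the work going into controlling the inverse-probability weights $w_{gi}(\alpha_g,\delta_g)=r_{gi}/\Psi\{\alpha_g(y_{gi}-\delta_g)\}$. The starting point is the moment identity: under Model \eqref{equation:MissingMech}, $r_{gi}\mid y_{gi}$ is Bernoulli with mean $\Psi\{\alpha_g(y_{gi}-\delta_g)\}$, so $\E\{1-w_{gi}(\alpha_g,\delta_g)\mid y_{gi}\}=0$, and since $r_{gi}$ depends only on $y_{gi}$ we have $r_{gi}\indep\bm{u}_{gi}\mid y_{gi}$; hence each summand $\bm{u}_{gi}\{1-w_{gi}(\alpha_g,\delta_g)\}$ has unconditional mean $\bm{0}$. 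These summands are not exactly independent across $i$ because $\bm{C}_2$ in \eqref{equation:MetabMiss:Ctilde} couples the rows through its normalization. I would first write $P_Z^{\perp}\bm{C}=P_Z^{\perp}\bm{\Xi}$ and, using $n^{-1}\bm{\Xi}^{\T}P_Z^{\perp}\bm{\Xi}=I_K+O_P(n^{-1/2})$ together with the fact that $\bm{U}$ lies within $O_P(n^{-1/2})$ of the identity (the column bound established in the proof of Lemma \ref{lemma:MetabMiss:OLS_MNAR_observed}), replace the instrument coordinates of $\bm{u}_{gi}$ by the raw entries $\bm{\Xi}_{ig_1},\dots,\bm{\Xi}_{ig_{3-t}}$, an error that is uniform over $i$ and of order $O_P(n^{-1/2})$ and therefore contributes $o_P(n^{-1/2})$ to $\tilde{\bm{h}}_g$. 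After this substitution the innovations $(\bm{\Xi}_{i\bigcdot},e_{gi})$ are i.i.d.\ across $i$, while the summands are independent but not identically distributed because the deterministic means $\mu_i=\bm{Z}_{i\bigcdot}^{\T}(\bm{\xi}_g+\bm{A}\bm{\ell}_g)$ vary with $i$; this is exactly the regime of Lemma \ref{lemma:MetabMiss:OLS_MNAR_observed}.

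For the central limit theorem I would verify a Lyapunov condition and the consistency of the empirical covariance. The only delicate quantities are moments of $1-w_{gi}=1-r_{gi}/\Psi$, which can be heavy-tailed because $\Psi$ may be near $0$; since $\E\{(1-w_{gi})^m\mid y_{gi}\}$ is a polynomial in $\Psi^{-1}$ of degree $m$, a finite $(2+\epsilon)$ moment of the summand follows from H\"older's inequality, pairing the bounded eighth moments of $\bm{\Xi}$ from Assumption \ref{assumption:Basic} against the uniform bound $\limsup_n\max_i\E[\Psi\{(\alpha_g+M_1^{-1})(y_{gi}-\delta_g)\}^{-(3+M_2^{-1})}]<\infty$ of Lemma \ref{lemma:MetabMissPsi}; the exponent $3+M_2^{-1}$ is precisely what the conjugate exponents require. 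The same bound yields $\norm{n\V\{\tilde{\bm{h}}_g\}-\tilde{\bm{\Sigma}}_g}_2=o_P(1)$ via a weak law for the independent outer products (the sample mean $\tilde{\bm{h}}_g=O_P(n^{-1/2})$ makes the centering negligible) and the upper spectral bound $\norm{n\V\{\tilde{\bm{h}}_g\}}_2\le c$. For the lower bound $\norm{[n\V\{\tilde{\bm{h}}_g\}]^{-1}}_2\le c$ I would condition on $y_{gi}$: the conditional variance of $1-w_{gi}$ equals $(1-\Psi)/\Psi$, so $n\V\{\tilde{\bm{h}}_g\}=n^{-1}\sum_i\E[\bm{u}_{gi}\bm{u}_{gi}^{\T}(1-\Psi_i)/\Psi_i]$; using the uniform boundedness of $\mu_i$ and Lemma \ref{lemma:MetabMiss:Pz}, $(1-\Psi_i)/\Psi_i$ is bounded below on an event of probability bounded away from $0$ uniformly in $i$, and combining this with the non-degeneracy of the design (guaranteed by $\bm{Z}$ full rank and $\V(\bm{\Xi}_{1\bigcdot})=I_K$) gives a uniform positive-definite lower bound.

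The asymptotic independence across a finite set $\{g^{(1)},\dots,g^{(r)}\}\subseteq\Missing$ I would obtain by conditioning on $\bm{\Xi}$. Each $\bm{u}_{g^{(j)}i}$ is $\bm{\Xi}$-measurable, and by the same computation as above the only randomness left in $1-w_{g^{(j)}i}$ given $\bm{\Xi}$ is $e_{g^{(j)}i}$ and the Bernoulli draw, so $\E\{1-w_{g^{(j)}i}\mid\bm{\Xi}\}=0$ and hence $\E\{\tilde{\bm{h}}_{g^{(j)}}\mid\bm{\Xi}\}=\bm{0}$; moreover, because $\bm{e}_{g^{(1)}},\dots,\bm{e}_{g^{(r)}}$ and the corresponding missingness indicators are mutually independent given $\bm{\Xi}$, the blocks are conditionally independent. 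The law of total covariance $\C(\tilde{\bm{h}}_{g^{(j)}},\tilde{\bm{h}}_{g^{(k)}})=\E\{\C(\cdot\mid\bm{\Xi})\}+\C\{\E(\cdot\mid\bm{\Xi}),\E(\cdot\mid\bm{\Xi})\}$ then makes every off-diagonal block exactly $\bm{0}$. Applying the multivariate Lindeberg--Feller theorem to the stacked independent summands gives joint asymptotic normality of $(\tilde{\bm{h}}_{g^{(1)}},\dots,\tilde{\bm{h}}_{g^{(r)}})$ with block-diagonal limiting covariance, and a Gaussian vector with block-diagonal covariance has independent blocks, which is the asserted asymptotic independence.

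The main obstacle is the moment control of the weights: because $\Psi^{-1}$ is unbounded, neither the Lyapunov condition, nor the consistency of $\tilde{\bm{\Sigma}}_g$, nor the two-sided spectral bounds on $n\V\{\tilde{\bm{h}}_g\}$ are automatic, and each hinges on squeezing the tail moments of $\Psi^{-1}$ against the finite moments of $\bm{\Xi}$ and $\bm{e}_g$ through H\"older, with Lemma \ref{lemma:MetabMissPsi} supplying exactly the uniform-in-$i$ integrability the conjugate exponents demand. A secondary, more bookkeeping-type difficulty is the row-coupling introduced by the normalization in \eqref{equation:MetabMiss:Ctilde}, which must be stripped off by the $\bm{C}_2\approx\bm{\Xi}$ approximation before any limit theorem for independent summands can be invoked.
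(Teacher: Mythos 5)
Your proposal is correct and shares the paper's overall skeleton: strip the row-coupling in \eqref{equation:MetabMiss:Ctilde} by replacing the instrument coordinates with the raw entries of $\bm{\Xi}$ (the paper does this by writing $\bm{u}_{gi}=\hat{\bm{M}}^{\T}\left(\bm{Z}_{i\bigcdot}^{\T},\bm{\Xi}_{i\bigcdot}^{\T}\right)^{\T}$ with $\hat{\bm{M}}$ within $o_P(1)$ of a fixed block-identity matrix), then treat $\tilde{\bm{h}}_g$ as an average of independent, mean-zero, non-identically-distributed summands, bound the spectrum of $n\V\{\tilde{\bm{h}}_g\}$ from both sides, and obtain exactly zero cross-covariance by conditioning plus the law of total covariance (the paper conditions on $\{\bm{Y},\bm{C},\bm{Z}\}$ rather than on $\bm{\Xi}$; both work). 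Where you genuinely diverge is the verification of the limit-theorem conditions. You propose a Lyapunov condition and a uniform-integrability weak law for the outer products, both extracted by H\"older from the uniform $\left(3+M_2^{-1}\right)$-th inverse moment of $\Psi$ in Lemma \ref{lemma:MetabMissPsi} paired against the eighth moments of $\bm{\Xi}$; the paper instead constructs an explicit monotone coupling $r_{gi}\leq r_{gi}^{(m)}$ dominating each squared summand by an i.i.d.\ integrable variable and then runs dominated convergence for Lindeberg and a truncation/Kronecker/Khintchine--Kolmogorov argument for the covariance consistency. Your route is valid and arguably more direct: with $p=q=2$ in H\"older one gets a uniformly bounded $\left(2+\delta\right)$-th moment of the summands for $\delta\leq M_2^{-1}/2$ (Lyapunov) and a uniformly bounded $\left(1+\delta\right)$-th moment of the outer products (a weak law for independent uniformly integrable terms); note that a Chebyshev/variance-based weak law would \emph{not} close this step, since fourth moments of the outer products are not controlled by the assumptions, so you should be explicit that you invoke the $\left(1+\delta\right)$-moment version. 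Two small points to tighten: the claim that the $\bm{C}_2\to\bm{\Xi}$ substitution contributes $o_P\left(n^{-1/2}\right)$ to $\tilde{\bm{h}}_g$ holds only because the error factors as an $O_P\left(n^{-1/2}\right)$ matrix acting on the mean-zero average $n^{-1}\sum_{i}\bm{\Xi}_{i\bigcdot}\left\lbrace 1-w_{gi}\left(\alpha_g,\delta_g\right)\right\rbrace=O_P\left(n^{-1/2}\right)$, not from the uniform-over-$i$ size of the perturbation alone; and for the lower spectral bound the positive-probability event on which $\left(1-\Psi_i\right)/\Psi_i$ stays bounded away from zero must bound $y_{gi}$ from \emph{above} so that $\Psi_i$ stays away from $1$ --- it is a truncation of the residual from above, not the phenomenon controlled by Lemma \ref{lemma:MetabMiss:Pz}, which concerns $\Psi_i$ staying away from $0$.
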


\begin{proof}
Since $K_{\miss}$ is at most finite, it suffices to assume $\bm{u}_{gi} = \left(\bm{Z}_{i \bigcdot}^{\T}, \bm{C}_{2_{i1}},\ldots,\bm{C}_{2_{iK_{\miss}}} \right)^{\T}$ to prove the lemma. Let $\bm{U}_{\miss} = \left(\bm{U}_{\bigcdot 1} \cdots \bm{U}_{\bigcdot K_{\miss}}\right)$, where $\bm{U}$ is as defined in the statement of Lemma \ref{lemma:MetabMiss:PCA}. Then as defined,
\begin{align*}
    \bm{u}_{gi} = \hat{\bm{M}}^{\T} \begin{pmatrix}\bm{Z}_{i \bigcdot}\\
   \bm{\Xi}_{i \bigcdot}\end{pmatrix}, \quad \hat{\bm{M}} = \begin{pmatrix}
   I_t & -\left(\bm{Z}^{\T}\bm{Z} \right)^{-1}\bm{Z}^{\T} \bm{\Xi}\left(\bm{\Xi}^{\T} P_Z^{\perp}\bm{\Xi}\right)^{-1/2}\bm{U}_{\miss}\\
   \bm{0} & \left(\bm{\Xi}^{\T} P_Z^{\perp}\bm{\Xi}\right)^{-1/2}\bm{U}_{\miss}
   \end{pmatrix}.
\end{align*}
And since 
\begin{align*}
    \norm{\hat{\bm{M}}^{\T} - \begin{pmatrix}I_t & \bm{0} & \bm{0}\\\bm{0} & \bm{I}_{K_{\miss}} & \bm{0} \end{pmatrix} }_2 = o_P(1)
\end{align*}
as $n \to \infty$, it suffices to further simplify the problem and assume $\bm{u}_{gi} = \left(\bm{Z}_{i \bigcdot}^{\T}, \bm{\Xi}_{i1},\ldots, \bm{\Xi}_{iK} \right)^{\T}$, meaning $\tilde{\bm{h}}_g\left( \alpha_g,\delta_g \right)$ is an average of independent random variables. Further, for $\Data = \left\lbrace \bm{Y},\bm{C},\bm{Z} \right\rbrace$ and any $g \neq s \in \Missing$,
\begin{align*}
    \C\left\lbrace \tilde{\bm{h}}\left(\alpha_g,\delta_g\right), \tilde{\bm{h}}\left(\alpha_s,\delta_s\right) \right\rbrace =& \C\left[ \E\left\lbrace \tilde{\bm{h}}\left(\alpha_g,\delta_g\right) \mid \Data \right\rbrace, \E\left\lbrace \tilde{\bm{h}}\left(\alpha_s,\delta_s\right) \mid \Data \right\rbrace \right]\\
    &+ \E\left[ \C\left\lbrace \tilde{\bm{h}}\left(\alpha_g,\delta_g\right), \tilde{\bm{h}}\left(\alpha_s,\delta_s\right) \mid \Data \right\rbrace \right] = \bm{0}
\end{align*}
because $\E\left\lbrace \tilde{\bm{h}}\left(\alpha_g,\delta_g\right) \mid \Data \right\rbrace = \bm{0}$ and $\tilde{\bm{h}}\left(\alpha_g,\delta_g\right) \indep \tilde{\bm{h}}\left(\alpha_s,\delta_s\right) \mid \Data$. Therefore, to prove the lemma, we need only check that the Lindeberg condition holds and that $\norm{n\V\left\lbrace \tilde{\bm{h}}_g\left(\alpha_g, \delta_g\right)\right\rbrace - \tilde{\bm{\Sigma}}\left(\alpha_g, \delta_g\right)}_2= o_P(1)$.\par
\indent Let $\bm{v} = \left(\bm{v}_1^{\T}, \, \bm{v}_2^{\T} \right)^{\T} \in \mathbb{R}^{t+K}$ be a unit vector, where $\bm{v}_1 \in \mathbb{R}^t$ and $\bm{v}_2 \in \mathbb{R}^{K_{\miss}}$. First,
\begin{align*}
    n\V\left\lbrace \bm{v}^{\T}\tilde{\bm{h}}_g\left(\alpha_g, \delta_g\right)\right\rbrace &= n^{-1}\sum\limits_{i=1}^n \E\left[ \left\lbrace 1-w_{gi}\left(\alpha_g, \delta_g \right) \right\rbrace^2 \left( \bm{v}^{\T} \bm{u}_{gi}\right)^2 \right]\\
    & \leq n^{-1}\sum\limits_{i=1}^n\left(\E\left[ \left\lbrace 1-w_{gi}\left(\alpha_g, \delta_g \right) \right\rbrace^4 \right] \right)^{1/2}\left[\E\left\lbrace \left( \bm{v}^{\T} \bm{u}_{gi}\right)^4 \right\rbrace \right]^{1/2}.
\end{align*}
We see that
\begin{align*}
   \E\left[ \left\lbrace w_{gi}\left(\alpha_g, \delta_g \right) \right\rbrace^4 \right] = \E\left( \left[\Psi\left\lbrace \alpha_g\left(y_{gi}-\delta_{gi}\right) \right\rbrace\right]^{-3} \right), \, \E\left\lbrace \left( \bm{v}^{\T} \bm{u}_{gi}\right)^4 \right\rbrace \leq c
\end{align*}
for some constant $c$ that does not depend on $i$ or $n$ by Lemma \ref{lemma:MetabMissPsi} and Assumption \ref{assumption:Basic}, meaning $n\V\left\lbrace \bm{v}^{\T}\tilde{\bm{h}}_g\left(\alpha_g, \delta_g\right)\right\rbrace$ exists and is bounded from above. Next, let $M > 0$ be a large constant. Then for $\tilde{e}_{gi}$ as defined in Lemma \ref{lemma:MetabMissPsi}, and because $\E\left( y_{gi} \right)$ is uniformly bounded from below, 
\begin{align*}
        n\V\left\lbrace \bm{v}^{\T}\tilde{\bm{h}}_g\left(\alpha_g, \delta_g\right)\right\rbrace &= n^{-1}\sum\limits_{i=1}^n \E\left[ \left\lbrace 1-w_{gi}\left(\alpha_g, \delta_g \right) \right\rbrace^2 \left( \bm{v}^{\T} \bm{u}_{gi}\right)^2 \right]\\
        &\geq n^{-1}\sum\limits_{i=1}^n \E\left[  \left\lbrace 1-w_{gi}\left(\alpha_g, \delta_g \right) \right\rbrace^2 \left( \bm{v}^{\T} \bm{u}_{gi}\right)^2 I\left( \tilde{e}_{gi}\geq-M \right)\right]\\
        &\geq \eta_M \E\left\lbrace \left( \bm{v}^{\T} \bm{u}_{g1}\right)^2 I\left( \tilde{e}_{g1}\geq-M \right) \right\rbrace
\end{align*}
where $\eta_M > 0$ for all $M$. And since $\E\left\lbrace \left( \bm{v}^{\T} \bm{u}_{g1}\right)^2 I\left( \tilde{e}_{g1}\geq-M \right) \right\rbrace \geq c_2$ for some constant $c_2 > 0$ for all $M$ large enough, $n\V\left\lbrace \bm{v}^{\T}\tilde{\bm{h}}_g\left(\alpha_g, \delta_g\right)\right\rbrace \geq c_2\eta_M$. This proves that the eigenvalues of $n\V\left\lbrace \tilde{\bm{h}}_g\left(\alpha_g, \delta_g\right)\right\rbrace$ are uniformly bounded above 0 and below infinity.\par
\indent We next prove that for $g \in \Missing$,
\begin{align*}
    n^{1/2}\left[ n\V\left\lbrace \tilde{\bm{h}}_g\left(\alpha_g, \delta_g\right) \right\rbrace \right]^{-1/2}\tilde{\bm{h}}_g\left(\alpha_g, \delta_g\right) \tdist N_{t+K_{\miss}}\left( \bm{0}_{t+K_{\miss}},I_{t+K_{\miss}}\right).
\end{align*}
The proof that $\tilde{\bm{h}}_{g_1}\left(\alpha_{g_1}, \delta_{g_1}\right),\ldots,\tilde{\bm{h}}_{g_r}\left(\alpha_{g_r}, \delta_{g_r}\right)$ are asymptotically independent and jointly normal is a simple extension and is omitted. To do this, we need only prove that the Lindeberg condition holds. We note that $\bm{v}^{\T}\bm{u}_{gi} = \bm{v}_1^{\T}\bm{Z}_{i \bigcdot} + \bm{v}_2^{\T} \bm{\Xi}_{i \bigcdot}$, where for $\norm{\bm{Z}_{i \bigcdot}}_2 \leq c_z$,
\begin{align*}
    \left( \bm{v}^{\T}\bm{u}_{gi}\right)^2 \leq \left(c_z + \bm{v}_2^{\T} \bm{\Xi}_{i \bigcdot}\right)^2 I\left( \bm{v}_2^{\T} \bm{\Xi}_{i \bigcdot} \geq 0\right) + \left(c_z - \bm{v}_2^{\T} \bm{\Xi}_{i \bigcdot}\right)^2 I\left( \bm{v}_2^{\T} \bm{\Xi}_{i \bigcdot} < 0\right). 
\end{align*}
For the remainder of the proof, we let $\tilde{e}_{gi}=e_{gi}+\bm{\Xi}_{i\bigcdot}^{\T}\bm{\ell}_{g}$, $\mu_i$ and $\mu$ be as defined in the proof of Lemma \ref{lemma:MetabMissPsi}, and let $\tilde{\mu} = \mathop{\limsup}\limits_{n \to \infty}\left( \mathop{\max}\limits_{i\in[n]}\mu_i \right)$. For each $i \in [n]$, we define
\begin{align*}
    X_i = \left\lbrace 1-r_{gi}w_{gi}\left(\alpha_g, \delta_g \right) \right\rbrace^2 \left( \bm{v}^{\T} \bm{u}_{gi}\right)^2 = \left(1-r_{gi}\left[\Psi\left\lbrace \alpha_{g}\left(\mu_i+\tilde{e}_{gi} -\delta_{g}\right) \right\rbrace\right]^{-1} \right)^2 \left( \bm{v}^{\T} \bm{u}_{gi}\right)^2.
\end{align*}
Next, define
\begin{align*}
    r_{gi}^{(m)}  \begin{cases}
    = 1 & \text{if $r_{gi} = 1$}\\
    \sim \text{Ber}\left[ \frac{\Psi\left\lbrace \alpha_{g}\left(\tilde{\mu}+\tilde{e}_{gi} -\delta_{g}\right) \right\rbrace - \Psi\left\lbrace \alpha_{g}\left(\mu_i+\tilde{e}_{gi} -\delta_{g}\right) \right\rbrace}{1-\Psi\left\lbrace \alpha_{g}\left(\mu_i+\tilde{e}_{gi} -\delta_{g}\right) \right\rbrace} \right] & \text{if $r_{gi}=0$}
    \end{cases},
\end{align*}
where $r_{gi} \leq r_{gi}^{(m)}$ and conditional on $e_{gi}$ and $\bm{\Xi}_{i \bigcdot}$, $r_{gi}^{(m)} \sim \text{Ber}\left[\Psi\left\lbrace \alpha_{g}\left(\tilde{\mu}+\tilde{e}_{gi} -\delta_{g}\right) \right\rbrace \right]$. Lastly, define
\begin{align*}
    X_i^{(m)} =& \left(1-r_{gi}^{(m)}\left[\Psi\left\lbrace \alpha_{g}\left(\mu+\tilde{e}_{gi} -\delta_{g}\right) \right\rbrace\right]^{-1} \right)^2 \left\lbrace \left(c_z + \bm{v}_2^{\T} \bm{\Xi}_{i \bigcdot}\right)^2 I\left( \bm{v}_2^{\T} \bm{\Xi}_{i \bigcdot} \geq 0\right)\right.\\
    &\left. + \left(c_z - \bm{v}_2^{\T}\bm{\Xi}_{i \bigcdot}\right)^2 I\left( \bm{v}_2^{\T} \bm{\Xi}_{i \bigcdot} < 0\right) \right\rbrace.
\end{align*}
Clearly, $X_1^{(m)},\ldots,X_n^{(m)}$ are independent and identically distributed and $X_i \leq X_i^{(m)}$ for all $i \in [n]$. We also see that
\begin{align*}
    &\E\left\lbrace \left( r_{g1}^{(m)}\left[\Psi\left\lbrace \alpha_{g}\left(\mu+\tilde{e}_{g1} -\delta_{g}\right) \right\rbrace\right]^{-1} \right)^4 \right\rbrace\\
     = &\E\left( \frac{\Psi\left\lbrace  \alpha_{g}\left(\tilde{\mu}+\tilde{e}_{g1} -\delta_{g}\right) \right\rbrace}{\Psi\left\lbrace  \alpha_{g}\left(\mu+\tilde{e}_{g1} -\delta_{g}\right) \right\rbrace} \left[\Psi\left\lbrace \alpha_{g}\left(\mu+\tilde{e}_{g1} -\delta_{g}\right) \right\rbrace\right]^{-3}\right) < \infty
\end{align*}
because $\frac{\Psi\left\lbrace  \alpha_{g}\left(\tilde{\mu}+\tilde{e}_{g1} -\delta_{g}\right) \right\rbrace}{\Psi\left\lbrace  \alpha_{g}\left(\mu+\tilde{e}_{g1} -\delta_{g}\right) \right\rbrace}$ is bounded from above by Assumption \ref{assumption:MetabMiss:MissMech}. This then shows that $E\left\lbrace X_1^{(m)} \right\rbrace < \infty$. Therefore, for any $\eta > 0$,
\begin{align*}
    n^{-1}\sum\limits_{i=1}^n \E\left\lbrace X_i I\left( X_i \geq \eta n\right) \right\rbrace &\leq n^{-1}\sum\limits_{i=1}^n \E\left[ X_i^{(m)} I\left\lbrace X_i^{(m)} \geq \eta n\right\rbrace \right]\\
    & = \E\left[ X_1^{(m)} I\left\lbrace X_1^{(m)} \geq \eta n\right\rbrace \right] \text{$\to 0$ as $n \to \infty$}
\end{align*}
by the dominated convergence theorem. This proves that
\begin{align*}
    n^{1/2}\left[ n\V\left\lbrace \tilde{\bm{h}}_g\left(\alpha_g, \delta_g\right) \right\rbrace \right]^{-1/2}\tilde{\bm{h}}_g\left(\alpha_g, \delta_g\right) \tdist N_{t+\hat{K}}\left( \bm{0}_{t+\hat{K}},I_{t+\hat{K}}\right)
\end{align*}
as $n \to \infty$.\par
\indent We use a standard truncation argument to show that $\norm{\tilde{\bm{\Sigma}}_g\left(\alpha_g, \delta_g\right) - n\V\left\lbrace \tilde{\bm{h}}_g\left(\alpha_g, \delta_g\right) \right\rbrace}_2 = o_P(1)$. Let $\bm{v} = \left(\bm{v}_1^{\T}, \, \bm{v}_2^{\T} \right)^{\T} \in \mathbb{R}^{t+K_{\miss}}$ be a unit vector, where $\bm{v}_1 \in \mathbb{R}^t$ and $\bm{v}_2 \in \mathbb{R}^{K_{\miss}}$, and let $X_i$, $r_{gi}^{(m)}$ and $X_i^{(m)}$ be as defined above. We also define
\begin{align*}
    Y_i = \left\lbrace X_i - \E\left( X_i\right) \right\rbrace I\left\lbrace X_i^{(m)} \leq i\right\rbrace \quad (i=1,\ldots,n).
\end{align*}
Since $X_1^{(m)},X_2^{(m)},\ldots$ are identically distributed and $\E\left\lbrace X_1^{(m)} \right\rbrace < \infty$,
\begin{align}
\label{equation:MetabMiss:io}
    \Prob\left[ \bigcap_{N \geq 1}\bigcup_{n \geq N} \left\lbrace X_n^{(m)} > n \right\rbrace \right] = 0
\end{align}
by Lemma 4.31 of \cite{LalleysNotes}. We also have
\begin{align*}
    \abs{n^{-1}\sum\limits_{i=1}^n \left\lbrace X_i - \E\left( X_i\right) \right\rbrace} \leq & \abs{n^{-1}\sum\limits_{i=1}^n I\left\lbrace X_i^{(m)} > i \right\rbrace} + \abs{n^{-1}\sum\limits_{i=1}^n \E\left( Y_i\right)}\\
    & + \abs{n^{-1}\sum\limits_{i=1}^n \left\lbrace Y_i - \E\left( Y_i\right) \right\rbrace}.
\end{align*}
By \eqref{equation:MetabMiss:io}, the first term is $o_{a.s.}(1)$ as $n \to \infty$. For the second term, we may assume $\mu_1 \leq \cdots \leq \mu_n$ without loss of generality. Define $r_{g1}^{(i)}$ inductively as
\begin{align*}
    r_{g1}^{(1)} &= r_{g1}\\
    r_{g1}^{(i)} &\begin{cases}
    = 1 & \text{if $r_{g1}^{(i-1)} = 1$}\\
    \sim \text{Ber}\left[ \frac{\Psi\left\lbrace \alpha_{g}\left(\mu_i+\tilde{e}_{g1} -\delta_{g}\right) \right\rbrace - \Psi\left\lbrace \alpha_{g}\left(\mu_{i-1}+\tilde{e}_{g1} -\delta_{g}\right) \right\rbrace}{1-\Psi\left\lbrace \alpha_{g}\left(\mu_{i-1}+\tilde{e}_{g1} -\delta_{g}\right) \right\rbrace} \right] & \text{if $r_{g1}^{(i-1)}=0$}
    \end{cases} \quad (i=2,\ldots,n).
\end{align*}
and let
\begin{align*}
    \tilde{X}_i = \left(1-r_{g1}^{(i)}\left[\Psi\left\lbrace \alpha_{g}\left(\mu_i+\tilde{e}_{g1} -\delta_{g}\right) \right\rbrace\right]^{-1} \right)^2 \left( \bm{v}_1^{\T}\bm{Z}_{i\bigcdot}+\bm{v}_2^{\T} \bm{\Xi}_{1 \bigcdot}\right)^2.
\end{align*}
And for $\norm{\bm{Z}_{i\bigcdot}}_2 \leq c_z$, define
\begin{align*}
    r_{g1}^{(m)} =& \begin{cases}
    = 1 & \text{if $r_{g1}^{(n)} = 1$}\\
    \sim \text{Ber}\left[ \frac{\Psi\left\lbrace \alpha_{g}\left(\tilde{\mu}+\tilde{e}_{g1} -\delta_{g}\right) \right\rbrace - \Psi\left\lbrace \alpha_{g}\left(\mu_n+\tilde{e}_{g1} -\delta_{g}\right) \right\rbrace}{1-\Psi\left\lbrace \alpha_{g}\left(\mu_n+\tilde{e}_{g1} -\delta_{g}\right) \right\rbrace} \right] & \text{if $r_{g1}^{(n)} =0$}
    \end{cases}\\
    \tilde{X}_1^{(m)} =& \left(1-r_{g1}^{(m)}\left[\Psi\left\lbrace \alpha_{g}\left(\mu+\tilde{e}_{g1} -\delta_{g}\right) \right\rbrace\right]^{-1} \right)^2 \left\lbrace \left(c_z + \bm{v}_2^{\T}\bm{\Xi}_{1 \bigcdot}\right)^2 I\left( \bm{v}_2^{\T}\bm{\Xi}_{1 \bigcdot} \geq 0\right)\right.\\
    &\left. + \left(c_z - \bm{v}_2^{\T}\bm{\Xi}_{1 \bigcdot}\right)^2 I\left( \bm{v}_2^{\T}\bm{\Xi}_{1 \bigcdot} < 0\right) \right\rbrace.
\end{align*}
Note that
\begin{align*}
    X_i I\left\lbrace X_i^{(m)} \leq i \right\rbrace \edist \tilde{X}_i I\left\lbrace X_1^{(m)} \leq i \right\rbrace \leq X_1^{(m)} \quad (i=1,\ldots,n).
\end{align*}
Since $X_1^{(m)}$ is integrable and $\E\left( X_i\right)$ is uniformly bounded from above, $\abs{n^{-1}\sum\limits_{i=1}^n \E\left( Y_i\right)} \to 0$ as $n \to \infty$ by the dominated convergence theorem. We lastly show $\abs{n^{-1}\sum\limits_{i=1}^n \left\lbrace Y_i - \E\left( Y_i\right) \right\rbrace} = o_{a.s.}(1)$ as $n \to \infty$ to complete the proof. By Kronecker’s Lemma and the Khintchine-Kolmogorov theorem, it suffices to show $\sum\limits_{n=1}^{\infty} n^{-2}\E\left(Y_n^2 \right) < \infty$. And because $\E\left(X_i \right)$ is uniformly bounded and $X_i \leq X_i^{(m)}$, we need only show that
\begin{align*}
    \sum\limits_{n=1}^{\infty} n^{-2}\E\left[\left\lbrace X_n^{(m)} \right\rbrace^2 I\left\lbrace X_n^{(m)} \leq n \right\rbrace \right] < \infty.
\end{align*}
However, this follows from the proof of Theorem 4.30 in \cite{LalleysNotes}.
\end{proof}

\begin{theorem}
\label{theorem:MetabMiss:Normal}
Fix a $g \in \Missing$ and suppose Assumptions \ref{assumption:Basic} and \ref{assumption:MetabMiss:MissMech} hold for $t < 3$ and $\epsilon_{\miss}=0$. Let $g_1,\ldots,g_{3-t} \in \left[K_{\miss} \right]$. For $w_{gi}\left( \alpha,\delta\right)$ defined in the statement of Lemma \ref{lemma:MetabMiss:NormalKnown} and $\hat{\bm{C}}_2$ defined in \eqref{equation:MetabMiss:C2hat}, let
\begin{align*}
    &\hat{\bm{u}}_{gi} = \left( \bm{Z}_i^{\T} ,\, \hat{\bm{C}}_{2_{ig_1}}, \cdots,  \hat{\bm{C}}_{2_{ig_{3-t}}} \right)^{\T} \in \mathbb{R}^3, \quad , \quad g \in \Missing;i \in [n]\\
    &\bar{\bm{h}}_g\left( \alpha, \delta \right) = n^{-1}\sum\limits_{i=1}^n \hat{\bm{u}}_{gi}\left\lbrace 1-w_{gi}\left( \alpha, \delta\right)  \right\rbrace, \quad g \in \Missing\\
    &\hat{\bm{\Sigma}}_g\left( \alpha, \delta \right) = n^{-1}\sum\limits_{i=1}^n \left[ \hat{\bm{u}}_{gi}\left\lbrace 1-w_{gi}\left( \alpha, \delta\right) \right\rbrace - \bar{\bm{h}}_g\left( \alpha, \delta \right) \right] \left[ \hat{\bm{u}}_{gi}\left\lbrace 1-w_{gi}\left( \alpha, \delta\right) \right\rbrace - \bar{\bm{h}}_g\left( \alpha, \delta \right) \right]^{\T}, \quad g \in \Missing.
\end{align*}
Then
\begin{align*}
    n^{1/2}\left\lbrace \hat{\bm{\Sigma}}_g\left( \alpha_g,\delta_g \right) \right\rbrace^{-1/2} \bar{\bm{h}}_g\left( \alpha_g,\delta_g \right) \tdist N_3\left( \bm{0}_3, I_3\right), \quad g \in \Missing
\end{align*}
as $n,p \to \infty$. Further, if $\left\lbrace g_1,\ldots,g_r \right\rbrace \subseteq \Missing$ is a set of at most finite cardinality, then $\bar{\bm{h}}_{g_1}\left( \alpha_{g_1}, \delta_{g_1} \right),\ldots,$ $\bar{\bm{h}}_{g_r}\left( \alpha_{g_r}, \delta_{g_r} \right)$ are asymptotically independent as $n,p \to \infty$.
\end{theorem}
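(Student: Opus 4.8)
The plan is to transfer the central limit theorem and covariance consistency already established for the true-factor moment vector in Lemma \ref{lemma:MetabMiss:NormalKnown} over to the estimated-factor moment vector $\bar{\bm{h}}_g$, in exactly the way Theorem \ref{theorem:MetabMiss:OLS_MNAR} transferred Lemma \ref{lemma:MetabMiss:OLS_MNAR_observed}. Writing $\phi_{gi} = 1 - w_{gi}(\alpha_g,\delta_g)$ and $\bm{\phi}_g = (\phi_{g1},\ldots,\phi_{gn})^{\T}$, it suffices to prove the two approximations
\begin{align*}
    \bar{\bm{h}}_g(\alpha_g,\delta_g) = \tilde{\bm{h}}_g(\alpha_g,\delta_g) + o_P(n^{-1/2}), \qquad \hat{\bm{\Sigma}}_g(\alpha_g,\delta_g) = \tilde{\bm{\Sigma}}_g(\alpha_g,\delta_g) + o_P(1).
\end{align*}
Granting these, the eigenvalue bounds and the limit $n^{1/2}\{\tilde{\bm{\Sigma}}_g\}^{-1/2}\tilde{\bm{h}}_g \tdist N_3(\bm{0}_3,I_3)$ from Lemma \ref{lemma:MetabMiss:NormalKnown}, combined with Slutsky's theorem and the continuous mapping theorem applied to $\hat{\bm{\Sigma}}_g^{-1/2}$, deliver the stated limit; the joint statement for a finite collection $g_1,\ldots,g_r$ will follow because both approximations hold simultaneously and $\tilde{\bm{h}}_{g_1},\ldots,\tilde{\bm{h}}_{g_r}$ are already asymptotically independent by that lemma.

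Because $\hat{\bm{u}}_{gi}$ and $\bm{u}_{gi}$ agree except in their factor coordinates, the first approximation reduces to bounding, for each selected index $k \in \{g_1,\ldots,g_{3-t}\}$, the scalar sum $n^{-1}\sum_{i=1}^n (\hat{\bm{C}}_{2_{ik}} - \bm{C}_{2_{ik}})\phi_{gi}$. I would insert the decomposition $\hat{\bm{C}}_{2_{\bigcdot k}} - \bm{C}_{2_{\bigcdot k}} = \bm{C}_2(\hat{\bm{v}}_{\bigcdot k} - \bm{a}_k) + n^{1/2}\bm{Q}_{P_Z^\perp C}\hat{\bm{w}}_{\bigcdot k}$ from \eqref{equation:MetabMiss:C2hat} and treat the two pieces separately. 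The $\hat{\bm{v}}$-piece is $(n^{-1}\bm{C}_2^{\T}\bm{\phi}_g)^{\T}(\hat{\bm{v}}_{\bigcdot k} - \bm{a}_k)$; since each column of $\bm{C}_2$ satisfies the moment condition $\E[\bm{C}_{2_{ik}}\phi_{gi}]=0$, the $K$-vector $n^{-1}\bm{C}_2^{\T}\bm{\phi}_g$ is $O_P(n^{-1/2})$, and with $\norm{\hat{\bm{v}}_{\bigcdot k} - \bm{a}_k}_2 = O_P\{(np_s)^{-1/2}\}$ from \eqref{equation:MetabMiss:wvResults:bound} this piece is $o_P(n^{-1/2})$. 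The $\hat{\bm{w}}$-piece is $n^{1/2}(n^{-1}\bm{Q}_{P_Z^\perp C}^{\T}\bm{\phi}_g)^{\T}\hat{\bm{w}}_{\bigcdot k}$, and here the naive Cauchy--Schwarz bound only gives $O_P(p_s^{-1/2}) = O_P(n^{-1/2})$, which is not small enough.

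Overcoming this is the main obstacle. The key is that $g \in \Missing$ is disjoint from $\Observed$, so $\bm{\phi}_g$ (a function of $\bm{e}_g$ and the factors) is independent of $\bm{E}_{\Observed}$ given $\bm{C}$. I would exploit this by replacing $\hat{\bm{w}}_{\bigcdot k}$ with its explicit leading term from \eqref{equation:MetabMiss:wvResults:w}, proportional to $(n\lambda_k p_s)^{-1}\bm{Q}_{P_Z^\perp C}^{\T} P_Z^\perp \bm{E}_{\Observed}^{\T}(n^{1/2}\tilde{\bm{L}}_{\Observed} + n^{-1/2}\bm{E}_{\Observed}P_Z^\perp\bm{C}_2)\hat{\bm{v}}_{\bigcdot k}$. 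The resulting expression is a bilinear form in the independent objects $\bm{\phi}_g$ and $\bm{E}_{\Observed}$; conditioning on everything but $\bm{E}_{\Observed}$ it has mean zero, and a variance computation using $\norm{\tilde{\bm{L}}_{\Observed}}_2 = O(p_s^{1/2})$ and $\norm{\bm{\phi}_g}_2 = O_P(n^{1/2})$ bounds it by $O_P(n^{-1/2}p_s^{-1/2}) = o_P(n^{-1/2})$, gaining the extra $p_s^{-1/2}$ factor over the deterministic bound; the remainder term in \eqref{equation:MetabMiss:wvResults:w} contributes $O_P\{(np_s)^{-1/2}\}$ via the naive bound and is also negligible. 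This is precisely the mean-zero cross-term argument the paper invokes and omits in the proof of Theorem \ref{theorem:MetabMiss:OLS_MNAR}, and I would reproduce it here.

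For the covariance approximation only $o_P(1)$ is needed, so a coarser version of the same decomposition suffices: expanding $\hat{\bm{u}}_{gi}\hat{\bm{u}}_{gi}^{\T}\phi_{gi}^2 - \bm{u}_{gi}\bm{u}_{gi}^{\T}\phi_{gi}^2$ yields terms linear and quadratic in $\hat{\bm{C}}_{2_{\bigcdot k}} - \bm{C}_{2_{\bigcdot k}}$, whose aggregate sizes are governed by $\norm{\hat{\bm{C}}_{2_{\bigcdot k}} - \bm{C}_{2_{\bigcdot k}}}_2 = O_P(1)$ together with the bound $\E[\Psi\{\alpha_g(y_{gi}-\delta_g)\}^{-(3+M_2^{-1})}] < \infty$ from Lemma \ref{lemma:MetabMissPsi}. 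Since $\phi_{gi}$ has only slightly more than three finite moments, I would guard the quadratic terms with the truncation device used at the end of the proof of Lemma \ref{lemma:MetabMiss:NormalKnown} rather than assume a fourth moment. Finally, the asymptotic independence of the finite collection follows from the joint asymptotic independence of the $\tilde{\bm{h}}_{g_j}$ in Lemma \ref{lemma:MetabMiss:NormalKnown}, once each $\bar{\bm{h}}_{g_j}$ has been shown to agree with $\tilde{\bm{h}}_{g_j}$ to order $o_P(n^{-1/2})$.
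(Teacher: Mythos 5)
Your proposal matches the paper's proof essentially step for step: the paper reduces to Lemma \ref{lemma:MetabMiss:NormalKnown} by establishing exactly your two approximations (its displays \eqref{equation:MetabMiss:dshow}--\eqref{equation:MetabMiss:Dshow:Z}), splits $\hat{\bm{C}}_{2_{\bigcdot k}}-\bm{C}_{2_{\bigcdot k}}$ via \eqref{equation:MetabMiss:C2hat}, handles the $\hat{\bm{v}}$-piece with \eqref{equation:MetabMiss:wvResults:bound}, and disposes of the $\hat{\bm{w}}$-piece using the independence of the metabolite-$g$ weights from $\bm{E}_{\Observed}$ together with the leading term in \eqref{equation:MetabMiss:wvResults:w}, just as you describe. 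The only slip is cosmetic: $\phi_{gi}=1-w_{gi}(\alpha_g,\delta_g)$ has slightly more than four (not three) finite moments, since $\E\left(w_{gi}^{4+\eta}\right)=\E\left(\left[\Psi\left\lbrace\alpha_g\left(y_{gi}-\delta_g\right)\right\rbrace\right]^{-(3+\eta)}\right)<\infty$ by Assumption \ref{assumption:MetabMiss:MissMech}, which is exactly what the paper uses to get $\norm{\bm{D}^2}_2=o_P\left(n^{1/2}\right)$ and control the covariance terms, so your extra truncation safeguard is compatible but not needed.
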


\begin{proof}
As we did in Lemma \ref{lemma:MetabMiss:NormalKnown}, it suffices to re-define $\hat{\bm{u}}_{gi}=\left( \bm{Z}_i^{\T} , \hat{\bm{C}}_{2_{i1}},\ldots,\hat{\bm{C}}_{2_{iK_{\miss}}}\right)^{\T}$. Let
\begin{align*}
    \bm{D} &= \diag\left\lbrace 1-w_{g1}\left(\alpha_g, \delta_g \right), \ldots, 1-w_{gn}\left(\alpha_g, \delta_g \right) \right\rbrace \in \mathbb{R}^{n \times n}\\
    \bm{d} &= \left(1-w_{g1}\left(\alpha_g, \delta_g \right),\, \ldots,\, 1-w_{gn}\left(\alpha_g, \delta_g \right)\right)^{\T} \in \mathbb{R}^{n}.
\end{align*}
By Lemma \ref{lemma:MetabMiss:NormalKnown}, it suffices to show that
\begin{subequations}
\begin{align}
\label{equation:MetabMiss:dshow}
    &\norm{n^{-1/2}\bm{d}^{\T}\left( \hat{\bm{C}}_{2_k} - \bm{C}_{2_k} \right)}_2 = o_P(1), \quad k\in\left[K_{\miss}\right]\\
\label{equation:MetabMiss:Dshow:C}
    &\norm{n^{-1}\hat{\bm{C}}_{2_r}^{\T}\bm{D}^2\hat{\bm{C}}_{2_s} - n^{-1}\bm{C}_{2_r}^{\T}\bm{D}^2\bm{C}_{2_s} }_2 = o_P(1), \quad r,s\in\left[K_{\miss}\right]\\
 \label{equation:MetabMiss:Dshow:Z}
    &\norm{n^{-1}\bm{Z}^{\T}\bm{D}^2\hat{\bm{C}}_{2_k} - n^{-1}\bm{Z}^{\T}\bm{D}^2\bm{C}_{2_k} }_2 = o_P(1), \quad k\in\left[K_{\miss}\right]
\end{align}
\end{subequations}
to prove the theorem. By Assumption \ref{assumption:MetabMiss:MissMech} and Lemma \ref{lemma:MetabMissPsi}, $\norm{\bm{d}}_2 = O_P\left( n^{1/2}\right)$ and $\norm{\bm{D}^2}_2 = o_P\left( n^{1/2}\right)$. The latter follows from the fact under the assumptions on the left hand tail of $\Psi(x)$ in Assumption \ref{assumption:MetabMiss:MissMech},
\begin{align*}
    \E\left[\left\lbrace w_{gi}\left(\alpha_g, \delta_g\right)\right\rbrace^{4+\eta} \right] \leq c, \quad i \in [n]
\end{align*}
for $\eta>0$ small enough and $c > 0$ large enough.\par
\indent We start by showing \eqref{equation:MetabMiss:dshow}. Let $\bm{a}_k \in \mathbb{R}^{K}$ be the $k$th standard basis vector. By \eqref{equation:MetabMiss:C2hat},
\begin{align*}
    \norm{n^{-1/2}\bm{d}^{\T}\left( \hat{\bm{C}}_{2_k} - \bm{C}_{2_k} \right)}_2 \leq  \norm{\bm{d}}_2\norm{\hat{v}_{\bigcdot k} - \bm{a}_k}_2 + \norm{\bm{d}^{\T}\bm{Q}_{P_Z^{\perp}C}\hat{\bm{w}}_{\bigcdot k}}_2, \quad k\in\left[ K_{\miss}\right].
\end{align*}
The first term is $o_P(1)$ by \eqref{equation:MetabMiss:wvResults:bound}. And since $\bm{d}$ is independent of $\bm{E}_{\Observed}$, the second term is also $o_P(1)$ by \eqref{equation:MetabMiss:wvResults:bound}.\par
\indent For \eqref{equation:MetabMiss:Dshow:C} and $r,s \in \left[K_{\miss}\right]$,
\begin{align*}
    \norm{n^{-1}\hat{\bm{C}}_{2_r}^{\T}\bm{D}^2\hat{\bm{C}}_{2_s} - n^{-1}\bm{C}_{2_r}^{\T}\bm{D}^2\bm{C}_{2_s} }_2 \leq& \norm{n^{-1}\hat{\bm{v}}_{\bigcdot r}^{\T}\bm{C}_2^{\T}\bm{D}^2\bm{C}_2 \hat{\bm{v}}_{\bigcdot s} - n^{-1}\bm{C}_{2_{\bigcdot r}}^{\T}\bm{D}^2\bm{C}_{2_{\bigcdot s}} }_2\\
    & + \norm{n^{-1/2}\bm{C}_{2_{\bigcdot r}}^{\T}\bm{D}^2 \bm{Q}_{P_{Z}^{\perp}C}\hat{\bm{w}}_{\bigcdot s}}_2 + \norm{n^{-1/2}\bm{C}_{2_{\bigcdot s}}^{\T}\bm{D}^2 \bm{Q}_{P_{Z}^{\perp}C}\hat{\bm{w}}_{\bigcdot r}}_2\\
    &+\norm{ \hat{\bm{w}}_{\bigcdot r}^{\T}\bm{Q}_{P_{Z}^{\perp}C}^{\T}\bm{D}^2 \bm{Q}_{P_{Z}^{\perp}C}\hat{\bm{w}}_{\bigcdot s} }_2.
\end{align*}
The first and fourth terms are clearly $o_P(1)$ by \eqref{equation:MetabMiss:wvResults:bound}. And since $\norm{\bm{D}^2}_2 = o_P\left(n^{1/2}\right)$, the second and third terms are also $o_P(1)$. Identical techniques can be used to show \eqref{equation:MetabMiss:Dshow:Z}, which completes the proof.
\end{proof}

\section{The asymptotic distribution of the generalized method of moments estimator}
\label{section:Supp:EstGMMProof}
Here we prove that under mild assumptions, the two-step generalized method of moments estimators $\hat{\alpha}_g^{\GMM}$ and $\hat{\delta}_g^{\GMM}$, defined in \eqref{equation:TwoStepGMM}, are consistent and asymptotically normal. Our results are analogous to those in \cite{GMM_MNAR}, which assumes the instruments $\hat{\bm{U}}_g$ are observed. Our results are also easier to interpret, since the assumptions we make only involve the moments of $\bm{y}_g$ and the properties of the function $\Psi(x)$. We also show that the generalized method of moments estimators for different metabolites are asymptotically independent, which justifies estimating the prior in Section \ref{subsection:HBGMM} using the product likelihood. We first make a standard assumption regarding the identifiability of $\alpha_g$ and $\delta_g$.

\begin{assumption}
\label{assumption:MetabMiss:Gamma}
Define $\bm{u}_{gi} = \left( \bm{Z}_{i \bigcdot}, \Xi_{ig_1},\ldots, \Xi_{ig_{3-t}}\right)^{\T}$, where the non-random indices $g_1,\ldots,g_{3-t} \in \left[K_{\miss} \right]$ depend on $g \in \Missing$, and
\begin{align*}
    \bm{M}_{g}\left(\alpha,\delta\right) = -\nabla_{\left(\alpha,\delta\right)}\left( n^{-1}\sum\limits_{i=1}^n \E\left[ \bm{u}_{gi}r_{gi}/\Psi\left\lbrace \alpha\left(y_{gi}-\delta\right) \right\rbrace \right]\right), \quad g \in \Missing.
\end{align*}
Then $\bm{M}_g\left(\alpha_g,\delta_g\right)^{\T} \bm{M}_g\left(\alpha_g,\delta_g\right) \succeq \gamma_g I_2$ for some constant $\gamma_g > 0$ that may depend on $g$ but does not depend on $n$ or $p$.
\end{assumption}

\begin{remark}
\label{remark:GammaIdentify}
We prove $\bm{M}_g\left(\alpha_g,\delta_g\right)$ exists in Lemma \ref{lemma:MetabMiss:GUWLLN}. This assumption on the gradient of the population moment is a standard assumption in the generalized method of moment literature \citep{Hansen_2step,GMM_MNAR} and helps to guarantee that $\alpha_g$ and $\delta_g$ are locally identifiable.
\end{remark}

Let $\dot{\Psi}(x)$ and $\ddot{\Psi}(x)$ be the first and second derivatives of $\Psi(x)$ and define $\bm{\theta}_g =  \left(\alpha_g,-\alpha_g\delta_g\right)^{\T}$. For the remainder of the supplement, we define
\begin{subequations}
\label{equation:Supp:Moments}
\begin{align}
\hat{\bm{u}}_{gi} =& \left( \bm{Z}_{i \bigcdot}, \hat{\bm{C}}_{2_{ig_1}},\ldots, \hat{\bm{C}}_{2_{ig_{3-t}}}\right)^{\T},\quad g\in \Missing; i \in [n]\\
\bar{\bm{h}}_g\left(\bm{\theta}\right) =& n^{-1}\sum\limits_{i=1}^n \hat{\bm{u}}_{gi} \left[ 1-r_{gi}\left\lbrace \Psi\left(\bm{\theta}_1 y_{gi} + \bm{\theta}_2\right) \right\rbrace^{-1} \right], \quad g \in \Missing\\
\bm{\Gamma}_g\left( \bm{\theta}\right) =& \nabla_{\bm{\theta}}\bar{\bm{h}}_g\left(\bm{\theta}\right) = n^{-1}\sum\limits_{i=1}^n r_{gi} \frac{\dot{\Psi}\left(\bm{\theta}_1 y_{gi} +  \bm{\theta}_2\right)}{\Psi\left(\bm{\theta}_1 y_{gi} +  \bm{\theta}_2\right)^2}\hat{\bm{u}}_{gi} \left(y_{gi},1\right), \quad g \in \Missing\\
\hat{\bm{\Sigma}}_g\left(\bm{\theta}\right) =& n^{-1}\sum\limits_{i=1}^n \left[ 1-r_{gi}\left\lbrace \Psi\left(\bm{\theta}_1 y_{gi} + \bm{\theta}_2\right) \right\rbrace^{-1} \right]^2 \hat{\bm{u}}_{gi} \hat{\bm{u}}_{gi}^{\T}, \quad g \in \Missing,
\end{align}
\end{subequations}
where $\hat{\bm{C}}_2$ is as defined in \eqref{equation:MetabMiss:C2hat}. Note that $\hat{\bm{\Sigma}}_g\left(\bm{\theta}\right)$ and that defined in \eqref{equation:hbarAsy} differ by a factor of $\bar{\bm{h}}_g\left(\bm{\theta}\right)\bar{\bm{h}}_g\left(\bm{\theta}\right)^{\T}$. Since we are only interested in the behavior of $\hat{\bm{\Sigma}}_g\left(\bm{\theta}\right)$ around $\bm{\theta}=\bm{\theta}_g$, this difference is asymptotically negligible.\par 
\indent For any weight matrix $\bm{W}_g$, the generalized method of moments estimate $\hat{\bm{\theta}}_g^{\GMM}$ satisfies
\begin{align}
\label{equation:MetabMiss:GMMTaylor}
\bm{0} =& \bm{\Gamma}_g\left\lbrace \hat{\bm{\theta}}_g^{\GMM} \right\rbrace^{\T} \bm{W}_g \bar{\bm{h}}_g\left\lbrace \hat{\bm{\theta}}_g^{\GMM} \right\rbrace = \bm{\Gamma}_g\left\lbrace \hat{\bm{\theta}}_g^{\GMM} \right\rbrace^{\T}\bm{W}_g\bar{\bm{h}}_g\left(\bm{\theta}_g\right)\nonumber\\
&+ \bm{\Gamma}_g\left\lbrace \hat{\bm{\theta}}_g^{\GMM} \right\rbrace^{\T}\bm{W}_g \bm{\Gamma}_g\left( \tilde{\bm{\theta}}_g\right)\left\lbrace \hat{\bm{\theta}}_g^{\GMM} - \bm{\theta}_g\right\rbrace
\end{align}
where $\tilde{\bm{\theta}}_g = b \bm{\theta}_g + (1-b)\hat{\bm{\theta}}_g^{\GMM}$ for some $b \in [0,1]$. Since we have already proven that $\bar{\bm{h}}_g\left(\bm{\theta}_g \right)$ is asymptotically normal and that $\bar{\bm{h}}_{g_1}\left(\bm{\theta}_{g_1} \right),\ldots,\bar{\bm{h}}_{g_r}\left(\bm{\theta}_{g_r} \right)$ are asymptotically independent for a distinct, finite set of elements $\left\lbrace g_1,\ldots,g_r \right\rbrace \subseteq \Missing$ in Theorem \ref{theorem:MetabMiss:Normal}, proving the asymptotic normality of $\hat{\bm{\theta}}_g^{\GMM}$ and asymptotic independence of $\hat{\bm{\theta}}_{g_1}^{\GMM},\ldots,\hat{\bm{\theta}}_{g_r}^{\GMM}$ only requires understanding the convergence of $\bm{\Gamma}\left\lbrace \hat{\bm{\theta}}_g^{\GMM}\right\rbrace$ and $\bm{W}_g$ for a fixed $g \in \Missing$.

\begin{lemma}
\label{lemma:MetabMiss:PsiDotPrelim}
Under Assumption \ref{assumption:MetabMiss:MissMech}, there exists a constant $M>0$ such that $\abs{\dot{\Psi}(x)/\Psi(x)}$, $\abs{\ddot{\Psi}(x)/\Psi(x)} \leq M$ for all $x \in \mathbb{R}$.
\end{lemma}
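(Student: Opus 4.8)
The plan is to split $\mathbb{R}$ into the two regions $[-M_1,\infty)$ and $(-\infty,-M_1)$ and to bound both ratios separately on each piece, then take the maximum of the resulting constants. On $[-M_1,\infty)$ the argument is immediate. Since $\Psi$ is strictly increasing and valued in $(0,1)$ by Item (vi) of Assumption \ref{assumption:Basic}, we have $\Psi(x)\geq\Psi(-M_1)>0$ for every $x\geq -M_1$; combined with the fact that $\dot\Psi$ and $\ddot\Psi$ are bounded (first sentence of Assumption \ref{assumption:MetabMiss:MissMech}), this gives $\abs{\dot\Psi(x)/\Psi(x)}\leq \sup_y\abs{\dot\Psi(y)}/\Psi(-M_1)$ and $\abs{\ddot\Psi(x)/\Psi(x)}\leq \sup_y\abs{\ddot\Psi(y)}/\Psi(-M_1)$ throughout this region.

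The substantive work is on the left tail $(-\infty,-M_1)$, where $\Psi(x)\to 0$ and so the denominator degenerates. Here I would use the representation in Item (i) of Assumption \ref{assumption:MetabMiss:MissMech}. In the polynomial case $a\abs{x}^k\Psi(x)=1+R(x)$, differentiating and then dividing through by $a\abs{x}^k\Psi(x)=1+R(x)$ yields, for $x<-M_1$,
\begin{align*}
\frac{\dot\Psi(x)}{\Psi(x)} = \frac{R'(x)}{1+R(x)} + \frac{k}{-x},
\end{align*}
while the exponential case $a\exp(k\abs{x})\Psi(x)=1+R(x)$ gives in the same way $\dot\Psi(x)/\Psi(x)=k+R'(x)/\{1+R(x)\}$. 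In either case the right-hand side is bounded because $\abs{R'}\leq M_1$, $\abs{-x}>M_1$, and $1+R$ is bounded away from $0$ (see below).

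For the second ratio I would avoid differentiating $\Psi$ again and instead invoke the elementary identity
\begin{align*}
\frac{\ddot\Psi(x)}{\Psi(x)} = \frac{d}{dx}\left\{\frac{\dot\Psi(x)}{\Psi(x)}\right\} + \left\{\frac{\dot\Psi(x)}{\Psi(x)}\right\}^2.
\end{align*}
The squared term is already controlled by the previous display, and differentiating the explicit formula for $\dot\Psi/\Psi$ produces $\tfrac{d}{dx}\{R'/(1+R)\}+k/x^2$ in the polynomial case (and the same expression without the $k/x^2$ term in the exponential case), each piece of which is bounded via $\abs{R''},\abs{R'}\leq M_1$, $1+R\geq 1/2$, and $\abs{x}>M_1$. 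Taking $M$ to be the largest of the finitely many constants obtained on the two regions then completes the argument.

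The only delicate point, and the step I would set up first, is arranging that $1+R$ stays bounded away from $0$ on the whole tail: the derivative bounds $\abs{R'},\abs{R''}\leq M_1$ are only useful once this holds. Since $\lim_{x\to-\infty}R(x)=0$, I would enlarge the (already ``large'') constant $M_1$ if necessary so that $\abs{R(x)}\leq 1/2$ on $(-\infty,-M_1)$, which forces $1+R(x)\geq 1/2$ there. Everything else reduces to routine differentiation and triangle-inequality estimates.
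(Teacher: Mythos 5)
Your proof is correct and takes essentially the same route as the paper's: reduce to the left tail $(-\infty,-M_1)$, where $\dot\Psi$ and $\ddot\Psi$ are bounded but $\Psi$ degenerates, and then control the ratios using the tail representation in Item (i) of Assumption \ref{assumption:MetabMiss:MissMech} together with the bounds on $R$, $dR/dx$ and $d^2R/dx^2$. Your logarithmic-differentiation bookkeeping, the identity $\ddot\Psi/\Psi=\tfrac{d}{dx}\left(\dot\Psi/\Psi\right)+\left(\dot\Psi/\Psi\right)^2$, and the explicit step of enlarging $M_1$ so that $1+R$ stays bounded away from zero are just a slightly tidier packaging of the computation the paper carries out by differentiating the explicit tail formula for $\Psi$ directly and leaving that last point implicit.
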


\begin{proof}
Since $\abs{\dot{\Psi}(x)},\abs{\ddot{\Psi}(x)}$ are uniformly bounded, we need only consider the case when $x \to -\infty$. When $\Psi(x) = \abs{x}^{-k}\left\lbrace a+R(x) \right\rbrace$,
\begin{align*}
    \dot{\Psi}(x) &= k\abs{x}^{-(k+1)}\left\lbrace a+R(x) \right\rbrace + \abs{x}^{-k}\frac{dR(x)}{dx}\\
    \ddot{\Psi}(x) &= k(k+1)\abs{x}^{-(k+2)}\left\lbrace a+R(x) \right\rbrace +  2k\abs{x}^{-(k+1)}\frac{dR(x)}{dx} + \abs{x}^{-k}\frac{d^2R(x)}{dx^2}
\end{align*}
and when $\Psi(x) = \exp\left(-k\abs{x}\right)\left\lbrace a+R(x) \right\rbrace$,
\begin{align*}
    \dot{\Psi}(x) &= k\exp\left(-k\abs{x}\right)\left\lbrace a+R(x) \right\rbrace + \exp\left(-k\abs{x}\right)\frac{dR(x)}{dx}\\
    \ddot{\Psi}(x) &= k^2 \exp\left(-k\abs{x}\right)\left\lbrace a+R(x) \right\rbrace + 2k\exp\left(-k\abs{x}\right) \frac{dR(x)}{dx} + \exp\left(-k\abs{x}\right)\frac{d^2R(x)}{dx^2}.
\end{align*}
The result then follows by the assumptions on $R(x)$.
\end{proof}

\begin{lemma}
\label{lemma:MetabMiss:PsiDot}
Fix a $g \in \Missing$, let $B\left( \eta; \bm{x}\right) = \left\lbrace \bm{x}_0 : \norm{\bm{x}-\bm{x}_0}_2 < \eta \right\rbrace$ and suppose Assumptions \ref{assumption:Basic} and \ref{assumption:MetabMiss:MissMech} hold. Let $\bm{u}_{gi} = \left( \bm{Z}_{i \bigcdot}, \bm{\Xi}_{ig_1},\ldots,\bm{\Xi}_{ig_{3-t}}\right)$ for $g_1,\ldots,g_{3-t} \in \left[K_{\miss}\right]$ and define
\begin{align}
\label{equation:MetabMiss:ParamsKnown}
\begin{aligned}
    &\tilde{\bm{h}}_g\left( \bm{\theta}\right) = n^{-1}\sum\limits_{i=1}^n \bm{u}_{gi} \left\lbrace 1- r_{gi}/\Psi\left(\bm{\theta}_1 y_{gi} + \bm{\theta}_2\right) \right\rbrace, \quad \tilde{\bm{\Gamma}}_g\left( \bm{\theta}\right) = \nabla_{\bm{\theta}} \tilde{\bm{h}}_g\left( \bm{\theta}\right)\\
    &\tilde{\bm{\Sigma}}_g\left(\bm{\theta}\right) = n^{-1}\sum\limits_{i=1}^n \left\lbrace 1- r_{gi}/\Psi\left(\bm{\theta}_1 y_{gi} + \bm{\theta}_2\right) \right\rbrace^2 \bm{u}_{gi} \bm{u}_{gi}^{\T}.
\end{aligned}
\end{align}
Then there exists constants $\gamma_*,\eta_* > 0$ such that for all $\eta \in \left(0,\eta_*\right)$,
\begin{subequations}
\label{equation:MetabMiss:SupGGamma}
\begin{align}
    \label{equation:MetabMiss:SupG}
    &\E\left\lbrace \sup_{\bm{\theta} \in B\left(\eta; \bm{\theta}_g\right)} \norm{ \tilde{\bm{h}}_g\left( \bm{\theta}\right) - \tilde{\bm{h}}_g\left( \bm{\theta}_g\right) }_2 \right\rbrace \leq \gamma_*\eta\\
    \label{equation:MetabMiss:SupGamma}
    &\E\left\lbrace \sup_{\bm{\theta} \in B\left(\eta; \bm{\theta}_g\right)} \norm{ \tilde{\bm{\Gamma}}_g\left( \bm{\theta}\right) - \tilde{\bm{\Gamma}}_g\left( \bm{\theta}_g\right) }_2 \right\rbrace \leq \gamma_* \eta\\
    \label{equation:MetabMiss:SupSigma}
    &\E\left\lbrace \sup_{\bm{\theta} \in B\left(\eta; \bm{\theta}_g\right)} \norm{ \tilde{\bm{\Sigma}}_g\left( \bm{\theta}\right) - \tilde{\bm{\Sigma}}_g\left( \bm{\theta}_g\right) }_2 \right\rbrace \leq \gamma_* \eta
\end{align}
\end{subequations}
for all $n$ large enough.
\end{lemma}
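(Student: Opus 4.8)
The plan is to prove all three bounds by a single envelope argument: reduce each supremum over the ball $B(\eta;\bm{\theta}_g)$ to $\eta$ times the supremum of a gradient, and then dominate that gradient by an integrable function whose expectation is uniformly bounded in $i$ and $n$. The reparametrization $\bm{\theta}_g = (\alpha_g,-\alpha_g\delta_g)^{\T}$ makes the argument $\bm{\theta}_1 y_{gi}+\bm{\theta}_2$ affine in $\bm{\theta}$, so each summand $\phi_i(\bm{\theta})$ of $\tilde{\bm{h}}_g$, $\tilde{\bm{\Gamma}}_g$ and $\tilde{\bm{\Sigma}}_g$ is smooth in $\bm{\theta}$ wherever $\Psi>0$. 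The fundamental theorem of calculus along the segment from $\bm{\theta}_g$ to $\bm{\theta}$ gives $\norm{\phi_i(\bm{\theta})-\phi_i(\bm{\theta}_g)}_2 \le \norm{\bm{\theta}-\bm{\theta}_g}_2 \sup_{\bm{\theta}^*\in B(\eta;\bm{\theta}_g)}\norm{\nabla_{\bm{\theta}}\phi_i(\bm{\theta}^*)}_2$, and since $\norm{\bm{\theta}-\bm{\theta}_g}_2<\eta$ on the ball, taking $n^{-1}\sum_i$, then the supremum, then the expectation reduces each bound in \eqref{equation:MetabMiss:SupGGamma} to showing $\max_{i\in[n]}\E\{\sup_{\bm{\theta}^*\in B(\eta;\bm{\theta}_g)}\norm{\nabla_{\bm{\theta}}\phi_i(\bm{\theta}^*)}_2\}\le\gamma_*$ for the appropriate $\phi_i$.

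Next I would compute these gradients and strip off the $\Psi$-ratios. The gradient of the $\tilde{\bm{h}}_g$-summand is $\bm{u}_{gi}\,r_{gi}\,\{\dot\Psi/\Psi^2\}(y_{gi},1)$; differentiating once more for $\tilde{\bm{\Gamma}}_g$ produces the combinations $\ddot\Psi/\Psi^2$ and $\dot\Psi^2/\Psi^3$ multiplied by $\bm{u}_{gi}$ and a product of two entries of $(y_{gi},1)$; and the gradient of the $\tilde{\bm{\Sigma}}_g$-summand, after using $r_{gi}^2=r_{gi}$, produces $\dot\Psi/\Psi^2$ and $\dot\Psi/\Psi^3$ terms multiplied by $\bm{u}_{gi}\bm{u}_{gi}^{\T}$ and one entry of $(y_{gi},1)$. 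By Lemma \ref{lemma:MetabMiss:PsiDotPrelim} the ratios $\abs{\dot\Psi/\Psi}$ and $\abs{\ddot\Psi/\Psi}$ are bounded by a constant $M$, so in every case the gradient norm is at most a constant times a polynomial in $(\norm{\bm{u}_{gi}}_2,\abs{y_{gi}})$ multiplied by $\Psi(\bm{\theta}^*_1 y_{gi}+\bm{\theta}^*_2)^{-q}$, with $q=1$ for $\tilde{\bm{h}}_g$ and $\tilde{\bm{\Gamma}}_g$ and $q=2$ for $\tilde{\bm{\Sigma}}_g$.

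The crux is controlling the envelope $\sup_{\bm{\theta}^*\in B(\eta;\bm{\theta}_g)}\Psi(\bm{\theta}^*_1 y_{gi}+\bm{\theta}^*_2)^{-q}$. Because $\Psi$ is strictly increasing, the supremum is attained at the smallest value of the affine argument, and a Cauchy--Schwarz bound gives $\bm{\theta}^*_1 y_{gi}+\bm{\theta}^*_2 \ge \alpha_g(y_{gi}-\delta_g)-\eta(\abs{y_{gi}}+1)$ uniformly over $B(\eta;\bm{\theta}_g)$. Choosing $\eta_*<\min(\alpha_g,M_1^{-1})$ keeps the perturbed scale $\bm{\theta}^*_1$ strictly between $0$ and $\alpha_g+M_1^{-1}$, so monotonicity (the larger scale is worst in the negative tail, where $y_{gi}<0$) yields $\sup_{\bm{\theta}^*}\Psi(\bm{\theta}^*_1 y_{gi}+\bm{\theta}^*_2)^{-1}\le C\,\Psi\{(\alpha_g+M_1^{-1})(y_{gi}-\delta_g)\}^{-1}+C$ for a fixed $C$, the additive constant absorbing the bounded region where the argument stays nonnegative. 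The reference moment $\E[\Psi\{(\alpha_g+M_1^{-1})(y_{gi}-\delta_g)\}^{-(3+M_2^{-1})}]$ is finite and bounded uniformly in $i$ and $n$ by Lemma \ref{lemma:MetabMissPsi}. A H\"older split then separates the $1/\Psi$ factors, raised to a power at most $q=2$ for which we have $3+M_2^{-1}$ moments available, from the polynomial factors in $\abs{y_{gi}}$ and $\norm{\bm{u}_{gi}}_2$, whose conjugate moments are finite and uniformly bounded by the bounded fourth-moment assumption on $e_{gi}$, the bounded eighth-moment assumption on $\bm{\Xi}$ and the uniform bound on $\bm{Z}$ in Assumption \ref{assumption:Basic}.

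Assembling these pieces, each per-summand envelope has expectation bounded by a constant free of $i$ and $n$: the $i$-dependence enters only through the uniformly bounded mean shifts $\mu_i=\bm{Z}_{i\bigcdot}^{\T}(\bm{\xi}_g+\bm{A}\bm{\ell}_g)$, and this uniformity is exactly what Lemma \ref{lemma:MetabMissPsi} supplies, so the average over $i$ is controlled by the maximum and yields $\gamma_*$. I expect the main obstacle to be precisely this envelope control of $1/\Psi$ over the ball, where one must simultaneously exploit monotonicity to collapse the supremum to a single reference argument, invoke the left-tail hypotheses of Assumption \ref{assumption:MetabMiss:MissMech} through Lemma \ref{lemma:MetabMissPsi} to secure the requisite moments, and calibrate $\eta_*$ so the perturbed scale never exceeds $\alpha_g+M_1^{-1}$; the gradient computations and the H\"older bookkeeping, though tedious, are routine once this envelope is in hand.
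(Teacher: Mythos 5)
Your proposal is correct and follows essentially the same route as the paper's proof: a mean-value expansion along the segment from $\bm{\theta}_g$ to $\bm{\theta}$ extracts the factor of $\eta$, Lemma \ref{lemma:MetabMiss:PsiDotPrelim} absorbs the ratios $\dot{\Psi}/\Psi$ and $\ddot{\Psi}/\Psi$, and the residual $\Psi^{-q}$ envelopes (with $q=1$ for $\tilde{\bm{h}}_g,\tilde{\bm{\Gamma}}_g$ and $q=2$ for $\tilde{\bm{\Sigma}}_g$) are controlled by the moment condition in Assumption \ref{assumption:MetabMiss:MissMech} together with Lemma \ref{lemma:MetabMissPsi} and a H\"older split against the polynomial factors. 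Your explicit monotonicity argument collapsing the supremum of $1/\Psi$ over the ball to the single reference scale $\alpha_g+M_1^{-1}$ is precisely the step the paper asserts ``follows easily'' without writing out, so you have if anything supplied more detail than the original.
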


\begin{proof}
Fix $\eta  > 0$, let $\bm{\theta} \in B\left(\eta; \bm{\theta}_g\right)$ and let $\bm{v} \in \mathbb{R}^3$ be any unit vector. Then for some set of $\tilde{\bm{\theta}}_i = b_i\bm{\theta} + (1-b_i)\bm{\theta}_g$, $b_i\in [0,1]$, and $\left(\epsilon_1,\epsilon_2\right)^{\T} = \bm{\theta}-\bm{\theta}_g$,
\begin{align*}
    \abs{\bm{v}^{\T}\left\lbrace \tilde{\bm{h}}_g\left( \bm{\theta}\right) - \tilde{\bm{h}}_g\left( \bm{\theta}_g\right) \right\rbrace} = & \abs{n^{-1}\sum\limits_{i=1}^n \left(\epsilon_1 y_{gi} + \epsilon_2\right) \frac{\dot{\Psi}\left(\tilde{\bm{\theta}}_{i_1} y_{gi} +  \tilde{\bm{\theta}}_{i_2}\right)}{\Psi\left(\tilde{\bm{\theta}}_{i_1} y_{gi} +  \tilde{\bm{\theta}}_{i_2}\right)}\frac{r_{gi} }{\Psi\left(\tilde{\bm{\theta}}_{i_1} y_{gi} +  \tilde{\bm{\theta}}_{i_2}\right)}\bm{v}^{\T}\bm{u}_{gi}}\\
    \leq & M \abs{\epsilon_1} n^{-1}\sum\limits_{i=1}^n \frac{r_{gi} }{\Psi\left(\tilde{\bm{\theta}}_{i_1} y_{gi} +  \tilde{\bm{\theta}}_{i_2}\right)}\abs{\bm{v}^{\T}\bm{u}_{gi}}\\
    & + M\abs{\epsilon_2}n^{-1}\sum\limits_{i=1}^n \frac{r_{gi} }{\Psi\left(\tilde{\bm{\theta}}_{i_1} y_{gi} +  \tilde{\bm{\theta}}_{i_2}\right)}\abs{y_{gi}\bm{v}^{\T}\bm{u}_{gi}}
\end{align*}
where $M > 0$ is defined in Lemma \ref{lemma:MetabMiss:PsiDotPrelim}. \eqref{equation:MetabMiss:SupG} then follows easily by Assumptions \ref{assumption:Basic} and \ref{assumption:MetabMiss:MissMech}.\par
\indent For \eqref{equation:MetabMiss:SupGamma},
\begin{align*}
    \tilde{\bm{\Gamma}}_g\left(\bm{\theta}\right) - \tilde{\bm{\Gamma}}_g\left(\bm{\theta}_g\right) =& n^{-1}\sum\limits_{i=1}^n r_{gi} \left(\epsilon_1 y_{gi} + \epsilon_2\right)\left\lbrace -2\frac{\dot{\Psi}\left( \tilde{\bm{\theta}}_{i_1} y_{gi} + \tilde{\bm{\theta}}_{i_2} \right)^2}{\Psi\left( \tilde{\bm{\theta}}_{i_1} y_{gi} + \tilde{\bm{\theta}}_{i_2} \right)^3}\right.\\
    &\left. + \frac{\ddot{\Psi}\left( \tilde{\bm{\theta}}_{i_1} y_{gi} + \tilde{\bm{\theta}}_{i_2} \right)}{\Psi\left( \tilde{\bm{\theta}}_{i_1} y_{gi} + \tilde{\bm{\theta}}_{i_2} \right)^2}\right\rbrace \bm{u}_{gi}\left( y_{gi}, 1\right)
\end{align*}
where $\epsilon_1,\epsilon_2$ are defined above and $\tilde{\bm{\theta}}_i = b_i\bm{\theta} + (1-b_i)\bm{\theta}_g$ for some $b_i \in [0,1]$. To prove \eqref{equation:MetabMiss:SupGamma}, it suffices to show that
\begin{align*}
    n^{-1}\sum\limits_{i=1}^n y_{gi}^2 \frac{r_{gi}}{\Psi\left( \tilde{\bm{\theta}}_{i_1} y_{gi} + \tilde{\bm{\theta}}_{i_2} \right)}\bm{u}_{gi}
\end{align*}
has at most finite expectation by Lemma \ref{lemma:MetabMiss:PsiDotPrelim}. However, this follows because the entries of $\bm{u}_i$ have uniformly bounded sixth moment.\par
\indent Using the same notation as above, we can express \eqref{equation:MetabMiss:SupSigma} as
\begin{align*}
    \tilde{\bm{\Sigma}}_g\left( \bm{\theta}\right) - \tilde{\bm{\Sigma}}_g\left( \bm{\theta}_g\right) =& 2 n^{-1}\sum\limits_{i=1}^n \left(\epsilon_1 y_{gi} + \epsilon_2\right)r_{gi} \left\lbrace \frac{\dot{\Psi}\left( \tilde{\bm{\theta}}_{i_1} y_{gi} + \tilde{\bm{\theta}}_{i_2} \right)}{\Psi\left( \tilde{\bm{\theta}}_{i_1} y_{gi} + \tilde{\bm{\theta}}_{i_2} \right)^2}\right.\\
    &\left. - \frac{\dot{\Psi}\left( \tilde{\bm{\theta}}_{i_1} y_{gi} + \tilde{\bm{\theta}}_{i_2} \right)}{\Psi\left( \tilde{\bm{\theta}}_{i_1} y_{gi} + \tilde{\bm{\theta}}_{i_2} \right)^3} \right\rbrace \bm{u}_{gi} \bm{u}_{gi}^{\T}.
\end{align*}
Again, by Lemma \ref{lemma:MetabMiss:PsiDotPrelim}, it suffices to show that
\begin{align*}
    n^{-1}\sum\limits_{i=1}^n \frac{r_{gi}}{\Psi\left( \tilde{\bm{\theta}}_{i_1} y_{gi} + \tilde{\bm{\theta}}_{i_2} \right)^2}\abs{y_{gi}}\bm{u}_{gi} \bm{u}_{gi}^{\T}
\end{align*}
has bounded expectation. However, this follows by the bounded sixth moment assumption on the entries of $\bm{u}_{gi}$ and Assumption \ref{assumption:MetabMiss:MissMech}.
\end{proof}

\begin{lemma}
\label{lemma:MetabMiss:GUest}
Let $\bar{\bm{h}}_g\left(\bm{\theta}\right),\bm{\Gamma}_g\left(\bm{\theta}\right),\hat{\bm{\Sigma}}\left(\bm{\theta}\right)$ and $\tilde{\bm{h}}_g\left(\bm{\theta}\right),\tilde{\bm{\Gamma}}_g\left(\bm{\theta}\right),\tilde{\bm{\Sigma}}\left(\bm{\theta}\right)$ be as defined in \eqref{equation:Supp:Moments} and \eqref{equation:MetabMiss:ParamsKnown}, respectively. Suppose the assumptions of Lemma \ref{lemma:MetabMiss:PsiDot} hold. Then for $\eta > 0$ small enough,
\begin{align*}
    \sup_{\bm{\theta} \in B\left( \eta; \bm{\theta}_g\right)} \norm{ \bar{\bm{h}}_g\left(\bm{\theta}\right) - \tilde{\bm{h}}_g\left(\bm{\theta}\right) }_2, \, \sup_{\bm{\theta} \in B\left( \eta; \bm{\theta}_g\right)} \norm{ \bm{\Gamma}_g\left(\bm{\theta}\right) - \tilde{\bm{\Gamma}}_g\left(\bm{\theta}\right) }_2,\,  \sup_{\bm{\theta} \in B\left( \eta; \bm{\theta}_g\right)} \norm{ \hat{\bm{\Sigma}}_g\left(\bm{\theta}\right) - \tilde{\bm{\Sigma}}_g\left(\bm{\theta}\right) }_2 = o_P(1)
\end{align*}
as $n,p \to \infty$.
\end{lemma}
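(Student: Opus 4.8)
The plan is to observe that the three pairs in \eqref{equation:Supp:Moments} and \eqref{equation:MetabMiss:ParamsKnown} differ only through the replacement of the true instrument $\bm{u}_{gi}=(\bm{Z}_{i\bigcdot}^{\T},\bm{\Xi}_{ig_1},\ldots,\bm{\Xi}_{ig_{3-t}})^{\T}$ by the estimated $\hat{\bm{u}}_{gi}=(\bm{Z}_{i\bigcdot}^{\T},\hat{\bm{C}}_{2_{ig_1}},\ldots,\hat{\bm{C}}_{2_{ig_{3-t}}})^{\T}$, which share their first $t$ coordinates. Writing $d_{gi}(\bm{\theta})=1-r_{gi}\{\Psi(\bm{\theta}_1y_{gi}+\bm{\theta}_2)\}^{-1}$ and $\bm{a}^{(k)}=\hat{\bm{C}}_{2_{\bigcdot g_k}}-\bm{\Xi}_{\bigcdot g_k}$, the $(t+k)$th coordinate of $\bar{\bm{h}}_g(\bm{\theta})-\tilde{\bm{h}}_g(\bm{\theta})$ is $n^{-1}\sum_i a^{(k)}_i d_{gi}(\bm{\theta})$; the difference $\bm{\Gamma}_g(\bm{\theta})-\tilde{\bm{\Gamma}}_g(\bm{\theta})$ has the same shape with $d_{gi}$ replaced by $r_{gi}\{\dot{\Psi}/\Psi^2\}[(y_{gi},1)]_j$; and, expanding $\hat{\bm{u}}_{gi}\hat{\bm{u}}_{gi}^{\T}-\bm{u}_{gi}\bm{u}_{gi}^{\T}=(\hat{\bm{u}}_{gi}-\bm{u}_{gi})\hat{\bm{u}}_{gi}^{\T}+\bm{u}_{gi}(\hat{\bm{u}}_{gi}-\bm{u}_{gi})^{\T}$, the entries of $\hat{\bm{\Sigma}}_g(\bm{\theta})-\tilde{\bm{\Sigma}}_g(\bm{\theta})$ are of the form $n^{-1}\sum_i a^{(k)}_i d_{gi}(\bm{\theta})^2[\hat{\bm{u}}_{gi}]_m$ and $n^{-1}\sum_i a^{(k)}_i d_{gi}(\bm{\theta})^2[\bm{u}_{gi}]_m$. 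Since $t\le 3$, $3-t\le 2$, and the number of columns is finite, it suffices to show each such scalar is $o_P(1)$ uniformly over $\bm{\theta}\in B(\eta;\bm{\theta}_g)$.

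First I would bound the instrument error $\norm{\bm{a}^{(k)}}_2$ in $\ell_2$. Splitting $\bm{a}^{(k)}=(\hat{\bm{C}}_{2_{\bigcdot g_k}}-\bm{C}_{2_{\bigcdot g_k}})+(\bm{C}_{2_{\bigcdot g_k}}-\bm{\Xi}_{\bigcdot g_k})$ and invoking the decomposition \eqref{equation:MetabMiss:C2hat} with the bounds \eqref{equation:MetabMiss:wvResults:bound}, the first bracket equals $\bm{C}_2(\hat{\bm{v}}_{\bigcdot g_k}-\bm{a}_{g_k})+n^{1/2}\bm{Q}_{P_{Z}^{\perp}C}\hat{\bm{w}}_{\bigcdot g_k}$, whose norm is $O_P(p_s^{-1/2})+O_P\{(n/p_s)^{1/2}\}=O_P(1)$ because $\norm{\bm{C}_2}_2=n^{1/2}$ and $p_s\asymp n$ by Assumption \ref{assumption:Basic}. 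For the second bracket I would use $\bm{C}_2=P_{Z}^{\perp}\bm{\Xi}(n^{-1}\bm{\Xi}^{\T}P_{Z}^{\perp}\bm{\Xi})^{-1/2}\bm{U}$ from \eqref{equation:MetabMiss:Ctilde}, together with $(n^{-1}\bm{\Xi}^{\T}P_{Z}^{\perp}\bm{\Xi})^{-1/2}\bm{U}_{\bigcdot g_k}=\bm{a}_{g_k}+O_P(n^{-1/2})$ and $\norm{P_Z\bm{\Xi}_{\bigcdot g_k}}_2=O_P(1)$ (both as in the proof of Lemma \ref{lemma:MetabMiss:OLS_MNAR_observed}, using $\norm{\bm{\Xi}}_2=O_P(n^{1/2})$), to conclude $\norm{\bm{C}_{2_{\bigcdot g_k}}-\bm{\Xi}_{\bigcdot g_k}}_2=O_P(1)$. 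Hence $\norm{\bm{a}^{(k)}}_2=O_P(1)$.

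Next, since $g\in\Missing$, the data $(\bm{y}_g,\bm{r}_g)$ are independent of $\hat{\bm{C}}_2$ (and of $\bm{\Xi}$), so I would condition on $\mathcal{F}=\sigma(\bm{Y}_{\Observed},\bm{C},\bm{Z})$, under which $\bm{a}^{(k)}$ and the instrument entries $[\hat{\bm{u}}_{gi}]_m,[\bm{u}_{gi}]_m$ are fixed. For $\eta$ small enough (below $M_1^{-1}$), the same envelope argument used to prove Lemma \ref{lemma:MetabMiss:PsiDot}, combined with Lemma \ref{lemma:MetabMissPsi}, Lemma \ref{lemma:MetabMiss:PsiDotPrelim} and Assumption \ref{assumption:MetabMiss:MissMech}, dominates $\sup_{\bm{\theta}\in B(\eta;\bm{\theta}_g)}\{\Psi(\bm{\theta}_1y_{gi}+\bm{\theta}_2)\}^{-1}$ by $\{\Psi((\alpha_g+M_1^{-1})(y_{gi}-\delta_g))\}^{-1}$, whose $(3+M_2^{-1})$th inverse moment is finite and bounded uniformly in $i$. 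Because the conclusion is only $o_P(1)$, I would keep the $\Psi$-power low by pulling the per-term conditional moment out of the inner sum: for the $\bar{\bm{h}}_g$, $\bm{\Gamma}_g$ and $\hat{\bm{\Sigma}}_g$ weights the factors needed are $\sup_{\bm{\theta}}|d_{gi}|$, $\sup_{\bm{\theta}}w_{gi}\max(|y_{gi}|,1)$ and $\sup_{\bm{\theta}}d_{gi}^2$, all of which have uniformly bounded conditional expectation (requiring at most the $\Psi^{-2}$ envelope, well within the $3+M_2^{-1}$ budget, with the $y_{gi}$ factor absorbed by Cauchy-Schwarz and the bounded moments of Assumption \ref{assumption:Basic}). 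This gives, uniformly over the ball, a conditional bound $cn^{-1}\sum_i|a^{(k)}_i|\,|[\hat{\bm{u}}_{gi}]_m|\le cn^{-1}\norm{\bm{a}^{(k)}}_2(\sum_i[\hat{\bm{u}}_{gi}]_m^2)^{1/2}$, where $\sum_i\hat{\bm{C}}_{2_{im}}^2=n$ (the columns of $n^{-1/2}\hat{\bm{C}}_2$ are unit singular vectors) and $\sum_i\bm{Z}_{im}^2=O(n)$. Thus each scalar is $n^{-1}O_P(1)\,O_P(n^{1/2})=O_P(n^{-1/2})$ conditionally, hence $o_P(1)$ by conditional Markov; summing the finitely many coordinates, columns and cross terms proves all three suprema are $o_P(1)$ as $n,p\to\infty$.

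The main obstacle is securing a single integrable envelope valid simultaneously for every $\bm{\theta}$ in the ball: the weights $1/\Psi$ are unbounded and vary in the \emph{slope} $\bm{\theta}_1$, so the envelope must tolerate a slope increase of up to $M_1^{-1}$, which is exactly why $\eta$ must be taken small and why the tail and $(3+M_2^{-1})$-moment conditions of Assumption \ref{assumption:MetabMiss:MissMech} are invoked; the secondary difficulty is the quadratic dependence of $\hat{\bm{\Sigma}}_g$ on the estimated instrument, which is dispatched by conditioning on the $\Observed$-data so that $\bm{a}^{(k)}$ and $[\hat{\bm{u}}_{gi}]_m$ become fixed and the argument above applies term by term.
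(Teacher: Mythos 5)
Your overall strategy is the paper's: reduce all three differences to the replacement of $\bm{u}_{gi}$ by $\hat{\bm{u}}_{gi}$, show the columnwise instrument error satisfies $\norm{\hat{\bm{C}}_{2_{\bigcdot g_k}}-\bm{\Xi}_{\bigcdot g_k}}_2=O_P(1)$ via \eqref{equation:MetabMiss:C2hat}, \eqref{equation:MetabMiss:wvResults:bound} and $p_s\asymp n$ (this part is correct and matches the paper's \eqref{equation:Supp:CXiDelta}--\eqref{equation:Supp:Delta} and its conclusion $\norm{\hat{\bm{U}}_g-\bm{U}_g}_2=O_P(1)$), and then pair that $O_P(1)$ error with a uniform-in-$\bm{\theta}$ control of the weights. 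The gap is in how you obtain the second control. You assert that $(\bm{y}_g,\bm{r}_g)$ is independent of $\bm{\Xi}$ and that, after conditioning on $\mathcal{F}=\sigma(\bm{Y}_{\Observed},\bm{C},\bm{Z})$, the quantities $\E\left\lbrace \sup_{\bm{\theta}\in B(\eta;\bm{\theta}_g)}\abs{d_{gi}(\bm{\theta})}\mid\mathcal{F}\right\rbrace$ (and their analogues for $\bm{\Gamma}_g$ and $\hat{\bm{\Sigma}}_g$) are uniformly bounded in $i$. Neither holds: under Assumption \ref{assumption:Basic}, $\bm{y}_g=\bm{Z}\bm{\xi}_g+(\bm{Z}\bm{A}+\bm{\Xi})\bm{\ell}_g+\bm{e}_g$, so $y_{gi}$ depends on $\bm{\Xi}_{i\bigcdot}$ through the location shift $\bm{\Xi}_{i\bigcdot}^{\T}\bm{\ell}_g$. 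Conditioning on $\mathcal{F}$ fixes that shift, and the conditional inverse-$\Psi$ moments then vary with $i$ through $\bm{\Xi}_{i\bigcdot}$. In the exponential-tail case of Assumption \ref{assumption:MetabMiss:MissMech} (e.g.\ logistic $\Psi$), $\E\left\lbrace \sup_{\bm{\theta}\in B(\eta;\bm{\theta}_g)}r_{gi}/\Psi(\bm{\theta}_1 y_{gi}+\bm{\theta}_2)\mid\mathcal{F}\right\rbrace$ grows like $\exp\left(k\eta\abs{\bm{\Xi}_{i\bigcdot}^{\T}\bm{\ell}_g}\right)$, whose maximum over $i\in[n]$ diverges; Lemma \ref{lemma:MetabMissPsi} supplies only \emph{unconditional} moment bounds, precisely because it averages over $\bm{\Xi}_{i\bigcdot}$ via the identically distributed $\tilde{e}_{gi}=y_{gi}-\mu_i$.

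The repair is small and is exactly the paper's route: do not pull a per-observation conditional moment out of the inner sum. Apply Cauchy--Schwarz across $i$ first, e.g.\ $\abs{n^{-1}\sum_{i}a^{(k)}_i d_{gi}(\bm{\theta})}\le n^{-1}\norm{\bm{a}^{(k)}}_2\norm{\bm{d}_1(\bm{\theta})}_2$, and control $\sup_{\bm{\theta}\in B(\eta;\bm{\theta}_g)}\norm{\bm{d}_1(\bm{\theta})}_2=O_P\left(n^{1/2}\right)$, $\sup_{\bm{\theta}\in B(\eta;\bm{\theta}_g)}\norm{\bm{d}_2(\bm{\theta})}_2=O_P\left(n^{1/2}\right)$ and, for the quadratic $\hat{\bm{\Sigma}}_g$ term, $\sup_{\bm{\theta}\in B(\eta;\bm{\theta}_g)}\norm{\bm{D}(\bm{\theta})}_2=o_P\left(n^{1/2}\right)$ using the unconditional envelopes furnished by Assumption \ref{assumption:MetabMiss:MissMech} and Lemma \ref{lemma:MetabMissPsi}. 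With that substitution your accounting $n^{-1}\cdot O_P(1)\cdot O_P\left(n^{1/2}\right)=O_P\left(n^{-1/2}\right)$ goes through unchanged and all three suprema are $o_P(1)$; as written, however, the uniform conditional boundedness step is not justified by the paper's assumptions and fails for exponential-tailed $\Psi$.
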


\begin{proof}
Define
\begin{align*}
    &\bm{d}_1\left( \bm{\theta} \right) = \left( 1-\frac{r_{g1}}{\Psi\left(\bm{\theta}_1 y_{g1} + \bm{\theta}_2\right)} ,\ldots, 1-\frac{r_{gn}}{\Psi\left(\bm{\theta}_1 y_{gn} + \bm{\theta}_2\right)} \right)^{\T} \in \mathbb{R}^{n}\\
    &\bm{d}_2\left( \bm{\theta} \right) = \left( \left\lbrace 1- \frac{r_{g1} \dot{\Psi}\left(\bm{\theta}_1 y_{g1} + \bm{\theta}_2\right)}{\Psi\left(\bm{\theta}_1 y_{g1} + \bm{\theta}_2\right)^2} \right\rbrace\left(1,y_{g1}\right)^{\T}, \ldots, \left\lbrace 1- \frac{r_{gn} \dot{\Psi}\left(\bm{\theta}_1 y_{gn} + \bm{\theta}_2\right)}{\Psi\left(\bm{\theta}_1 y_{gn} + \bm{\theta}_2\right)^2} \right\rbrace\left(1,y_{gn}\right)^{\T} \right)^{\T}\in \mathbb{R}^{n \times 2}\\
    &\bm{D}\left(\bm{\theta}\right) = \diag\left[ \left\lbrace 1-\frac{r_{g1}}{\Psi\left(\bm{\theta}_1 y_{g1} + \bm{\theta}_2\right)}\right\rbrace^2,\ldots, \left\lbrace 1-\frac{r_{gn}}{\Psi\left(\bm{\theta}_1 y_{gn} + \bm{\theta}_2\right)}\right\rbrace^2 \right] \in \mathbb{R}^{n \times n}.
\end{align*}
\indent For $\bm{a}_k \in \mathbb{R}^K$ the $k$th standard basis vector, $\bm{U}$ defined in the statement of Lemma \ref{lemma:MetabMiss:PCA} and $\hat{\bm{v}},\hat{\bm{w}}$ defined in \eqref{equation:MetabMiss:wvResults},
\begin{align}
\label{equation:Supp:CXiDelta}
    \hat{\bm{C}}_{2_{\bigcdot k}} - \bm{\Xi}_{\bigcdot k} = P_{Z}^{\perp}\bm{\Xi}\left\lbrace \left(n^{-1}\bm{\Xi}^{\T}P_{Z}^{\perp}\bm{\Xi}\right)^{-1/2}\bm{U}\hat{\bm{v}}_{\bigcdot k} - \bm{a}_k \right\rbrace + n^{1/2}\bm{Q}_{P_Z^{\perp}\Xi}\hat{\bm{w}}_{\bigcdot k} + P_{Z}\bm{\Xi}_{\bigcdot k}, \quad k \in \left[K\right].
\end{align}
By Lemma \ref{lemma:MetabMiss:PCA} and the fact that the upper $K_{\miss} \times K_{\miss}$ block of $\bm{U}$ is $I_{K_{\miss}} + O_P\left(n^{-1/2}\right)$ under Assumption \ref{assumption:Basic},
\begin{align}
\label{equation:Supp:Delta}
    \bm{\Delta}_k = \left(n^{-1}\bm{\Xi}^{\T}P_{Z}^{\perp}\bm{\Xi}\right)^{-1/2}\bm{U}\hat{\bm{v}}_{\bigcdot k} - \bm{a}_k = O_P\left(n^{-1/2}\right), \quad k \in \left[ K_{\miss}\right]
\end{align}
as $n \to \infty$. If $j \in [t]$, we see that $ \left\lbrace\bar{\bm{h}}_g\left(\bm{\theta} \right) - \tilde{\bm{h}}_g\left(\bm{\theta} \right)\right\rbrace_{j}$ and $ \left\lbrace\bar{\bm{\Gamma}}_g\left(\bm{\theta} \right) - \tilde{\bm{\Gamma}}_g\left(\bm{\theta} \right)\right\rbrace_{j\bigcdot}$ are 0. Otherwise,
\begin{align*}
    \left\lbrace\bar{\bm{h}}_g\left(\bm{\theta} \right) - \tilde{\bm{h}}_g\left(\bm{\theta} \right)\right\rbrace_{j} =& n^{-1}\bm{\Delta}_{g_j}^{\T}\bm{\Xi}^{\T}\bm{d}_1\left(\bm{\theta}\right) + n^{-1/2}\hat{\bm{w}}_{\bigcdot g_j}^{\T}\bm{Q}_{P_{Z}^{\perp}\Xi}^{\T}\bm{d}_1\left(\bm{\theta}\right)\\
    &+ n^{-1}\left(\bm{a}_{g_j} - \bm{\Delta}_{g_j}\right)^{\T}\bm{\Xi}^{\T}P_{Z}\bm{d}_1\left(\bm{\theta}\right), \quad j \in \left[t+1,3-t\right]\\
    \left\lbrace \bm{\Gamma}_g\left(\bm{\theta} \right) - \tilde{\bm{\Gamma}}_g\left(\bm{\theta} \right)\right\rbrace_{j \bigcdot} =& n^{-1}\bm{\Delta}_{g_j}^{\T}\bm{\Xi}^{\T}\bm{d}_2\left(\bm{\theta}\right) + n^{-1/2}\hat{\bm{w}}_{\bigcdot g_j}^{\T}\bm{Q}_{P_{Z}^{\perp}\Xi}^{\T}\bm{d}_2\left(\bm{\theta}\right)\\
    &+ n^{-1}\left(\bm{a}_{g_j} - \bm{\Delta}_{g_j}\right)^{\T}\bm{\Xi}^{\T}P_{Z}\bm{d}_2\left(\bm{\theta}\right), \quad j \in \left[t+1,3-t\right].
\end{align*}
We first see that
\begin{align*}
   \sup\limits_{\bm{\theta} \in B\left(\eta; \bm{\theta}_g\right)} \norm{ \bm{d}_1\left(\bm{\theta}\right)}_1, \sup\limits_{\bm{\theta} \in B\left(\eta; \bm{\theta}_g\right)}\norm{ \bm{d}_2\left(\bm{\theta}\right)}_2 = O_P\left(n^{1/2}\right)
\end{align*}
by Assumptions \ref{assumption:Basic} and \ref{assumption:MetabMiss:MissMech} for $\eta > 0$ small enough. Therefore, for all $\eta > 0$ small enough and $i=1,2$,
\begin{align*}
    \sup_{\bm{\theta} \in B\left(\eta,\bm{\theta}_g\right)}\norm{ n^{-1/2}\hat{\bm{w}}_{\bigcdot g_j}^{\T}\bm{Q}_{P_{Z}^{\perp}\Xi}^{\T}\bm{d}_i\left(\bm{\theta}\right) + n^{-1}\left(\bm{a}_{g_j} - \bm{\Delta}_{g_j}\right)^{\T}\bm{\Xi}^{\T}P_{Z}\bm{d}_i\left(\bm{\theta}\right)}_2 = o_P(1), \quad j \in [t+1,3-t]
\end{align*}
as $n,p \to \infty$ by Lemma \ref{lemma:MetabMiss:PCA}. Lastly, by Assumption \ref{assumption:MetabMiss:MissMech} and since $y_{g1},\ldots,y_{gn}$ have uniformly bounded fourth moments under Assumption \ref{assumption:Basic},
\begin{align*}
    \sup_{\bm{\theta} \in B\left(\eta,\bm{\theta}_g\right)} \norm{n^{-1}\bm{\Xi}^{\T}\bm{d}_i\left(\bm{\theta}\right)}_2 = O_P\left(1\right), \quad i \in [2]
\end{align*}
as $n \to \infty$ for all $\eta > 0$ small enough. This shows
\begin{align*}
     \sup_{\bm{\theta} \in B\left(\eta,\bm{\theta}_g\right)} \norm{\bar{\bm{h}}_g\left(\bm{\theta}\right) - \tilde{\bm{h}}_g\left(\bm{\theta}\right)}_2, \, \sup_{\bm{\theta} \in B\left(\eta,\bm{\theta}_g\right)} \norm{ \bm{\Gamma}_g\left(\bm{\theta}\right) - \tilde{\bm{\Gamma}}_g\left(\bm{\theta}\right)}_2 = o_P(1)
\end{align*}
for all $\eta > 0$ small enough as $n,p \to \infty$.\par 
\indent For the last relation, let $\bm{U}_g = \left(\bm{u}_{g1}\cdots\bm{u}_{gn}\right)^{\T}$ and $\hat{\bm{U}}_g = \left(\hat{\bm{u}}_{g1}\cdots \hat{\bm{u}}_{gn}\right)^{\T}$, where $\bm{u}_{gi}$ and $\hat{\bm{u}}_{gi}$ defined in the statement of Lemma \ref{lemma:MetabMiss:PsiDot} and \eqref{equation:Supp:Moments}, respectively. Then
\begin{align*}
    \hat{\bm{\Sigma}}_g\left(\bm{\theta}\right) - \tilde{\bm{\Sigma}}_g\left(\bm{\theta}\right) =& n^{-1}\left(\hat{\bm{U}}_g - \bm{U}_g\right)^{\T}\bm{D}\left(\bm{\theta}\right)\bm{U}_g + n^{-1}\left\lbrace \left(\hat{\bm{U}}_g - \bm{U}_g\right)^{\T}\bm{D}\left(\bm{\theta}\right)\bm{U}_g \right\rbrace^{\T}\\
    & + n^{-1}\left(\hat{\bm{U}}_g - \bm{U}_g\right)^{\T}\bm{D}\left(\bm{\theta}\right)\left(\hat{\bm{U}}_g - \bm{U}_g\right).
\end{align*}
First, by Assumption \ref{assumption:MetabMiss:MissMech} and Lemma \ref{lemma:MetabMissPsi}, there exists constants $\gamma,c > 0$ such that
\begin{align*}
    \E\left[ \frac{r_{gi}}{\Psi\left\lbrace\left(\bm{\theta}_{g_1}+\eta\right) y_{gi} + \left(\bm{\theta}_{g_2}-\eta\right)\right\rbrace^{4+\gamma}} \right] \leq c, \quad i \in [n].
\end{align*}
Therefore, $\sup\limits_{\bm{\theta} \in B\left(\eta; \bm{\theta}_g\right)}\norm{ \bm{D}\left(\bm{\theta}\right)}_2 = o_P\left(n^{1/2}\right)$ for $\eta > 0$ small enough. By \eqref{equation:Supp:CXiDelta} and \eqref{equation:Supp:Delta} and because $\norm{n^{1/2}\hat{\bm{w}}_{\bigcdot k}}_2 = O_P(1)$ by Lemma \ref{lemma:MetabMiss:PCA}, $\norm{\hat{\bm{U}}_g - \bm{U}_g}_2 = O_P(1)$ as $n,p \to \infty$. Since $\norm{\bm{U}_g}_2 = O_P\left(n^{1/2}\right)$, this completes the proof.
\end{proof}

\begin{lemma}
\label{lemma:MetabMiss:GUWLLN}
Suppose the assumptions of Lemma \ref{lemma:MetabMiss:GUest} hold and let $\bm{M}_g\left(\bm{\theta}\right)$ be as defined in Assumption \ref{assumption:MetabMiss:Gamma}. Then the following hold for $\bar{\bm{h}}_g\left(\bm{\theta}\right),\tilde{\bm{h}}_g\left(\bm{\theta}\right)$ and $\bm{\Gamma}_g\left(\bm{\theta}\right)$ defined in Lemma \ref{lemma:MetabMiss:GUest}:
\begin{enumerate}[label=(\roman*)]
    \item There exists a constant $\eta > 0$ such that $\E\left\lbrace \tilde{\bm{h}}_g\left(\bm{\theta} \right) \right\rbrace$ and $\bm{M}_g\left(\bm{\theta}\right)$ exist and are continuous for all $\bm{\theta} \in B\left(\eta; \bm{\theta}_g\right)$.\label{item:MetabMiss:GMexist}
    \item There exists a constant $\eta > 0$ such that
    \begin{align*}
        \sup_{\bm{\theta} \in B\left(\eta; \bm{\theta}_g\right)}\norm{ \bar{\bm{h}}_g\left(\bm{\theta}\right) - \E\left\lbrace \tilde{\bm{h}}_g\left(\bm{\theta} \right) \right\rbrace }_2, \, \sup_{\bm{\theta} \in B\left(\eta; \bm{\theta}_g\right)}\norm{ \bm{\Gamma}_g\left(\bm{\theta}\right) - \bm{M}_g\left(\bm{\theta} \right) }_2 = o_P(1)
    \end{align*}
    as $n,p \to \infty$.\label{item:MetabMiss:GMUWLLN}
\end{enumerate}
\end{lemma}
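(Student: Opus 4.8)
The plan is to reduce both assertions to a uniform weak law of large numbers (UWLLN) for the \emph{known-instrument} functionals $\tilde{\bm{h}}_g$ and $\tilde{\bm{\Gamma}}_g$ of \eqref{equation:MetabMiss:ParamsKnown}, and then transfer the conclusion to the estimated-instrument functionals $\bar{\bm{h}}_g$ and $\bm{\Gamma}_g$ of \eqref{equation:Supp:Moments} using Lemma \ref{lemma:MetabMiss:GUest}. Throughout I read $\bm{M}_g(\bm{\theta}) := \E\{\tilde{\bm{\Gamma}}_g(\bm{\theta})\} = \nabla_{\bm{\theta}}\E\{\tilde{\bm{h}}_g(\bm{\theta})\}$, which coincides with the object of Assumption \ref{assumption:MetabMiss:Gamma} after the smooth, nondegenerate (since $\alpha_g > 0$) change of variables $(\alpha,\delta)\mapsto\bm{\theta}=(\alpha,-\alpha\delta)^{\T}$. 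For Item \ref{item:MetabMiss:GMexist}, I would first fix $\eta>0$ small enough that for $\bm{\theta}\in B(\eta;\bm{\theta}_g)$ the left-tail domination $1/\Psi(\bm{\theta}_1 y + \bm{\theta}_2)\le 1/\Psi\{(\alpha_g+M_1^{-1})(y-\delta_g)\}$ holds; this is exactly the buffer built into Lemma \ref{lemma:MetabMissPsi}, which then furnishes a single integrable envelope with strictly more than three finite moments, uniformly in $i$ and $n$. Combining this envelope with the uniformly bounded eighth moments of $\bm{u}_{gi}$ (Assumption \ref{assumption:Basic}) through H\"older shows $\E[\bm{u}_{gi}r_{gi}/\Psi(\bm{\theta}_1 y_{gi}+\bm{\theta}_2)]$ is finite on $B(\eta;\bm{\theta}_g)$; continuity of $\E\{\tilde{\bm{h}}_g(\bm{\theta})\}$ follows by dominated convergence, and existence and continuity of $\bm{M}_g$ follow by differentiating under the integral, legitimized by the bounds $\abs{\dot{\Psi}/\Psi},\abs{\ddot{\Psi}/\Psi}\le M$ of Lemma \ref{lemma:MetabMiss:PsiDotPrelim}, which produce dominating functions of the same envelope type.

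For Item \ref{item:MetabMiss:GMUWLLN}, I would write, for each functional,
\[
\sup_{\bm{\theta}\in B(\eta;\bm{\theta}_g)}\norm{\bm{\Gamma}_g(\bm{\theta})-\bm{M}_g(\bm{\theta})}_2 \le \sup_{\bm{\theta}\in B(\eta;\bm{\theta}_g)}\norm{\bm{\Gamma}_g(\bm{\theta})-\tilde{\bm{\Gamma}}_g(\bm{\theta})}_2 + \sup_{\bm{\theta}\in B(\eta;\bm{\theta}_g)}\norm{\tilde{\bm{\Gamma}}_g(\bm{\theta})-\E\{\tilde{\bm{\Gamma}}_g(\bm{\theta})\}}_2,
\]
and the exactly analogous bound for $\bar{\bm{h}}_g$. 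The first term on the right is $o_P(1)$ uniformly on $B(\eta;\bm{\theta}_g)$ by Lemma \ref{lemma:MetabMiss:GUest}, so the whole task collapses to the UWLLN $\sup_{\bm{\theta}}\norm{\tilde{\bm{h}}_g(\bm{\theta})-\E\{\tilde{\bm{h}}_g(\bm{\theta})\}}_2 = o_P(1)$ and its counterpart for $\tilde{\bm{\Gamma}}_g$.

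I would establish this UWLLN from two ingredients. \textbf{Pointwise convergence:} for fixed $\bm{\theta}$, $\tilde{\bm{h}}_g(\bm{\theta})$ is an average of independent (not identically distributed) summands whose variances are bounded uniformly in $i$ and $n$ — using $(1-r_{gi}/\Psi)^2\le 2(1+1/\Psi^2)$, the envelope of Lemma \ref{lemma:MetabMissPsi} (which has strictly more than two finite moments), and H\"older against the bounded moments of $\bm{u}_{gi}$ — so $\V\{\tilde{\bm{h}}_g(\bm{\theta})\}=O(n^{-1})$ and Chebyshev gives $o_P(1)$; the same holds for $\tilde{\bm{\Gamma}}_g$. \textbf{Stochastic equicontinuity:} the mean-value derivations behind \eqref{equation:MetabMiss:SupG}--\eqref{equation:MetabMiss:SupGamma} actually yield a Lipschitz bound $\norm{\tilde{\bm{h}}_g(\bm{\theta})-\tilde{\bm{h}}_g(\bm{\theta}')}_2\le L_n\norm{\bm{\theta}-\bm{\theta}'}_2$ for all $\bm{\theta},\bm{\theta}'\in B(\eta;\bm{\theta}_g)$, with a random constant satisfying $\E(L_n)\le\gamma_*$ uniformly in $n$, hence $L_n=O_P(1)$, and likewise for $\tilde{\bm{\Gamma}}_g$. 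I would then cover the compact set $\overline{B(\eta;\bm{\theta}_g)}$ by finitely many $\delta$-balls centered at net points $\bm{\theta}^{(1)},\dots,\bm{\theta}^{(N)}$ and bound the supremum by $\max_j\norm{\tilde{\bm{h}}_g(\bm{\theta}^{(j)})-\E\{\tilde{\bm{h}}_g(\bm{\theta}^{(j)})\}}_2 + (L_n+\gamma_*)\delta$, where the deterministic modulus of $\E\{\tilde{\bm{h}}_g\}$ is controlled by $\gamma_*\delta$ via Jensen applied to the same Lipschitz bound. Letting $n\to\infty$ (a finite maximum of pointwise-$o_P(1)$ terms vanishes) and then $\delta\to 0$ yields $o_P(1)$, completing both parts.

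The main obstacle is the equicontinuity step. Lemma \ref{lemma:MetabMiss:PsiDot} supplies increment bounds anchored only at $\bm{\theta}_g$, and a naive triangle inequality from those gives merely an $O_P(\eta)$ control that does not vanish for the fixed $\eta$ the statement requires. The essential work is therefore to recognize that the proof of Lemma \ref{lemma:MetabMiss:PsiDot} in fact delivers a genuine, uniform-in-$n$, integrable (indeed $O_P(1)$) random Lipschitz constant over the whole ball, which is precisely what makes the finite-covering argument go through; verifying this Lipschitz-constant bound carefully, while keeping the envelope moments uniform over $\bm{\theta}\in B(\eta;\bm{\theta}_g)$ through the $\alpha_g+M_1^{-1}$ buffer, is the technically delicate part.
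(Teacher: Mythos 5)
Your proposal is correct and follows essentially the same route as the paper: Item (i) via dominated convergence with the $\alpha_g+M_1^{-1}$ envelope of Lemma \ref{lemma:MetabMissPsi}, and Item (ii) by the triangle inequality through $\tilde{\bm{h}}_g,\tilde{\bm{\Gamma}}_g$ (absorbing the estimated-instrument error with Lemma \ref{lemma:MetabMiss:GUest}) followed by a pointwise law of large numbers plus stochastic equicontinuity on the compact ball. Your observation that the equicontinuity must be extracted from the \emph{proof} of Lemma \ref{lemma:MetabMiss:PsiDot} (whose mean-value argument yields a random Lipschitz constant with uniformly bounded expectation over the whole ball) rather than from its statement (which is anchored at $\bm{\theta}_g$ and by itself gives only an $O_P(\eta)$ control) is precisely the detail the paper leaves implicit when it asserts stochastic equicontinuity ``by Lemma \ref{lemma:MetabMiss:PsiDot}.''
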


\begin{proof}
The existence and continuity of $\E\left\lbrace \tilde{\bm{h}}_g\left(\bm{\theta} \right) \right\rbrace$ is a direct consequence of \eqref{equation:MetabMiss:SupG}. To show the existence of $\bm{M}_g\left(\bm{\theta}\right)$, let $\left\lbrace \gamma_m \right\rbrace_{m\geq 1}$ be such that $\mathop{\lim}\limits_{m \to \infty} \gamma_m = 0$ and $\gamma_m \neq 0$ for all $m \geq 1$. Then for any $\bm{v} \in \mathbb{R}^2$,
\begin{align*}
    \gamma_m^{-1}\left\lbrace \tilde{\bm{h}}_g\left( \bm{\theta}+\gamma_m\bm{v}\right) - \tilde{\bm{h}}_g\left( \bm{\theta}\right) \right\rbrace = & n^{-1}\sum\limits_{i=1}^n \left(\bm{v}_1 y_{gi} + \bm{v}_2\right) \frac{\dot{\Psi}\left(\tilde{\bm{\theta}}_{i_1} y_{gi} +  \tilde{\bm{\theta}}_{i_2}\right)}{\Psi\left(\tilde{\bm{\theta}}_{i_1} y_{gi} +  \tilde{\bm{\theta}}_{i_2}\right)}\frac{r_{gi} }{\Psi\left(\tilde{\bm{\theta}}_{i_1} y_{gi} +  \tilde{\bm{\theta}}_{i_2}\right)}\bm{u}_{gi},
\end{align*}
where $\tilde{\bm{\theta}}_i = \bm{\theta} + b_i \bm{v}$ for $\abs{b_i }\in \left[ 0,\abs{\gamma_m}\right]$ for all $i \in [n]$. By assumption \ref{assumption:MetabMiss:MissMech}, $\abs{\frac{\dot{\Psi}\left(\tilde{\bm{\theta}}_{i_1} y_{gi} +  \tilde{\bm{\theta}}_{i_2}\right)}{\Psi\left(\tilde{\bm{\theta}}_{i_1} y_{gi} +  \tilde{\bm{\theta}}_{i_2}\right)}} \leq M$ for some constant $M > 0$. Since $\Psi(x)$ is an increasing function,
\begin{align*}
    0 \leq \frac{r_{gi} }{\Psi\left(\tilde{\bm{\theta}}_{i_1} y_{gi} +  \tilde{\bm{\theta}}_{i_2}\right)}I\left(y_{gi} \leq 0\right) \leq \frac{r_{gi} }{\Psi\left\lbrace\left(\bm{\theta}_1-\gamma\right) y_{gi} +  \left(\bm{\theta}_2-\gamma\right)\right\rbrace}I\left(y_{gi} \leq 0\right), \quad i \in [n]
\end{align*}
for some small $\gamma > 0$. An application of the dominated convergence theorem proves $\bm{M}_g\left(\bm{\theta}\right)$ exists for all $\bm{\theta} \in B\left(\eta,\bm{\theta}_g\right)$ for some $\eta > 0$. The continuity of $\bm{M}_g\left(\bm{\theta}\right)$ then follows directly from \eqref{equation:MetabMiss:SupGamma}.\par
\indent To prove \ref{item:MetabMiss:GMUWLLN},
\begin{align*}
    \norm{\bar{\bm{h}}_g\left(\bm{\theta}\right) - \E\left\lbrace \tilde{\bm{h}}_g\left(\bm{\theta}\right) \right\rbrace}_2 &\leq \norm{\bar{\bm{h}}_g\left(\bm{\theta}\right) - \tilde{\bm{h}}_g\left(\bm{\theta}\right)}_2 + \norm{\tilde{\bm{h}}_g\left(\bm{\theta}\right) - \E\left\lbrace \tilde{\bm{h}}_g\left(\bm{\theta}\right) \right\rbrace}_2\\
    \norm{\bar{\bm{h}}_g\left(\bm{\theta}\right) - \bm{M}_g\left(\bm{\theta}\right)}_2 &\leq \norm{\bm{\Gamma}_g\left(\bm{\theta}\right) - \tilde{\bm{\Gamma}}_g\left(\bm{\theta}\right)}_2 + \norm{\tilde{\bm{\Gamma}}_g\left(\bm{\theta}\right) -  \bm{M}_g\left(\bm{\theta}\right) }_2.
\end{align*}
First, $\mathop{\sup}\limits_{\bm{\theta}\in B\left(\eta,\bm{\theta}_g\right)}\norm{\bar{\bm{h}}_g\left(\bm{\theta}\right) - \tilde{\bm{h}}_g\left(\bm{\theta}\right)}_2, \mathop{\sup}\limits_{\bm{\theta}\in B\left(\eta,\bm{\theta}_g\right)}\norm{\bm{\Gamma}_g\left(\bm{\theta}\right) - \tilde{\bm{\Gamma}}_g\left(\bm{\theta}\right)}_2 = o_P(1)$ as $n,p\to \infty$ for $\eta > 0$ small enough by Lemma \ref{lemma:MetabMiss:GUest}. Next, it is easy to use Assumptions \ref{assumption:Basic} and \ref{assumption:MetabMiss:MissMech} to show that for all $\bm{\theta}\in B\left(\eta,\bm{\theta}_g\right)$,
\begin{align*}
    \V\left\lbrace \tilde{\bm{h}}_g\left(\bm{\theta}\right) \right\rbrace, \, \V\left[ \vecM\left\lbrace \tilde{\bm{\Gamma}}_g\left(\bm{\theta}\right) \right\rbrace \right] = o(1)
\end{align*}
as $n \to \infty$. And since $\tilde{\bm{h}}_g\left(\bm{\theta}\right) - \E\left\lbrace \tilde{\bm{h}}_g\left(\bm{\theta}\right) \right\rbrace$ and $\tilde{\bm{\Gamma}}_g\left(\bm{\theta}\right) -  \bm{M}_g\left(\bm{\theta}\right)$ are stochastically equicontinuous when restricted to a small enough compact parameter space that contains $\bm{\theta}_g$ by Lemma \ref{lemma:MetabMiss:PsiDot}, the result follows.
\end{proof}

\begin{theorem}
\label{theorem:MetabMiss:GMMAsy}
Fix a $g \in \Missing$ and suppose Assumptions \ref{assumption:Basic}, \ref{assumption:MetabMiss:MissMech} and \ref{assumption:MetabMiss:Gamma} hold. Define
\begin{align*}
    f_g\left(\bm{\theta}\right) = \bar{\bm{h}}_g\left( \bm{\theta}\right)^{\T} \left( n^{-1}\sum\limits_{i=1}^n \hat{\bm{u}}_{gi}\hat{\bm{u}}_{gi}^{\T} \right)^{-1}\bar{\bm{h}}_g\left( \bm{\theta}\right),
\end{align*}
let $\left\lbrace \hat{\bm{\theta}}_{g_n}^{(1)} \right\rbrace_{n \geq 1}$ be a sequence of minima of $f_g\left(\bm{\theta}\right)$ and define $\bm{W}_{g}^{-1} = \hat{\bm{\Sigma}}_g\left\lbrace \hat{\bm{\theta}}^{(1)}_{g_n} \right\rbrace$. Then there exists a sequence $\left\lbrace \hat{\bm{\theta}}_{g_n}^{\GMM} \right\rbrace_{n \geq 1}$ of minima of $\bar{\bm{h}}_g\left( \bm{\theta}\right)^{\T} \bm{W}_{g}\bar{\bm{h}}_g\left( \bm{\theta}\right)$ such that for
\begin{align*}
    \hat{\bm{V}}_{g_n} = \left[ \bm{\Gamma}_g\left\lbrace \hat{\bm{\theta}}_{g_n}^{\GMM} \right\rbrace^{\T} \bm{W}_g \bm{\Gamma}_g\left\lbrace \hat{\bm{\theta}}_{g_n}^{\GMM} \right\rbrace \right]^{-1} \text{ and } \bm{V}_{g_n} = \left( \bm{M}_g\left( \bm{\theta}_g \right)^{\T} \left[\E\left\lbrace \tilde{\bm{\Sigma}}_g\left(\bm{\theta}_g\right) \right\rbrace\right]^{-1} \bm{M}_g\left( \bm{\theta}_g \right) \right)^{-1},
\end{align*}
\begin{align*}
    n^{1/2}\hat{\bm{V}}_{g_n}^{-1/2}\left\lbrace \hat{\bm{\theta}}_{g_n}^{\GMM} - \bm{\theta}_g \right\rbrace \tdist N_2\left(\bm{0}_2,I_2\right), \quad \norm{\hat{\bm{V}}_{g_n}\bm{V}_{g_n}^{-1} - I_2}_2 = o_P(1)
\end{align*}
as $n \to \infty$.
\end{theorem}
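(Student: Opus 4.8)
The plan is to assemble the preceding lemmas into the classical two-step GMM argument of Newey--McFadden, handling the estimated instruments $\hat{\bm{u}}_{gi}$ through Lemmas \ref{lemma:MetabMiss:GUest} and \ref{lemma:MetabMiss:GUWLLN}. First I would establish that a consistent sequence of first-step minimizers exists. Because only local identification is assumed, I cannot invoke global-identification consistency; instead I would combine the uniform convergence $\sup_{\bm{\theta} \in B(\eta;\bm{\theta}_g)}\norm{\bar{\bm{h}}_g(\bm{\theta}) - \E\{\tilde{\bm{h}}_g(\bm{\theta})\}}_2 = o_P(1)$ from Lemma \ref{lemma:MetabMiss:GUWLLN} with the continuity of $\E\{\tilde{\bm{h}}_g(\bm{\theta})\}$ and the facts that $\E\{\tilde{\bm{h}}_g(\bm{\theta}_g)\}=\bm{0}$ and that the Jacobian $\bm{M}_g(\bm{\theta}_g)$ has full column rank (Assumption \ref{assumption:MetabMiss:Gamma}). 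These imply $\bm{\theta}_g$ is a strict local minimizer of the limiting objective, and a standard argmin-over-a-shrinking-ball argument then produces a sequence $\hat{\bm{\theta}}_{g_n}^{(1)}$ consistent for $\bm{\theta}_g$. Since the first-step weight $(n^{-1}\sum_i\hat{\bm{u}}_{gi}\hat{\bm{u}}_{gi}^{\T})^{-1}$ converges to a fixed positive-definite limit, its precise value is immaterial here.

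Next, given consistency of $\hat{\bm{\theta}}_{g_n}^{(1)}$, I would show $\bm{W}_g = \hat{\bm{\Sigma}}_g\{\hat{\bm{\theta}}_{g_n}^{(1)}\}^{-1} = \bm{\Sigma}^{-1}+o_P(1)$, where $\bm{\Sigma}:=\E\{\tilde{\bm{\Sigma}}_g(\bm{\theta}_g)\}$. This uses the uniform replacement of estimated instruments by true ones (Lemma \ref{lemma:MetabMiss:GUest}), the local uniform-continuity bound \eqref{equation:MetabMiss:SupSigma} of Lemma \ref{lemma:MetabMiss:PsiDot}, and a law of large numbers for $\tilde{\bm{\Sigma}}_g(\bm{\theta}_g)$; the bounds in Lemma \ref{lemma:MetabMiss:NormalKnown} guarantee $\bm{\Sigma}$ is bounded and bounded away from singular, so inversion is continuous. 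Repeating the consistency argument of the first paragraph with the now-converged weight $\bm{W}_g$ yields a second-step sequence $\hat{\bm{\theta}}_{g_n}^{\GMM}$ consistent for $\bm{\theta}_g$.

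For the distributional limit I would use the first-order expansion \eqref{equation:MetabMiss:GMMTaylor}, solving for
\begin{equation*}
n^{1/2}\left\lbrace\hat{\bm{\theta}}_{g_n}^{\GMM} - \bm{\theta}_g\right\rbrace = -\left[\bm{\Gamma}_g\{\hat{\bm{\theta}}_{g_n}^{\GMM}\}^{\T}\bm{W}_g\bm{\Gamma}_g(\tilde{\bm{\theta}}_g)\right]^{-1}\bm{\Gamma}_g\{\hat{\bm{\theta}}_{g_n}^{\GMM}\}^{\T}\bm{W}_g\, n^{1/2}\bar{\bm{h}}_g(\bm{\theta}_g),
\end{equation*}
where $\tilde{\bm{\theta}}_g$ lies on the segment between $\bm{\theta}_g$ and $\hat{\bm{\theta}}_{g_n}^{\GMM}$ and is therefore also consistent. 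By Lemma \ref{lemma:MetabMiss:GUWLLN}(ii) with continuity, every Jacobian factor equals $\bm{M}:=\bm{M}_g(\bm{\theta}_g)$ up to $o_P(1)$, while Theorem \ref{theorem:MetabMiss:Normal} together with $\hat{\bm{\Sigma}}_g(\bm{\theta}_g)=\bm{\Sigma}+o_P(1)$ gives $n^{1/2}\bar{\bm{h}}_g(\bm{\theta}_g) \tdist N_3(\bm{0}_3, \bm{\Sigma})$. Slutsky's theorem then yields $n^{1/2}\{\hat{\bm{\theta}}_{g_n}^{\GMM} - \bm{\theta}_g\} \tdist N_2(\bm{0}_2, (\bm{M}^{\T}\bm{\Sigma}^{-1}\bm{M})^{-1})$, where the optimal weight $\bm{W}_g=\bm{\Sigma}^{-1}+o_P(1)$ collapses the usual GMM sandwich to $\bm{V}_{g_n}=(\bm{M}^{\T}\bm{\Sigma}^{-1}\bm{M})^{-1}$. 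The same convergences give $\hat{\bm{V}}_{g_n}=[\bm{\Gamma}_g\{\hat{\bm{\theta}}_{g_n}^{\GMM}\}^{\T}\bm{W}_g\bm{\Gamma}_g\{\hat{\bm{\theta}}_{g_n}^{\GMM}\}]^{-1}=\bm{V}_{g_n}+o_P(1)$, and since $\bm{V}_{g_n}$ is bounded and nonsingular by Assumption \ref{assumption:MetabMiss:Gamma}, $\norm{\hat{\bm{V}}_{g_n}\bm{V}_{g_n}^{-1} - I_2}_2 = o_P(1)$; writing $n^{1/2}\hat{\bm{V}}_{g_n}^{-1/2}\{\hat{\bm{\theta}}_{g_n}^{\GMM} - \bm{\theta}_g\} = (\hat{\bm{V}}_{g_n}^{-1/2}\bm{V}_{g_n}^{1/2})\, n^{1/2}\bm{V}_{g_n}^{-1/2}\{\hat{\bm{\theta}}_{g_n}^{\GMM} - \bm{\theta}_g\}$ and applying Slutsky once more gives the standardized $N_2(\bm{0}_2, I_2)$ limit.

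The main obstacle I anticipate is the consistency step under only local identification: Assumption \ref{assumption:MetabMiss:Gamma} controls the Jacobian at $\bm{\theta}_g$ but says nothing globally, so the global minimizer of the GMM objective need not be consistent. One must carefully construct a consistent \emph{local} minimizer (reflected in the ``there exists a sequence'' phrasing of the statement) and verify that the mean-value point $\tilde{\bm{\theta}}_g$ inherits this consistency. The remaining bookkeeping---passing from $\hat{\bm{u}}_{gi}$ to $\bm{u}_{gi}$ and from $\hat{\bm{\Sigma}}_g$ to its expectation---is routine given Lemmas \ref{lemma:MetabMiss:GUest}--\ref{lemma:MetabMiss:GUWLLN}.
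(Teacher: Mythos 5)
Your proposal is correct and follows essentially the same route as the paper: local identification from Assumption \ref{assumption:MetabMiss:Gamma} plus Lemma \ref{lemma:MetabMiss:GUWLLN} to produce a consistent sequence of first-step local minimizers, convergence of $\hat{\bm{\Sigma}}_g\{\hat{\bm{\theta}}^{(1)}_{g_n}\}$ to $\E\{\tilde{\bm{\Sigma}}_g(\bm{\theta}_g)\}$ via the triangle-inequality decomposition through Lemmas \ref{lemma:MetabMiss:PsiDot} and \ref{lemma:MetabMiss:GUest}, the same for $\bm{\Gamma}_g$ at the mean-value point, and then Theorem \ref{theorem:MetabMiss:Normal} with the Taylor expansion \eqref{equation:MetabMiss:GMMTaylor} and Slutsky. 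Your explicit flagging of the local-versus-global identification issue is precisely the point the paper handles with the ``there exists a minimizer'' phrasing, so no gap remains.
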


\begin{proof}
By Assumption \ref{assumption:MetabMiss:Gamma} and \ref{item:MetabMiss:GMexist} of Lemma \ref{lemma:MetabMiss:GUWLLN},
\begin{align*}
    \norm{\E\left\lbrace \tilde{\bm{h}}_g\left(\bm{\theta} \right) \right\rbrace}_2 \geq c \norm{\bm{\theta} - \bm{\theta}_g}_2
\end{align*}
for some constant $c > 0$ and $\bm{\theta} \in B\left(\eta; \bm{\theta}_g\right)$ for some $\eta > 0$. By \ref{item:MetabMiss:GMUWLLN} of Lemma \ref{lemma:MetabMiss:GUWLLN}, this implies there exists a minimizer $\hat{\bm{\theta}}_{g_n}^{(1)}$ of $f_g\left(\bm{\theta}\right)$ defined above such that $\norm{\hat{\bm{\theta}}_{g_n}^{(1)} - \bm{\theta}_g}_2 = o_P(1)$ as $n \to \infty$. Next,
\begin{align*}
    \norm{\hat{\bm{\Sigma}}_g\left\lbrace \hat{\bm{\theta}}^{(1)}_{g_n} \right\rbrace - \E\left\lbrace \tilde{\bm{\Sigma}}\left(\bm{\theta}_g\right) \right\rbrace}_2 \leq & \norm{ \tilde{\bm{\Sigma}}_g\left\lbrace \hat{\bm{\theta}}^{(1)}_{g_n} \right\rbrace - \tilde{\bm{\Sigma}}_g\left(\bm{\theta}_g\right) }_2 + \norm{ \hat{\bm{\Sigma}}_g\left\lbrace \hat{\bm{\theta}}^{(1)}_{g_n} \right\rbrace - \tilde{\bm{\Sigma}}_g\left\lbrace \hat{\bm{\theta}}^{(1)}_{g_n} \right\rbrace }_2\\
    &+\norm{\tilde{\bm{\Sigma}}_g\left(\bm{\theta}_g\right) - \E\left\lbrace \tilde{\bm{\Sigma}}\left(\bm{\theta}_g\right) \right\rbrace}_2.
\end{align*}
Each of the above three terms are $o_P(1)$ as $n \to \infty$ by  Lemma \ref{lemma:MetabMiss:PsiDot}, Lemma \ref{lemma:MetabMiss:GUest} and the proof of Lemma \ref{lemma:MetabMiss:NormalKnown}, respectively. Therefore, there exists a minimizer $\hat{\bm{\theta}}_{g_n}^{\GMM}$ of $\bar{\bm{h}}_g\left(\bm{\theta}\right)^{\T} \bm{W}_g \bar{\bm{h}}_g\left(\bm{\theta}\right)$ such that $\norm{ \hat{\bm{\theta}}_{g_n}^{\GMM} - \bm{\theta}_g }_2 = o_P(1)$ as $n \to \infty$. Additionally, for $\tilde{\bm{\theta}} \in b\bm{\theta}_g + (1-b)\hat{\bm{\theta}}^{\GMM}_{g_n}$ for any $b\in[0,1]$,
\begin{align*}
    \norm{\bm{\Gamma}_g\left( \tilde{\bm{\theta}} \right) - \bm{M}_g\left(\bm{\theta}_g\right)}_2 \leq & \norm{ \tilde{\bm{\Gamma}}_g\left( \tilde{\bm{\theta}} \right) - \tilde{\bm{\Gamma}}_g\left(\bm{\theta}_g\right) }_2 + \norm{ \bm{\Gamma}_g\left( \tilde{\bm{\theta}} \right) - \tilde{\bm{\Gamma}}_g\left( \tilde{\bm{\theta}} \right) }_2\\
    &+\norm{\tilde{\bm{\Gamma}}_g\left(\bm{\theta}_g\right) - \bm{M}_g\left(\bm{\theta}_g\right)}_2,
\end{align*}
where the above three terms are $o_P(1)$ as $n \to \infty$ by Lemmas \ref{lemma:MetabMiss:PsiDot}, \ref{lemma:MetabMiss:GUest} and \ref{lemma:MetabMiss:GUWLLN}, respectively. The result then follows by Theorem \ref{theorem:MetabMiss:Normal} and the Taylor expansion in \eqref{equation:MetabMiss:GMMTaylor}.
\end{proof}

\begin{corollary}
\label{corollary:MetabMiss:Jtest}
Fix a $g \in \Missing$ and suppose the assumptions of Theorem \ref{theorem:MetabMiss:GMMAsy} hold. Then for $\bm{W}_g$ and $\hat{\bm{\theta}}_{g_n}^{\GMM}$ defined in the statement of Theorem \ref{theorem:MetabMiss:GMMAsy},
\begin{align*}
    n\bar{\bm{h}}_g\left\lbrace\hat{\bm{\theta}}_{g_n}^{\GMM}\right\rbrace^{\T}\bm{W}_g \bar{\bm{h}}_g\left\lbrace\hat{\bm{\theta}}_{g_n}^{\GMM}\right\rbrace \tdist \chi^2_1
\end{align*}
as $n\to \infty$.
\end{corollary}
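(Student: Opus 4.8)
The plan is to recognize $J_g$ as a standard Sargan--Hansen over-identification statistic and to show that, after normalization, it converges to a quadratic form in a standard Gaussian vector projected onto a one-dimensional subspace. Since $\hat{\bm{u}}_{gi} \in \mathbb{R}^3$ supplies three moment conditions while $\bm{\theta}_g \in \mathbb{R}^2$ has two free parameters, the resulting projection has rank $3 - 2 = 1$, which yields the $\chi^2_1$ limit.

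First I would assemble the three convergence facts already available from the proof of Theorem \ref{theorem:MetabMiss:GMMAsy}. Writing $\bm{\Sigma} = \E\{\tilde{\bm{\Sigma}}_g(\bm{\theta}_g)\}$ and $\bm{M} = \bm{M}_g(\bm{\theta}_g)$, these are: (i) the central limit theorem $n^{1/2}\bar{\bm{h}}_g(\bm{\theta}_g) \tdist N_3(\bm{0}_3, \bm{\Sigma})$, which follows from Theorem \ref{theorem:MetabMiss:Normal} together with the consistency $\hat{\bm{\Sigma}}_g(\bm{\theta}_g) \to \bm{\Sigma}$; (ii) the weight-matrix convergence $\bm{W}_g \to \bm{\Sigma}^{-1}$ in probability, shown via $\norm{\hat{\bm{\Sigma}}_g\{\hat{\bm{\theta}}_{g_n}^{(1)}\} - \bm{\Sigma}}_2 = o_P(1)$ in the proof of Theorem \ref{theorem:MetabMiss:GMMAsy}; and (iii) the gradient convergence $\bm{\Gamma}_g\{\hat{\bm{\theta}}_{g_n}^{\GMM}\}, \bm{\Gamma}_g(\tilde{\bm{\theta}}) \to \bm{M}$ in probability, established there using Lemmas \ref{lemma:MetabMiss:PsiDot}--\ref{lemma:MetabMiss:GUWLLN}.

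Next I would linearize. Using the mean-value expansion \eqref{equation:MetabMiss:GMMTaylor} and the first-order condition $\bm{\Gamma}_g\{\hat{\bm{\theta}}_{g_n}^{\GMM}\}^{\T}\bm{W}_g\bar{\bm{h}}_g\{\hat{\bm{\theta}}_{g_n}^{\GMM}\} = \bm{0}$, I would solve for $\hat{\bm{\theta}}_{g_n}^{\GMM} - \bm{\theta}_g = -(\bm{M}^{\T}\bm{\Sigma}^{-1}\bm{M})^{-1}\bm{M}^{\T}\bm{\Sigma}^{-1}\bar{\bm{h}}_g(\bm{\theta}_g) + o_P(n^{-1/2})$ after substituting the limits from (ii)--(iii). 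Back-substituting into the expansion yields
\[
n^{1/2}\bar{\bm{h}}_g\{\hat{\bm{\theta}}_{g_n}^{\GMM}\} = \bm{P}\, n^{1/2}\bar{\bm{h}}_g(\bm{\theta}_g) + o_P(1), \quad \bm{P} = I_3 - \bm{M}(\bm{M}^{\T}\bm{\Sigma}^{-1}\bm{M})^{-1}\bm{M}^{\T}\bm{\Sigma}^{-1}.
\]
Setting $\bm{g} = n^{1/2}\bm{\Sigma}^{-1/2}\bar{\bm{h}}_g(\bm{\theta}_g) \tdist N_3(\bm{0}_3, I_3)$ and $\bm{A} = \bm{\Sigma}^{-1/2}\bm{M}$, the key algebraic identity is $\bm{P}\bm{\Sigma}^{1/2} = \bm{\Sigma}^{1/2}\bm{P}_A^{\perp}$, where $\bm{P}_A^{\perp} = I_3 - \bm{A}(\bm{A}^{\T}\bm{A})^{-1}\bm{A}^{\T}$ is the orthogonal projection onto the orthogonal complement of $\im(\bm{A})$. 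This gives $\bm{\Sigma}^{1/2}\bm{P}^{\T}\bm{\Sigma}^{-1}\bm{P}\bm{\Sigma}^{1/2} = \bm{P}_A^{\perp}$ (using symmetry and idempotency of $\bm{P}_A^{\perp}$), so that $J_g = \bm{g}^{\T}\bm{P}_A^{\perp}\bm{g} + o_P(1)$.

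Finally, by the continuous mapping theorem and Slutsky's theorem, $J_g \tdist \bm{z}^{\T}\bm{P}_A^{\perp}\bm{z}$ for $\bm{z} \sim N_3(\bm{0}_3, I_3)$; since $\bm{P}_A^{\perp}$ is a symmetric idempotent matrix of rank $3 - 2 = 1$, this quadratic form is distributed as $\chi^2_1$. The main obstacle is the bookkeeping in the linearization step: I must verify that replacing $\bm{W}_g$ and $\bm{\Gamma}_g(\cdot)$ by their probability limits $\bm{\Sigma}^{-1}$ and $\bm{M}$ introduces only $o_P(1)$ errors in $n^{1/2}\bar{\bm{h}}_g\{\hat{\bm{\theta}}_{g_n}^{\GMM}\}$, which relies on the uniform convergence of $\bm{\Gamma}_g$ on a shrinking neighborhood of $\bm{\theta}_g$ together with the $O_P(n^{-1/2})$ rate of $\hat{\bm{\theta}}_{g_n}^{\GMM} - \bm{\theta}_g$ from Theorem \ref{theorem:MetabMiss:GMMAsy}; everything else is routine linear algebra.
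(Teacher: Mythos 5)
Your proposal is correct and follows essentially the same route as the paper's proof: both linearize $\bar{\bm{h}}_g$ around $\bm{\theta}_g$ via \eqref{equation:MetabMiss:GMMTaylor}, use the first-order condition together with the probability limits of $\bm{W}_g$ and $\bm{\Gamma}_g$ to reduce $J_g$ to a quadratic form in $n^{1/2}\bm{\Sigma}^{-1/2}\bar{\bm{h}}_g(\bm{\theta}_g)$ with a rank-one idempotent matrix, and invoke Theorem \ref{theorem:MetabMiss:Normal}. The only cosmetic difference is that the paper kills the component along $\im\bigl(\bm{W}_g^{1/2}\bm{\Gamma}_g\bigr)$ directly via the projection decomposition rather than solving explicitly for $\hat{\bm{\theta}}_{g_n}^{\GMM}-\bm{\theta}_g$ and back-substituting, which is algebraically equivalent.
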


\begin{proof}
Let $\tilde{\bm{W}}_g = \E\left\lbrace \tilde{\bm{\Sigma}}_g\left( \bm{\theta}_g\right) \right\rbrace$. Then for
\begin{align*}
    \hat{\bm{A}}_g &= P_{W_n^{1/2} \Gamma_g\left\lbrace\hat{\theta}_{g_n}^{\GMM}\right\rbrace}^{\perp} = I_3 - \bm{W}_g^{1/2}\bm{\Gamma}_g\left\lbrace\hat{\bm{\theta}}_{g_n}^{\GMM}\right\rbrace \left[ \bm{\Gamma}_g\left\lbrace\hat{\bm{\theta}}_{g_n}^{\GMM}\right\rbrace^{\T} \bm{W}_g \bm{\Gamma}_g\left\lbrace\hat{\bm{\theta}}_{g_n}^{\GMM}\right\rbrace \right]^{-1}\bm{\Gamma}_g\left\lbrace\hat{\bm{\theta}}_{g_n}^{\GMM}\right\rbrace^{\T} \bm{W}_g^{1/2}\\
    \tilde{\bm{A}}_g &= P_{\tilde{W}_g^{1/2} M_g\left(\theta_g\right)}^{\perp} = I_3 - \tilde{\bm{W}}_g^{1/2}\bm{M}_g\left( \bm{\theta}_g\right) \left\lbrace \bm{M}_g\left( \bm{\theta}_g\right)^{\T} \tilde{\bm{W}}_g \bm{M}_g\left( \bm{\theta}_g\right) \right\rbrace^{-1}\bm{M}_g\left( \bm{\theta}_g\right)^{\T} \tilde{\bm{W}}_g^{1/2},
\end{align*}
$\norm{\hat{\bm{A}}_g - \tilde{\bm{A}}_g}_2 = o_P(1)$ by the proof of Theorem \ref{theorem:MetabMiss:GMMAsy}. Further, $\tilde{\bm{A}}_g$ is a non-random, rank 1 matrix for all $n$ large enough by Assumption \ref{assumption:MetabMiss:Gamma} and Lemma \ref{lemma:MetabMiss:GUWLLN}. We then get that
\begin{align*}
    n\bar{\bm{h}}_g\left\lbrace\hat{\bm{\theta}}_{g_n}^{\GMM}\right\rbrace^{\T} \bm{W}_g \bar{\bm{h}}_g\left\lbrace\hat{\bm{\theta}}_{g_n}^{\GMM}\right\rbrace =&  n\bar{\bm{h}}_g\left\lbrace\hat{\bm{\theta}}_{g_n}^{\GMM}\right\rbrace^{\T} \bm{W}_n^{1/2} P_{W_g^{1/2} \Gamma\left\lbrace\hat{\theta}_{g_n}^{\GMM}\right\rbrace} \bm{W}_g^{1/2}\bar{\bm{h}}_g\left\lbrace\hat{\bm{\theta}}_{g_n}^{\GMM}\right\rbrace\\
    &+n\bar{\bm{h}}_g\left\lbrace\hat{\bm{\theta}}_{g_n}^{\GMM}\right\rbrace^{\T} \bm{W}_g^{1/2} \hat{\bm{A}}_g \bm{W}_g^{1/2}\bar{\bm{h}}_g\left\lbrace\hat{\bm{\theta}}_{g_n}^{\GMM}\right\rbrace,
\end{align*}
where the first term is zero because $\hat{\bm{\theta}}_{g_n}^{\GMM}$ satisfies $\bm{\Gamma}\left\lbrace\hat{\bm{\theta}}_{g_n}^{\GMM}\right\rbrace^{\T}\bm{W}_g \bar{\bm{h}}_g\left\lbrace\hat{\bm{\theta}}_{g_n}^{\GMM}\right\rbrace = \bm{0}_2$. For the second term,
\begin{align*}
    n^{1/2}\hat{\bm{A}}_g \bm{W}_g^{1/2}\bar{\bm{h}}_g\left\lbrace\hat{\bm{\theta}}_{g_n}^{\GMM}\right\rbrace = n^{1/2}\hat{\bm{A}}_g \bm{W}_g^{1/2}\bar{\bm{h}}_g\left( \bm{\theta}_g\right) + n^{1/2}\hat{\bm{A}}_g \bm{W}_g^{1/2} \bm{\Gamma}\left(\tilde{\bm{\theta}}\right) \left\lbrace\hat{\bm{\theta}}_{g_n}^{\GMM} - \bm{\theta}_g\right\rbrace
\end{align*}
for some $\tilde{\bm{\theta}} = b\bm{\theta}_g + (1-b)\hat{\bm{\theta}}_{g_n}^{\GMM}$, $b\in [0,1]$. The result follows by Theorems \ref{theorem:MetabMiss:OLS_MNAR} and \ref{theorem:MetabMiss:GMMAsy} because
\begin{align*}
    n^{1/2}\bm{W}_g^{1/2}\bar{\bm{h}}_g\left( \bm{\theta}_g\right) \tdist N_3\left(\bm{0}_3,I_3\right),\, n^{1/2}\norm{\hat{\bm{\theta}}_{g_n}^{\GMM} - \bm{\theta}_g}_2 = O_P(1),\, \norm{\bm{\Gamma}_g\left(\tilde{\bm{\theta}}\right) - \bm{\Gamma}\left\lbrace\hat{\bm{\theta}}_{g_n}^{\GMM}\right\rbrace}_2 = o_P(1)
\end{align*}
as $n \to \infty$ and $\hat{\bm{A}}_g \bm{W}_g^{1/2}\bm{\Gamma}_g\left\lbrace\hat{\bm{\theta}}_{g_n}^{\GMM}\right\rbrace = \bm{0}$.
\end{proof}

\end{changemargin}

\end{document}